\newcommand{\outcome}{{\sf{outcome}}}
\newcommand{\Outcome}{{\sf{Outcome}}}
\newcommand{\Strat}{\Pi}
\newcommand{\MP}{\textnormal{\textsf{MP}}}
\def\sg{\mathrel[\joinrel\mathrel[}
\def\sd{\mathrel]\joinrel\mathrel]}
\newcommand{\sem}[1]{\sg \mathrel{#1} \sd}
\newcommand{\size}[1]{|#1|}
\newcommand{\rationals}{{\mathbb{Q}}}
\newcommand{\integers}{{\mathbb{Z}}}
\newcommand{\nat}{{\mathbb{N}}}
\newcommand{\Val}{{\sf Val}}
\newcommand{\Plays}{{\sf Plays}}
\newcommand{\Pref}{{\sf Pref}}
\newcommand{\ValOpt}{\nu}
\newcommand{\LTL}{\textnormal{\textsf{LTL}}\xspace}
\newcommand{\LTLMP}{\textnormal{\textsf{LTL$_\textsf{MP}$}}\xspace}
\newcommand{\LTLE}{\textnormal{\textsf{LTL$_\textsf{E}$}}\xspace}
\newcommand{\Lit}{{\sf Lit}}
\newcommand{\nextLTL}{{\sf{X}}}
\newcommand{\until}{{\sf{U}}}
\newcommand{\always}{\Box}
\newcommand{\eventually}{\Diamond}
\newcommand{\true}{\ensuremath{\textsf{true}}\xspace}
\newcommand{\false}{\ensuremath{\textsf{false}}\xspace}
\newcommand{\uca}{\text{ucb}}
\newcommand{\nba}{\text{nb}}
\newcommand{\parity}{\text{dp}}
\newcommand{\EL}{{\sf EL}}
\newcommand{\parityO}{{\sf Parity}}
\newcommand{\energyO}{{\sf PosEnergy}}
\newcommand{\safetyO}{{\sf Safety}}
\newcommand{\MPO}{{\sf MeanPayoff}}
\newcommand{\DP}{{\sf DP}\xspace}
\newcommand{\NBW}{{\sf NB}\xspace}
\newcommand{\UCW}{{\sf UCB}\xspace}
\newcommand{\UCWK}{{\sf U\textit{K}CB}\xspace}
\newcommand{\final}{\alpha}
\newcommand{\infG}{{\sf{Inf}}}
\newcommand{\infA}{{\sf{Inf}}}
\newcommand{\runs}{{\sf{Runs}}}
\newcommand{\visit}{{\sf{Visit}}}
\newcommand{\Win}{{\sf{Win}}}
\newcommand{\energy}{{\sf{e}}}
\newcommand{\ConstantC}{C}
\begin{document}

\title{Synthesis from LTL Specifications with \\Mean-Payoff Objectives}

\author{Aaron Bohy$^1$ \and V\'eronique Bruy\`ere$^1$ \and Emmanuel Filiot$^2$ \and Jean-Fran\c{c}ois Raskin$^{3}$\thanks{Author supported by ERC Starting Grant (279499: inVEST).}}	 
\institute{$^1$Universit\'e de Mons$\quad$ $^2$Universit\'e Paris-Est Cr\'eteil$\quad$ $^3$Universit\'e
  Libre de Bruxelles}

  \maketitle	

\begin{abstract}
    % When reasoning about reactive systems, such as embedded systems, it is crucial to
    % take some quantitative aspects into account. 
    The classical LTL synthesis problem is purely qualitative: the given LTL specification
    is realized or not by a reactive system. LTL is not expressive enough to formalize
    %requirements on the quality 
    the correctness of reactive systems with respect to some quantitative aspects.
    This paper extends the \emph{qualitative} LTL synthesis setting to a \emph{quantitative} setting. The alphabet
    of actions is extended with a weight function ranging over the rational numbers. The value of an infinite word is the mean-payoff
    of the weights of its letters. The synthesis problem then amounts to automatically construct (if possible) a reactive system  whose executions 
all satisfy a given LTL formula and have mean-payoff values greater than or equal to some
given threshold. The latter problem is called \LTLMP synthesis and the \LTLMP realizability problem asks to check whether such a system exists. 
    We first show that \LTLMP realizability is not more difficult than LTL realizability: it is 
2ExpTime-Complete. This is done by reduction to two-player mean-payoff parity games. 
While infinite memory strategies are required to realize \LTLMP specifications
in general, we show that $\epsilon$-optimality can be obtained with finite memory strategies, for any $\epsilon > 0$.
%We therefore restrict the problem to finite-memory strategies. 
%This restriction is motivated by the following result: if an \LTLMP specification is realizable, it is realizable, for all $\epsilon>0$, by a finite-memory strategy that is $\epsilon$-close
%from the optimal threshold. 
To obtain an efficient algorithm in practice, we define a Safraless procedure to decide whether
there exists a finite-memory strategy that realizes a given specification for some
given threshold. This procedure is based on a reduction to two-player energy safety games which are in turn reduced to safety games. 
Finally, we show that those safety games
can be solved efficiently by exploiting the structure of their state spaces and by using antichains as a symbolic data-structure. 
All our results extend to multi-dimensional weights.
We have implemented an antichain-based procedure and we report on some promising experimental results.
\end{abstract}

\section{Introduction}

Formal specifications of reactive systems are usually expressed using formalisms like the linear temporal logic (LTL), the branching time temporal logic (CTL), or automata formalisms like B\"uchi automata. Those formalisms allow the specifier to express {\em Boolean properties} (called {\em qualitative properties} in the sequel) in the sense that a reactive system either conforms to them, or violates them. Additionally to those qualitative formalisms, there is a clear need for another family of formalisms that are able to express {\em quantitative properties} of reactive systems. Abstractly, a quantitative property can be seen as a function that maps an execution of a reactive system to a numerical value. For example, in a client-server application, this numerical value could be the mean number of steps that separate the time at which a request has been emitted by a client and the time at which this request has been granted by the server along an execution. Quantitative properties are concerned with a large variety of aspects like quality of service, bandwidth, energy consumption,... But quantities are also useful to compare the merits of alternative solutions, e.g. we may prefer a solution in which the quality of service is high and the energy consumption is low. Currently, there is a large effort of the research community with the objective to lift the theory of formal verification and synthesis from the {\em qualitative world} to the richer {\em quantitative world}~\cite{DBLP:conf/models/Henzinger12} (see related works for more details). In this paper, we consider mean-payoff and energy objectives. The alphabet of actions is extended with a weight function ranging over the rational numbers. A mean-payoff objective is a set of infinite words such that the mean value of the weights of their letters is greater than or equal to a given rational threshold \cite{DBLP:journals/tcs/ZwickP96}, while an energy objective
is parameterized by a non-negative initial energy level $c_0$ and contains all the words whose finite prefixes have a sum of weights greater than or equal to $-c_0$ \cite{DBLP:conf/formats/BouyerFLMS08}.

In this paper, we participate to this research effort by providing theoretical complexity results, practical algorithmic solutions, and a tool for the automatic synthesis of reactive systems from {\em quantitative specifications} expressed in the linear time temporal logic LTL extended with (multi-dimensional) mean-payoff and (multi-dimensional) energy objectives. To illustrate our contributions, let us consider the following specification of a controller that should grant exclusive access to a resource to two clients.
\begin{example} \label{ex:LTL} 
A client requests access to the resource by setting to true its request signal ($r_1$ for client~$1$ and $r_2$ for client~$2$), and the server grants those requests by setting to true the respective grant signal $g_1$ or $g_2$. We want to synthetize a server that eventually grants any client request, and that only grants one request at a time. This can be formalized in \LTL where the signals in $I = \{r_1, r_2\}$ are controlled by the environment (the two clients), and the signals in $O = \{g_1,w_1, g_2,w_2 \}$ are controlled by the server:
\begin{equation*} 
	\begin{split}
		\phi_1 &= \square(r_1 \rightarrow \nextLTL(w_1 \until g_1))\\
		\phi_2 &= \square(r_2 \rightarrow \nextLTL(w_2 \until g_2))\\
		\phi_3 &= \square(\lnot g_1 \vee \lnot g_2)\\
		\phi &= \phi_1 \wedge \phi_2 \wedge \phi_3
	\end{split} 
\end{equation*}
Intuitively, $\phi_1$ (resp. $\phi_2$) specifies that any request of client~$1$ (resp. client~$2$) must be eventually granted, and in-between
 the waiting signal $w_1$ (resp. $w_2$) must be high. Formula $\phi_3$ stands for mutual exclusion.

The formula $\phi$ is realizable. One possible strategy for the server is to alternatively assert $w_2,g_1$ and $w_1,g_2$, i.e. alternatively grant client~$1$ and client~$2$. While this strategy is formally correct, as it realizes the formula $\phi$ against all possible behaviors of the clients, it may not be the one that we expect. Indeed, we may prefer a solution that does not make unsollicited grants for example. Or, we may prefer a solution that gives, in case of request by both clients, some priority to client~$2$'s request. In the later case, one elegant solution would be to associate a cost equal to $2$ when $w_2$ is true and a cost equal to $1$ when $w_1$ is true. This clearly will favor solutions that give priority to requests from client~$2$ over requests from client~$1$. We will develop several other examples in the paper and describe the solutions that we obtain automatically with our algorithms.   
\end{example}

\paragraph{{\bf Contributions}}
We now detail our contributions and give some hints about the proofs. In Section~\ref{sec:problem}, we define the realizability problems for \LTLMP (\LTL extended with mean-payoff objectives) and \LTLE (\LTL extended with energy objectives), and give some examples. In Section \ref{sec:complexity}, we show that, as for the \LTL realizability problem, both the \LTLMP and \LTLE realizability problems are 2ExpTime-Complete.
%(Theorems~\ref{thm:complexity} and~\ref{thm:complexityEnergy}). 
As the proof of those three results follow a similar
structure, let us briefly recall how the 2ExpTime upper bound of the classical
\LTL realizability problem is established in~\cite{PnuRos89b}. The formula is first turned into
an equivalent nondeterministic B\"uchi automaton, which is then transformed into a deterministic parity automaton using Safra's construction. 
The latter automaton can be seen as a two-player
parity game in which Player~$1$ wins if and only if the formula is realizable. For the \LTLMP realizability problem,
our construction follows the same structure, except that we go to a two-player parity game with an
additional mean-payoff objective, and for the \LTLE realizability problem, we need to consider a parity game with an additional energy objective. 
%(Proposition \ref{prop:reductionMPParity}).
%Mean-payoff parity games have been studied in \cite{CHJ05,BouyerMOU11,CRR12}:
%Theorem \ref{thm:MPparity} recalls some results about their complexity and the size of their winning strategies. 
By a careful analysis of the complexity of all the steps
involved in those two constructions, we build, on the basis of results in~\cite{CD10} and~\cite{CHJ05}, solutions that provide
the announced 2ExpTime upper bound. 

%We also study an extension of the \LTL with energy constraints. Similarly, we show that the \LTLE realizability problem is 2ExpTime-complete. The game
%construction is the same as for \LTLMP realizability, except that we now have
%an energy parity condition. Based on known results about energy parity games
%\cite{CD10}, %recalled in Theorem \ref{thm:EParity}, 
%we derive a 2ExpTime 
%upper bound for the \LTLE realizability problem. 

%As recalled in Theorem \ref{thm:MPparity}, i
It is known that winning mean-payoff parity games may require infinite memory strategies, but there exist $\epsilon$-optimal finite-memory strategies~\cite{CHJ05}. In contrast, for energy parity games, it is known that finite-memory optimal strategies exist~\cite{CD10}. In Section~\ref{sec:complexity}, we show that those results transfer to \LTLMP (resp. \LTLE) realizability problems thanks to the reduction of these problems to mean-payoff (resp. energy) parity games. 
%($\epsilon$-optimal strategies). 
%As shown by Example~\ref{ex:LTLMP}, 
%infinite memory strategies may be required as well for \LTLMP realizability. 
%From Theorem~\ref{thm:MPparity} and the reduction of the \LTLMP realizability problem to mean-payoff parity games, we derive the following important corollary: for all $\epsilon>0$, there exists finite-memory $\epsilon$-optimal strategies for \LTLMP (Corollary \ref{cor:LTLMPFiniteMem}). 
Furthermore, we show that under finite-memory strategies, \LTLMP realizability is 
in fact equivalent to \LTLE realizability: a specification is \MP-realizable under
finite-memory strategies if and only if it is ${\sf E}$-realizable, by simply shifting the weights of the signals by the threshold value. %(Theorem~\ref{thm:MP-realIFFE-real}). 
Because finite-memory strategies are more interesting in practice, we thus concentrate on the \LTLE realizability problem in the rest of the paper. 
%By doing so, we
%will also benefit from an elegant reduction of energy games to safety games~\cite{DBLP:journals/fmsd/BrimCDGR11}, that allows to
%transform a quantitative objective into a simple qualitative objective.
%So, because finite memory strategies are clearly more interesting from a practical point of view, and because the \LTLMP realizatiliby problem with finite memory strategies is equivalent to the \LTLE realizability problem, we devote the rest of the paper to that case.

Even if recent progresses have been made \cite{TsaiFVT10}, Safra's construction is intricate and notoriously difficult to implement efficiently~\cite{ATW06}.
We develop in Section~\ref{sec:algorithm}, following~\cite{DBLP:conf/focs/KupfermanV05}, a Safraless procedure for the \LTLE realizability problem, that is based on a reduction to a safety game, with the nice property to transform a quantitative objective into a simple qualitative objective.
% (Theorem~\ref{thm:reductionS}).
The main building blocks of this procedure are as follows. %e proof of Theorem \ref{thm:reductionS}.
\emph{(1)} Instead of transforming an \LTL formula
into a deterministic parity automaton, 
we prefer to use a universal co-B\"uchi automaton as proposed in~\cite{DBLP:conf/focs/KupfermanV05}. To deal with the energy objectives, we thus transform the formula into a universal co-B\"uchi energy automaton for some initial credit $c_0$, which requires that all runs on an input word $w$ visit finitely many accepting states and the energy level of $w$ is always positive starting from the initial energy level $c_0$. % (Proposition \ref{prop:AutEL-realizable}).
\emph{(2)}  By strenghtening the co-B\"uchi condition into a $K$-co-B\"uchi condition (as done in~\cite{ScheweF07a,FMSD11}), where
at most $K$ accepting states can be visited by each run, we then go to an
energy safety game. We show that for sufficiently large value $K$ and initial credit $c_0$, this reduction is complete.%
\emph{(3)} Any energy safety game is equivalent to a safety game, as shown in~\cite{DBLP:journals/fmsd/BrimCDGR11}.

Finally, in Section \ref{sec:Implementation}, we discuss some implementation issues. 
The proposed Safraless construction has two main advantages. Firstly, the search for winning
strategies for \LTLE realizability can be incremental on $K$ and $c_0$ 
(avoiding in practice to consider the large theoretical bounds $\mathbb{K}$ and $\mathbb{C}$ that ensure completeness). 
Secondly, the state space of the safety game can be partially ordered and solved by a backward fixpoint algorithm. Since the latter manipulates sets of states closed for this order, it can be made efficient and symbolic by working only on the antichain of their maximal elements. As described in Section~\ref{sec:multi-dim}, our results can be extended to the multi-dimensional case, i.e. tuples of weights.
All the algorithms have been implemented in our tool {\sf Acacia+}~\cite{DBLP:conf/cav/BohyBFJR12}, and promising experimental results are reported in Section \ref{sec:exp}. 

\paragraph{{\bf Related works}}
The \LTL synthesis problem has been first solved in~\cite{PnuRos89b}, Safraless approaches have been proposed in~\cite{JobstmannB06,DBLP:conf/focs/KupfermanV05,ScheweF07a,FMSD11}, and implemented in prototypes of tools~\cite{JobstmannB06,Cav09,Ehlers-CAV-2010,DBLP:conf/cav/BohyBFJR12}. All those works only treat plain qualitative \LTL, and not the quantitative extensions considered in this article.

Mean-payoff games~\cite{DBLP:journals/tcs/ZwickP96} and energy games~\cite{DBLP:conf/formats/BouyerFLMS08,DBLP:journals/fmsd/BrimCDGR11}, extensions with parity conditions~\cite{CHJ05,CD10,BouyerMOU11}, or multi-dimensions~\cite{DBLP:conf/fsttcs/ChatterjeeDHR10,CRR12} have recently received a large attention from the research community. The use of such game formalisms has been advocated in~\cite{DBLP:conf/cav/BloemCHJ09} for specifying quantitative properties of reactive systems. Several among the motivations developed in~\cite{DBLP:conf/cav/BloemCHJ09} are similar that our motivations for considering quantitative extensions of \LTL.
All these related works make the assumption that the game graph is given explicitly (and not implicitly using an \LTL formula), as in our case. 

In~\cite{DBLP:conf/lics/BokerCHK11}, Boker et al. introduce extensions of linear and branching time temporal logics with operators to express constraints on values accumulated along the paths of a weighted Kripke structure. One of their extensions is similar to \LTLMP. However the authors of~\cite{DBLP:conf/lics/BokerCHK11} only study the complexity of model-checking problems whereas we consider realizability and synthesis problems.
\section{Problem statement}\label{sec:problem}
%-----------------------------------

\subsection{Preliminaries}
%---------------------------------

\paragraph{Linear temporal logic --} 
%-----------------------------------------------------
The formulas of linear temporal logic (\LTL) are defined over
a finite set $P$ of atomic propositions. The syntax is given by the
grammar:
$$
\phi\ ::=\ p\ |\ \phi\vee\phi\ |\ \neg\phi\ |\ \nextLTL \phi\ |\ \phi\until\phi \qquad
p\in P
$$

\noindent The notations \true, \false, $\phi_1\wedge \phi_2$,  $\eventually
\phi$ and $\always \phi$ are defined as usual. \LTL formulas $\phi$ are interpreted on infinite words
$u = \sigma_0\sigma_1\sigma_2\dots \in (2^P)^\omega$ via a satisfaction
relation $u \models \phi$ inductively defined as: 

\begin{tabular}{ll}
$u \models p$ & if $p\in\sigma_0$ \\
$u \models \phi_1\vee\phi_2$ & if $u\models \phi_1$ or $u\models \phi_2$ \\
$u \models \neg \phi$ & if $u\not\models\phi$ \\
$u \models \nextLTL \phi$ & if $\sigma_1\sigma_2\dots\models \phi$ \\
$u \models \phi_1\until\phi_2$ & if  $\exists n\geq 0, \sigma_n\sigma_{n+1}\dots\models \phi_2$ and  $\forall i, 0\leq i< n,
\sigma_i\sigma_{i+1}\dots\models \phi_1$.
\end{tabular}

\noindent
Given an \LTL formula $\phi$, we denote by $\sem{\phi}$ the set of words $u$ such that $u \models \phi$.

\paragraph{\LTL Realizability and synthesis --}
%----------------------------------------------------------------
The realizability problem for \LTL is
best seen as a game between two players. Let  $\phi$ be an \LTL formula over the set $P = I \uplus O$ partitioned into $I$ the set of {\em input  signals} controlled by Player $I$ (the environment), and $O$ the set of {\em output signals}
 controlled by Player $O$ (the controller). With this partition of $P$, we associate
the three following alphabets: $\Sigma_P=2^P$, $\Sigma_O=2^O$,
and $\Sigma_I=2^I$. %We denote by $\varnothing$ the empty set. 

The realizability game is played in turns. Player~$O$ starts by giving
$o_0 {\in} \Sigma_O$, Player~$I$ responds by giving
$i_0 {\in} \Sigma_I$, then Player~$O$ gives $o_1 {\in} \Sigma_O$ and
Player~$I$ responds by $i_1 {\in} \Sigma_I$, and so on. This game lasts forever
and the outcome of the game is the infinite word $(o_0 \cup
i_0)(o_1\cup i_1)(o_2\cup i_2)\dots \in \Sigma_P^{\omega}$.

The players play according to \emph{strategies}. A strategy for Player~$O$ is
a  mapping $\lambda_O:(\Sigma_O\Sigma_I)^*\rightarrow
\Sigma_O$, while a strategy for Player~$I$ is a mapping
$\lambda_I : (\Sigma_O\Sigma_I)^*\Sigma_O\rightarrow \Sigma_I$. The
outcome of the strategies $\lambda_O$ and $\lambda_I$ is the word
$\Outcome(\lambda_O,\lambda_I) = (o_0\cup i_0)(o_1\cup i_1)\dots$
such that $o_0 =\lambda_O(\epsilon)$,
$i_0 = \lambda_I(o_0)$ and for all $k \geq 1$, $o_k = \lambda_O(o_0i_0\dots
o_{k-1}i_{k-1})$ and $i_k = \lambda_I(o_0i_0\dots o_{k-1}i_{k-1}o_k)$.
We denote by $\Outcome(\lambda_O)$ the set of all outcomes 
$\Outcome(\lambda_O,\lambda_I)$ with $\lambda_I$ any strategy of Player~$I$.
We let $\Strat_O$ (resp. $\Strat_I$) be the set of strategies for Player~$O$ (resp. Player~$I$).

Given an \LTL formula $\phi$ (the specification), the {\em \LTL realizability problem} is to
decide whether there exists a strategy $\lambda_O$ of Player~$O$ such that 
$\Outcome(\lambda_O,\lambda_I)\models \phi$
against all strategies $\lambda_I$ of Player~$I$. If such a \emph{winning} strategy exists, we say that
the specification $\phi$ is \textit{realizable}. The \textit{\LTL synthesis problem} asks
to produce a strategy $\lambda_O$ that realizes $\phi$, when it is realizable.

\paragraph{Moore machines --}
%------------------------------------
It is known that the \LTL realizability problem is 2ExpTime-Complete and that finite-memory strategies suffice in case of realizability \cite{PnuRos89b}. A strategy $\lambda_O$ of Player~$O$ is \emph{finite-memory} if there exists a right-congruence $\sim$ on $(\Sigma_O\Sigma_I)^*$ of finite index such that $\lambda_O(u) = \lambda_O(u')$ for all $u \sim u' $. It is equivalent to say that it can be described by a \emph{Moore machine} ${\cal M} = (M,m_0,\alpha_{U},\alpha_{N})$ defined as follows. The non-empty set $M$ is the finite memory\footnote{The memory $M$ is the set of equivalence classes for $\sim$.} of ${\cal M}$ and $m_0$ is its initial memory state. The memory update function $\alpha_U : M \times \Sigma_I \rightarrow M$ modifies the current memory state at each $i \in \Sigma_I$ emitted by Player~$I$, and the next-move function $\alpha_N : M \rightarrow \Sigma_O$ indicates which $o \in \Sigma_O$ is proposed by Player~$O$ given the current memory state. The function $\alpha_U$ is naturally extended to words $u \in \Sigma_I^*$. The \emph{language} of ${\cal M}$, denoted by $L({\cal M})$, is the set of words $u = (o_0\cup i_0)(o_1\cup i_1)\dots \in \Sigma_P^\omega$ such that $o_0 = \alpha_N(m_0)$ and for all $k\geq 1$, $o_k = \alpha_N(\alpha_U(m_0,i_0\dots i_{k-1}))$.  The size $\size{\cal M}$ of a Moore machine is defined as the size $|M|$ of its memory. Therefore, with these notations, an \LTL formula is realizable iff there exists a Moore machine such that $L({\cal M}) \subseteq \ \sem{\phi}$.

\begin{comment}
%-------------------
When an \LTL formula is realizable, we use a Moore machine to represent a finite-state strategy that realizes $\phi$. 
A \textit{Moore machine} $M$ is a tuple $(\Sigma_I,\Sigma_O,Q,q_0,\delta,g)$
where 
%\begin{itemize}
%\item  
$\Sigma_I$ is the input alphabet,
%\item 
$\Sigma_O$ is the output alphabet, 
%\item 
$Q$ is a finite set of states,
%\item 
$q_0 \in Q$ is the initial state,
%\item 
$\delta : Q\times \Sigma_I \rightarrow Q$ is a transition function,
%\item
and $g : Q \rightarrow \Sigma_O$ is an output function.
%\end{itemize}
The function $\delta$ is naturally extended to words $u \in \Sigma_I^*$. The \emph{language} of $M$, denoted by
$L(M)$, is the set of words $u = (o_0\cup i_0)(o_1\cup i_1)\dots \in \Sigma_P^\omega$ such that
for all $k\geq 0$, $o_k = g(\delta(q_0,i_0\dots i_{k-1}))$.  The size $\size{M}$ of a Moore machine 
is defined as $|Q|+|\delta|$. \fbox{attention seulement $|Q|$ dans Acacia+}
\end{comment}
%-----------------

\begin{theorem}[\cite{PnuRos89b}]
 \label{thm:LTLRealizability}
The \LTL realizability problem is 2ExpTime-Complete and any realizable \LTL formula is realizable by a finite-memory strategy.
\end{theorem}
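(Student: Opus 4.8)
The plan is to follow the classical automata-theoretic pipeline, reducing the \LTL realizability problem to a two-player parity game whose size is doubly exponential in $|\phi|$, and from which a finite-memory winning strategy can be extracted.

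For the \textbf{upper bound}, I would first translate $\phi$ into an equivalent nondeterministic B\"uchi automaton $\mathcal{A}_\phi$ over $\Sigma_P$ recognizing exactly $\sem{\phi}$; this costs a single exponential, yielding $2^{O(|\phi|)}$ states. Next I would determinize $\mathcal{A}_\phi$ via Safra's construction into a deterministic parity automaton $\mathcal{D}_\phi$ with $2^{2^{O(|\phi|)}}$ states and $2^{O(|\phi|)}$ priorities, still recognizing $\sem{\phi}$. I then turn $\mathcal{D}_\phi$ into a turn-based parity game $G$ by splitting each letter into its output and input parts: from a game vertex, Player~$O$ first picks $o \in \Sigma_O$, Player~$I$ responds with $i \in \Sigma_I$, and the game advances $\mathcal{D}_\phi$ along the letter $o \cup i$, with the parity condition of $G$ inherited from $\mathcal{D}_\phi$. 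The key correctness claim is that a play of $G$ is won by Player~$O$ exactly when the induced infinite word lies in $\sem{\phi}$, so that winning strategies of Player~$O$ in $G$ correspond precisely to strategies $\lambda_O \in \Strat_O$ satisfying $\Outcome(\lambda_O,\lambda_I) \models \phi$ for all $\lambda_I \in \Strat_I$. Since parity games are solvable in time polynomial in the number of states and exponential in the number of priorities, and $G$ has a doubly-exponential state space, realizability is decidable in 2ExpTime.

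For the \textbf{finite-memory} claim, I would invoke memoryless (positional) determinacy of parity games: if $\phi$ is realizable then Player~$O$ wins $G$, hence she wins with a positional strategy $\tau$ defined on the vertices she controls. As each vertex of $G$ carries the current state of $\mathcal{D}_\phi$, this positional strategy is realized by a Moore machine $\mathcal{M}$ whose memory $M$ is the state set of $\mathcal{D}_\phi$ (suitably encoding whose turn it is), where $\alpha_U$ updates the automaton state on the last input letter $o \cup i$ and $\alpha_N$ reads off the output dictated by $\tau$. One then verifies $L(\mathcal{M}) \subseteq \sem{\phi}$, establishing realizability by a finite-memory strategy. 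For the \textbf{lower bound}, I would reduce from the acceptance problem of alternating Turing machines running in exponential space, which characterizes 2ExpTime: given such a machine and an input, one builds an \LTL formula over suitably chosen signals, using an \LTL-definable binary counter of exponential length, so that realizability of $\phi$ encodes the existence of an accepting computation tree whose configurations respect the transition relation (asserted with $\always$ and $\nextLTL$ over the counter gadget).

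The main obstacle is Safra's determinization: it is the sole superexponential and technically delicate step, and the entire complexity analysis hinges on simultaneously bounding its state blow-up and the number of priorities it introduces. The lower-bound encoding is likewise intricate, as it must faithfully simulate an exponentially long tape using only finitely many signals together with an \LTL-definable counter.
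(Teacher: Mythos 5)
Your proposal is correct and follows exactly the route the paper relies on: this theorem is quoted from Pnueli--Rosner, and the paper's own sketch (in the introduction and mirrored in the proof of Proposition~\ref{prop:reductionMPParity}) is the same pipeline of \LTL to nondeterministic B\"uchi automaton, Safra determinization to a parity automaton, conversion to a turn-based parity game, positional determinacy for the finite-memory claim, and a reduction from exponential-space alternating Turing machines for the lower bound. No substantive differences to report.
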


\begin{comment}
\begin{example} \label{ex:LTL} 
Let us consider the following informal specification for a client-server system with two clients sharing a resource. A client requests access to the resource by setting to true its request signal ($r_1$ for client 1 and $r_2$ for client 2), and the server grants those requests by setting to true the respective grant signal $g_1$ or $g_2$. We want to synthetize a server that eventually grants any client request, and that only grants one request at a time. This can be formalized in \LTL as follows. Let $I = \{r_1, r_2\}$, $O = \{g_1, w_1, g_2, w_2\}$ and $\phi$ such that:
\begin{equation*} 
	\begin{split}
		\phi_1 &= \square(r_1 \rightarrow \nextLTL(w_1\until g_1))\\
		\phi_2 &= \square(r_2 \rightarrow \nextLTL(w_2\until g_2))\\
		\phi_3 &= \square(\lnot g_1 \vee \lnot g_2)\\
		\phi &= \phi_1 \wedge \phi_2 \wedge \phi_3
	\end{split} 
\end{equation*}
Intuitively, $\phi_1$ (resp. $\phi_2$) specifies that any request of client $1$ (resp. client $2$) must be eventually granted, and that waiting signals ($w_1$ and $w_2$) must be asserted when there is a pending request of the corresponding client. Formula $\phi_3$ stands for mutual exclusion.

% % The formula $\phi$ is realizable. A strategy for the server to realize is to alternatively assert $g_1w_2$ and $g_2w_1$, i.e. alternatively grant client 1 and client 2. This strategy can be described by a Moore machine of size $2$.
% Let $I = \{ q\}$, $O = \{ p\}$ and $\psi = p\until q$.
% The formula $\psi$ is not realizable. As $q$ is controlled by the
% environment, he can decide to leave it always false and the outcome
% does not satisfy $\phi$.  However $\eventually q \rightarrow (p \until q)$ is
% realizable.  The assumption $\eventually q$ states that $q$ will hold
% at some point, and so, one of the possible winning strategies for Player 1 is to always
% assert $p$.

\end{example}
\end{comment}

\subsection{Synthesis with mean-payoff objectives}
%-----------------------------------------------------------------

\paragraph{\LTLMP realizability and synthesis --} 
%--------------------------------------------------------
Consider a finite set $P$ partitioned as $I \uplus O$. Let $\Lit(P)$ be the set $\{p \mid p \in P\} \cup \{\neg p \mid p \in P\}$ of literals over $P$, and let $w : \Lit(P) \rightarrow \integers$ be a \emph{weight function} where positive numbers represent rewards\footnote{\label{fnrational}We use weights at several places of this paper. In some statements and proofs, we take the freedom to use \emph{rational} weights as it is equivalent up to rescaling. However we always assume that weights are integers encoded in binary for complexity results.}. This function is extended to $\Sigma_I$ (resp. $\Sigma_O$) as follows:  $w(i) = \Sigma_{p \in i} w(p) +  \Sigma_{p \in I \setminus \{i\}} w(\neg p)$ for $i \in \Sigma_I$ (resp. $w(o) = \Sigma_{p \in o} w(p) +  \Sigma_{p \in O \setminus \{o\}} w(\neg p)$ for $o \in \Sigma_O$). In this way, it can also be extended to $\Sigma_P$ as $w(o \cup i) = w(o) + w(i)$ for all $o \in \Sigma_O$ and $i \in \Sigma_I$.\footnote{\label{fndecomp}The decomposition of $w(o \cup i)$ as the sum $w(o) + w(i)$ emphasizes the partition of $P$ as $I \uplus O$ and will be useful in some proofs.} In the sequel, we denote by $\langle P, w \rangle$ the pair given by the finite set $P$ and the weight function $w$ over $\Lit(P)$; we also use the weighted alphabet $\langle \Sigma_P, w \rangle$.

Consider an \LTL formula $\phi$ over $\langle P, w \rangle$ and an outcome $u = (o_0\cup i_0)(o_1\cup i_1) \dots \in \Sigma_P^\omega$ produced by Players~$I$ and $O$. We associate a \emph{value} $\Val(u)$ with $u$ that captures the two objectives of Player~$O$ of both satisfying $\phi$ and achieving a mean-payoff objective. For each $n \geq 0$, let $u(n)$ be the prefix of $u$ of length $n$. We define the \emph{energy level} of $u(n)$ as $\EL(u(n)) = \sum_{k=0}^{n-1} w(o_k) + w(i_k)$. We then assign to $u$ a \emph{mean-payoff value} equal to $\MP(u)=\liminf_{n \rightarrow \infty} \frac{1}{n} \EL(u(n))$. Finally we define the value of $u$ as:
%\begin{eqnarray*}
%\Val(u) &=& \MP(u) \text{ if } u \models \phi\\
%&=& - \infty \text{ otherwise. } 
%\end{eqnarray*}

\begin{center}
$\Val(u) =$
$\begin{cases} \MP(u) \text{ if } u \models \phi\\  - \infty \text{ otherwise.} \end{cases}$
\end{center}

Given an \LTL formula $\phi$ over $\langle P, w \rangle$ and a threshold $\nu \in \rationals$, the \emph{\LTLMP realizability problem (resp. \LTLMP realizability problem under finite memory)} asks to decide whether there exists a strategy (resp. finite-memory strategy) $\lambda_O$ of Player~$O$ such that $\Val(\Outcome(\lambda_O,\lambda_I)) \geq \nu$ against all strategies $\lambda_I$ of Player~$I$, 
%\begin{enumerate}
%	\item $\Outcome(\lambda_O,\lambda_I)\models \phi$,
%	\item $\MP(\Outcome(\lambda_O,\lambda_I)) \geq \nu$.
%\end{enumerate}
in which we say that $\phi$ is \emph{\MP-realizable (resp. \MP-realizable under finite memory)} . The \emph{\LTLMP synthesis problem} is to produce such a winning strategy $\lambda_O$ for Player~$O$. Therefore the aim is to achieve two objectives: \emph{(i)} realizing $\phi$, \emph{(ii)}  having a long-run average reward greater than the given threshold.

\paragraph{Optimality --}
%-----------------------------
Given $\phi$ an \LTL formula over $\langle P, w \rangle$, the \emph{optimal value} (for Player~$O$)  is defined as 
$$\ValOpt_{\phi} = \sup_{\lambda_O \in \Strat_O} \inf_{\lambda_I \in \Strat_I} \Val(\Outcome(\lambda_O,\lambda_I)).$$
%\begin{eqnarray*}
%\ValOpt_O(\phi) &=& \sup_{\lambda_O \in \Strat_O} \inf_{\lambda_I \in \Strat_I} \Val_O(\Outcome(\lambda_O,\lambda_I))\\
%\ValOpt_I(\phi) &=& \sup_{\lambda_I \in \Strat_I} \inf_{\lambda_O \in \Strat_O} \Val_I(\Outcome(\lambda_O,\lambda_I))\
%\end{eqnarray*}
For a real-valued $\epsilon \geq 0$, a strategy $\lambda_O$ of Player~$O$ is \emph{$\epsilon$-optimal} if $\Val(\Outcome(\lambda_O,\lambda_I)) \geq \ValOpt_{\phi} - \epsilon$ against all strategies $\lambda_I$ of Player~$I$. It is \emph{optimal} if it is $\epsilon$-optimal with $\epsilon = 0$. Notice that $\ValOpt_{\phi}$  is equal to $-\infty$ if Player~$O$ cannot realize $\phi$. %When $\ValOpt_{\phi} \ne -\infty$, optimal strategies exist \cite{CHJ05}  and $\ValOpt_{\phi}$ is thus the largest threshold $\nu$ such that $\phi$ is \MP-realizable.  

\begin{example} \label{ex:LTLMP}
Let us come back to Example \ref{ex:LTL} of a client-server system with two clients sharing a resource. The specification have been formalized by an \LTL formula $\phi$ over the alphabet $P = I \uplus O$, with $I = \{r_1, r_2\}$, $O = \{g_1, w_1, g_2, w_2\}$.
Suppose that we want to add the following 
constraints: client~$2$'s requests take the priority over client 1's requests, but
client~$1$'s should still be eventually granted. Moreover, we would like to keep
minimal the delay between requests and grants. This latter requirement has more the flavor of an optimality criterion and is best modeled using a weight function and a mean-payoff objective. To this end, we impose penalties to the waiting signals $w_1,w_2$ controlled by 
Player $O$, with a larger penalty to $w_2$ than to $w_1$. % in order to model client 2's priority. 
We thus use the following weight function $w : \Lit(P) \rightarrow \integers$:
%\begin{minipage}{0.5\linewidth}
\begin{center}
$w(l) =$
$\begin{cases} -1 \text{ if } l = w_1\\ -2 \text{ if } l = w_2\\ 0 \text{ otherwise.} \end{cases}$
\end{center}
%\end{minipage}

One optimal strategy for the server is to behave as follows: it almost always grants the resource to client~2 immediately after $r_2$ is set to true by client~2, and with a decreasing frequency grants request $r_1$ emitted by client 1. Such a server ensures a mean-payoff value equal to $-1$ against the most demanding behavior of the clients (where they are constantly requesting the shared resource). Such a strategy requires the server to use an infinite memory as it has to grant client 1 with an infinitely decreasing frequency. Note that a server that would grant client 1 in such a way without the presence of requests by client 1 would still be optimal.

It is easy to see that no finite memory server can be optimal. Indeed, if we allow the server to count only up to a fixed positive integer $k$ $\in$ $\mathbb{N}$ then the best that this server can do is as follows: grant immediatly any request by client 2 if the last ungranted request of client 1 has been emitted less than $k$ steps in the past, otherwise grant the request of client 1. The mean-payoff value of this solution, in the worst-case (when the two clients always emit their respective request) is equal to $-(1 + \frac{1}{k})$. So, even if finite memory cannot be optimal in this example, it is the case that given any $\epsilon > 0$, we can devise a finite-memory strategy that is $\epsilon$-optimal. %We will show later that $\epsilon$-optimality can always be achieved with finite memory strategies, for any $\epsilon > 0$.
\end{example}

\paragraph{\LTLE realizability and synthesis --}
%----------------------------------------------------------

For the proofs of this paper, we also need to consider realizability and synthesis with \emph{energy objectives} (instead of mean-payoff objectives). With the same notations as before, the \emph{\LTLE realizability problem} is to decide whether $\phi$ is \emph{{\sf E}-realizable}, that is, whether there exists a  strategy $\lambda_O$ of Player~$O$ and an integer $c_0 \in \nat$ such that for all strategies $\lambda_I$ of Player~$I$, \emph{(i)} $u = \Outcome(\lambda_O,\lambda_I)\models \phi$, \emph{(ii)} $\forall n \geq 0, c_0 +  \EL(u(n)) \geq 0$. 
%Instead of requiring that $\MP(u) \geq \nu$ for some given theshold $\nu$ as a second objective, we thus ask for an \emph{initial credit} $c_0$  such that the energy level of each prefix $u(n)$ remains positive. When $\phi$ is \sf{E}-realizable, the \emph{\LTLE synthesis problem} is to produce such a winning strategy $\lambda_O$ for Player~$O$.
Instead of requiring that $\MP(u) \geq \nu$ for some given theshold $\nu$ as a second objective, we thus ask if there exists an \emph{initial credit} $c_0$ such that the energy level of each prefix $u(n)$ remains positive. When $\phi$ is {\sf E}-realizable, the \emph{\LTLE synthesis problem} is to produce such a winning strategy $\lambda_O$ for Player~$O$. Finally, we define the \emph{minimum initial credit} as the least value of initial credit for which $\phi$ is {\sf E}-realizable. A strategy $\lambda_O$ is \emph{optimal} if it is winning for the minimum initial credit.

\section{Computational complexity of the \LTLMP realizability problem}
\label{sec:complexity}

In this section, we solve the \LTLMP realizability problem, and we establish its complexity. Our solution relies on a reduction to a mean-payoff parity game. The same result also holds for the \LTLE realizability problem.

\begin{theorem} \label{thm:complexity}
The \LTLMP realizability problem is 2ExpTime-Complete.
\end{theorem}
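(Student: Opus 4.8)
The plan is to prove the two directions separately. For \textbf{2ExpTime-hardness}, I would simply embed plain \LTL realizability into \LTLMP realizability. Given an \LTL formula $\phi$ over $P = I \uplus O$, equip $\langle P, w\rangle$ with the constant weight function $w \equiv 0$ and fix the threshold $\nu = 0$. Then every outcome $u$ has $\MP(u) = \liminf_n \frac{1}{n}\cdot 0 = 0$, so $\Val(u) \geq 0$ holds exactly when $u \models \phi$; hence $\phi$ is \MP-realizable for threshold $0$ iff $\phi$ is \LTL-realizable. Since \LTL realizability is 2ExpTime-hard by Theorem~\ref{thm:LTLRealizability} and this reduction is polynomial, \LTLMP realizability is 2ExpTime-hard as well.

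For the \textbf{2ExpTime upper bound}, I would follow the structure recalled in the introduction for \LTL, inserting the mean-payoff dimension. First, turn $\phi$ into an equivalent nondeterministic B\"uchi automaton $\mathcal{A}_\phi$ of size $2^{O(|\phi|)}$ accepting $\sem{\phi}$, then determinize it by Safra's construction into a deterministic parity automaton $\mathcal{D}_\phi$ with $N = 2^{2^{O(|\phi|)}}$ states and $d = 2^{O(|\phi|)}$ priorities. Next, I would unfold the turn-based interaction between Players $O$ and $I$ over $\Sigma_O$ and $\Sigma_I$ into a two-player arena by taking the synchronous product of $\mathcal{D}_\phi$ with this move structure, and label each transition by the weight $w(o \cup i)$ of the letter just read, shifted by $-\nu$ (so the threshold becomes $0$) and rescaled to integers since $\nu \in \rationals$. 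This yields a two-player mean-payoff parity game $G$ in which Player $O$ plays to satisfy both the parity condition inherited from $\mathcal{D}_\phi$ and the mean-payoff condition $\MP \geq 0$, while Player $I$ is the adversary.

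The correctness argument is that plays of $G$ are in bijection with outcomes $u \in \Sigma_P^\omega$: a play satisfies the parity condition iff the corresponding run of $\mathcal{D}_\phi$ is accepting, i.e.\ iff $u \models \phi$, and simultaneously the mean-payoff of the shifted weights is $\geq 0$ iff $\MP(u) \geq \nu$. Hence a winning strategy for Player $O$ in $G$ corresponds exactly to a strategy $\lambda_O$ with $\Val(\Outcome(\lambda_O,\lambda_I)) \geq \nu$ against every $\lambda_I$, and conversely. I would state this correspondence for \emph{arbitrary} (possibly infinite-memory) strategies, since winning mean-payoff parity games may genuinely require infinite memory~\cite{CHJ05}, which is consistent with the unrestricted \LTLMP realizability problem treated here; one also has to take care to match the $\liminf$ convention of $\MP(u)$ with that of the game-solving result invoked.

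Finally, I would invoke the algorithm of~\cite{CHJ05} for solving mean-payoff parity games on $G$, plugging in $N = 2^{2^{O(|\phi|)}}$ states, $d = 2^{O(|\phi|)}$ priorities, and weights bounded in absolute value by $W = 2^{O(|w|+|\nu|)}$ (binary encoding). The hard part is the careful bookkeeping showing that the resulting running time, bounded by roughly $N^{O(d)}\cdot\mathrm{poly}(W)$ according to~\cite{CHJ05}, stays within $2^{2^{\mathrm{poly}(|\phi|+|w|+|\nu|)}}$: since $\log N = 2^{O(|\phi|)}$ and $d = 2^{O(|\phi|)}$, the exponent $d\log N$ is itself only singly exponential in $|\phi|$, so the whole quantity remains doubly exponential, yielding the announced 2ExpTime bound. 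This complexity accounting across the composed construction is the main technical obstacle; the automaton and game constructions themselves are standard.
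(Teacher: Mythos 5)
Your proposal is correct and follows essentially the same route as the paper: reduce $\phi$ via a nondeterministic B\"uchi automaton and Safra's construction to a deterministic parity automaton, build the turn-based mean-payoff parity game whose plays are in bijection with outcomes, invoke the algorithm of~\cite{CHJ05} (the paper's Theorem~\ref{thm:MPparity}), and check that $|S|^{d+2}\cdot W$ with $|S|=2^{2^{O(n\log n)}}$ and $d=2^{O(n)}$ stays doubly exponential; hardness is inherited from plain \LTL realizability exactly as you describe (your zero-weight reduction just makes explicit what the paper leaves implicit). The only cosmetic difference is that you shift the weights by $-\nu$ to normalize the threshold to $0$, whereas the paper keeps the threshold $\nu$ in the game.
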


Before proving this result, we recall useful notions on parity automata and on game graphs.

\subsection{Deterministic parity automata}
%----------------------------------------------------
A \emph{deterministic parity automaton} over a finite alphabet $\Sigma$ is a tuple
${\cal A} = (\Sigma, Q, q_0,  \delta,p)$ where $Q$ is a finite set of states with $q_0$  the initial state, $\delta : Q \times \Sigma \rightarrow Q$ is a transition function that assigns a unique state\footnote{In this definition, a deterministic parity automaton is also complete} to each given state and symbol, and $p : Q \rightarrow \nat$ is a priority function that assigns a priority to each state. 
%We let $\size{{\cal A}} = |Q|$ be the size of ${\cal A}$. 

For infinite words $u = \sigma_0\sigma_1 \dots \in \Sigma^{\omega}$, there exists a unique run $\rho(u) = \rho_0\rho_1\dots\in Q^\omega$ such that
$\rho_0 = q_0$ and $\forall k \geq 0, \rho_{k+1} = \delta(\rho_k,\sigma_k)$. We denote by $\infA(\rho(u))$ the set of states that appear infinitely often in $\rho(u)$. The \emph{language} $L({\cal A})$ is the set of words $u \in \Sigma^{\omega}$ such that $\min \{p(q) \mid q \in \infA(\rho(u)) \}$ is even. We have the next theorem (see for instance \cite{DBLP:conf/focs/KupfermanV05}).

\begin{theorem} \label{thm:LTLParity}
Let $\phi$ be an \LTL formula over $P$.  One can construct a deterministic parity automaton ${\cal A}_{\phi}$ such that $L({\cal A}_{\phi})$ $=$ $\sem{\phi}$. If $\phi$ has size $n$, then ${\cal A}_\phi$ has $2^{2^{O(n \log n)}}$ states and $2^{O(n)}$ priorities. 
\end{theorem}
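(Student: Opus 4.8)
The plan is to follow the classical two-step route from an \LTL formula to a deterministic parity automaton, tracking the size of each intermediate object and relying on known constructions rather than redeveloping them.

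First I would translate $\phi$ into an equivalent nondeterministic B\"uchi automaton $\mathcal{B}_\phi$ over the alphabet $\Sigma = 2^P$, using the standard closure/tableau construction of Vardi and Wolper. The states of $\mathcal{B}_\phi$ are the maximal consistent subsets of the closure of $\phi$, the initial states are those containing $\phi$, the transitions enforce the local semantics of the temporal operators, and the B\"uchi condition guarantees that every asserted subformula of the form $\phi_1 \until \phi_2$ is eventually fulfilled. Since the closure of $\phi$ contains $O(n)$ formulas, this yields an automaton with $2^{O(n)}$ states recognizing exactly $\sem{\phi}$. I would cite this as a standard fact.

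Next I would determinize $\mathcal{B}_\phi$. Applying Safra's determinization (or Piterman's improved variant, which produces a parity acceptance condition directly) to a nondeterministic B\"uchi automaton with $m$ states yields a \emph{deterministic} parity automaton $\mathcal{A}_\phi$ with $2^{O(m \log m)}$ states and $O(m)$ priorities, recognizing the same $\omega$-language; hence $L(\mathcal{A}_\phi) = L(\mathcal{B}_\phi) = \sem{\phi}$. (If one instead invokes Safra's original Rabin automaton, a further polynomial index-appearance conversion to parity is applied, which leaves the stated orders unchanged.) It then remains to substitute $m = 2^{O(n)}$: the number of priorities is $O(m) = 2^{O(n)}$, and for the states, $\log m = O(n)$ gives $m \log m = 2^{O(n)}$ (the polynomial factor being absorbed into the exponent), so the state count is $2^{O(m\log m)} = 2^{2^{O(n)}}$, which is in particular bounded by $2^{2^{O(n\log n)}}$, as claimed.

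Since the construction is essentially an assembly of two established results, there is no genuinely new idea; the only point requiring care is the complexity bookkeeping across the two exponential blow-ups, and specifically verifying that determinization keeps the number of priorities linear in $m$ (hence singly exponential in $n$) while only the state space becomes doubly exponential. That determinization step --- converting a nondeterministic B\"uchi acceptance into a deterministic parity acceptance without an extra exponential cost in the parity index --- is the technically delicate ingredient, and I would lean on the cited determinization theorem for it rather than reproving it here.
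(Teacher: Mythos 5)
Your proof is correct and follows exactly the route the paper has in mind: the paper gives no proof of this theorem, citing it as a known result (and sketching the same two-step pipeline --- \LTL to nondeterministic B\"uchi, then Safra/Piterman determinization to parity --- in its introduction). Your complexity bookkeeping is sound; in fact your bound of $2^{2^{O(n)}}$ states is slightly tighter than the stated $2^{2^{O(n \log n)}}$, which it implies.
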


\subsection{Game graphs} \label{subsec:gamegraphs}
%---------------------------------

A {\em game graph}  $G = (S,s_0,E)$ consists of a finite set $S$ of states partitioned into $S_1$ the states of Player 1, and $S_2$ the states of Player 2 (that is $S = S_1 \uplus S_2$), an initial state $s_0$, and a set $E \subseteq S \times S$ of edges such that for all $s \in S$, there exists a state $s' \in S$ such that $(s,s') \in E$. A game on $G$ starts from the initial state $s_0$ and is played in rounds as follows. If the game is in a state belonging to $S_1$, then Player~1 chooses the successor state among the set of outgoing edges; otherwise Player~2 chooses the successor state. Such a game results in a {\em play} that is an infinite path $\rho=s_0s_1\dots s_n \dots$, whose prefix $s_0s_1\dots s_n$ of length\footnote{The length is counted as the number of edges.} $n$ of is denoted by $\rho(n)$. 
%A play $\rho=s_0s_1\dots s_n \dots$ is {\em initial} if $s_0=\iota$. 
We denote by $\Plays(G)$ the set of all plays in $G$ and by $\Pref(G)$ the set of all prefixes of plays in $G$.
%, and by ${\sf Plays}(G,s_0)$ the subset of plays starting from state $s_0$.
A \emph{turn-based} game is a game graph $G$ such that $E \subseteq (S_1 \times S_2) \cup (S_2 \times S_1)$, meaning that each game is played in rounds alternatively by Player~1 and Player~2.

\paragraph{Objectives --}
%-------------------------------

An {\em objective} for $G$ is a set $\Omega \subseteq S^{\omega}$. Let $p : S \rightarrow \nat$ be a \emph{priority function} and $w : E \rightarrow \integers$ be a \emph{weight function} where positive weights represent rewards.
%
%Mean-payoff and energy parity games are defined on game structures equipped with a {\em parity} and a {\em weight} function. Given a $(S_1,S_2,s_0,\Delta)$, a parity weighted game structure is a $6$-uple $(S_1,S_2,s_0,\Delta,p,w)$ where:
%  \begin{itemize}
%  	\item $p : S_1 \cup S_2 \rightarrow \{0,1,\dots,d\}$ is a {\em parity function};
%	\item $w : \Delta \rightarrow \integers$ is the {\em weight function}, we denote by $W$, the largest weight (in absolute value) used by the function $w$.
%  \end{itemize}  
% 
The {\em energy level} of a prefix $\gamma=s_0s_1\dots s_n$ of a play is $\EL_G(\gamma)=\sum_{i=0}^{n-1} w(s_i, s_{i+1})$, and the {\em mean-payoff value} of a play $\rho=s_0s_1\dots s_n \dots$ is $\MP_G(\rho)= \liminf_{n \rightarrow \infty} \frac{1}{n} \cdot \EL_G(\rho(n))$.\footnote{Notation $\EL$, $\MP$ and $\Outcome$ is here used with the index $G$ to avoid any confusion with the same notation introduced in the previous section.} Given a play $\rho$, we denote $\infG(\rho)$ the set of states $s \in S$ that appear infinitely often in $\rho$. The following objectives $\Omega$ are considered in the sequel:
 
 \begin{itemize}
  	\item {\em Safety objective}. Given a set $\alpha \subseteq S$, the safety objective is defined as $\safetyO_G(\alpha) = \Plays(G) \cap \alpha^{\omega}$.
	%$\safetyO_G(\alpha) = \{ \rho \in {\sf Plays}(G) \mid \forall i \geq 0 \cdot \rho(i) \in \alpha \}$.
  	\item {\em Parity objective}. The parity objective is defined as $\parityO_G(p) = \{ \rho \in \Plays(G) \mid \min \{ p(s) \mid s \in \infG(\rho) \} \mbox{~is~even}\}$.
%A {\em B\"uchi} objective is defined by a parity function with two priorities $p : S_1 \cup S_2 \rightarrow \{0,1\}$ and a {\em co-B\"uchi} objective by a parity function $p : S_1 \cup S_2 \rightarrow \{1,2\}$.
	\item {\em Energy objective}. Given an initial credit $c_0 \in \nat$, the energy objective is defined as $\energyO_G(c_0) = \{ \rho \in \Plays(G) \mid \forall n \geq 0: c_0 + \EL_G(\rho(n)) \geq 0 \}$.
	\item {\em Mean-payoff objective}. Given a threshold $\nu \in \rationals$, the mean-payoff objective is defined as $\MPO_{G}(\nu)$ $=$ $\{ \rho \in \Plays(G) \mid \MP_G(\rho) \geq \nu \}$.
	\item {\em Combined objective}. The  {\em energy safety} objective $ \energyO_{G}(c_0) \cap \safetyO_G(\alpha)$ (resp.  {\em energy parity} objective $\energyO_{G}(c_0) \cap \parityO_G(p)$, {\em mean-payoff parity} objective $\MPO_{G}(\nu) \cap \parityO_G(p)$) combines the requirements of energy and safety (resp. energy and parity, energy and mean-payoff) objectives.
\end{itemize}

When an objective $\Omega$ is imposed on a game $G$, we say that $G$ is an \emph{$\Omega$ game}. For instance, if $\Omega$ is an energy safety objective, we say that $G$ is an \emph{energy safety game}, aso.

\paragraph{Strategies --}
%-------------------------------

Given a game graph $G$, a \emph{strategy} for Player~1 is a function $\lambda_1 : S^*S_1 \rightarrow S$ such that $(s,\lambda_1(\gamma \cdot s)) \in E$ for all $\gamma \in S^*$ and $s \in S_1$.
A play $\rho=s_0 s_1 \dots s_n \dots$ starting from the initial state $s_0$ is compatible with $\lambda_1$ if for all $k \geq 0$ such that $s_k \in S_1$ we have $s_{k+1}=\lambda_1(\rho(k))$. 
Strategies and play compatibility are defined symmetrically for Player~2. The set of strategies of Player~1 (resp. Player~2) is denoted by $\Strat_1$ (resp. $\Strat_2$). We denote by $\Outcome_G(\lambda_1,\lambda_2)$ the play from $q_0$, called outcome, that is compatible with $\lambda_1$ and $\lambda_2$. The set of all outcomes $\Outcome_G(\lambda_1,\lambda_2)$, with $\lambda_2$ any strategy of Player~2, is denoted by  $\Outcome_G(\lambda_1)$.
%Given $s_0 \in S$ and $\lambda_1$, $\lambda_2$ a strategy for Player~$1$ (resp. Player~$2$),  the {\em outcome} $\Outcome_G(s,\lambda_1,\lambda_2)$ of $\lambda_1$ and $\lambda_2$ from $s_0$ is the unique play $\rho=s_0 s_1 \dots s_n \dots$ starting from $s_0$ and compatible with $\lambda_1$ and $\lambda_2$.
A strategy $\lambda_1$ for Player~1 is {\em winning} for an objective $\Omega$ if $\Outcome_G(\lambda_1) \subseteq \Omega$. We also say that $\lambda_1$ is winning in the $\Omega$ game $G$.  

A strategy $\lambda_1$ of Player~1 is \emph{finite-memory} if there exists a right-congruence $\sim$ on $\Pref(G)$ with finite index such that $\lambda_1(\gamma \cdot s_1) = \lambda_1(\gamma'  \cdot s_1)$ for all $\gamma \sim \gamma'$ and $s_1 \in S_1$. The \emph{size} of the memory is equal to the number of equivalence classes of $\sim$. We say that $\lambda_1$ is \emph{memoryless} if $\sim$ has only one equivalence class. In other words, $\lambda_1$ is a mapping $S_1 \rightarrow S$ that only depends on the current state.

\paragraph{Energy safety games --}
%----------------------------------------------

Let us consider a safety game $\langle G, \alpha \rangle$ with the safety objective $\safetyO_G(\alpha)$, or equivalently, with the objective to avoid $S \setminus \alpha$. The next classical fixpoint algorithm allows one to check whether Player~1 has a winning strategy (see \cite{GTW2002} for example). We define the fixpoint $\Win_1(\alpha)$ of the sequence $W_0 = \alpha$, $W_{k+1} = W_k \cap \{ \{s \in S_1 \mid \exists (s,s') \in E, s' \in W_k \} \cup \{s \in S_2 \mid \forall (s,s') \in E, s' \in W_k \}\}$ for all $k \geq 0$. It is well-known that Player~1 has a winning strategy $\lambda_1$ in the safety game $\langle G, \alpha \rangle$ iff $s_0 \in \Win_1(\alpha)$, and that the set $\Win_1(\alpha)$ can be computed in polynomial time. Moreover, the subgraph $G'$ of $G$ induced by $\Win_1(\alpha)$ is again a game graph (i.e. every state has an outgoing edge), and if $s_0 \in \Win_1(\alpha)$, then $\lambda_1$ can be chosen as a memoryless strategy $S_1 \rightarrow S$ such that $(s,\lambda_1(s))$ is an edge in $G'$ for all $s \in \Win_1(\alpha)$ (Player~1 forces to stay in $G'$). With this induced subgraph, we have the next reduction of energy safety games to energy games.

\begin{proposition} \label{prop:ESreductionE}
Let $\langle G, w, \alpha \rangle$ be an energy safety game. Let  $\langle G', w' \rangle$ be the energy game such that $G'$ is the subgraph of $G$ induced by $\Win_1(\alpha)$ and $w'$ is the restriction of $w$ to its edges. Then the winning strategies of Player~1 in $\langle G', w' \rangle$ are the winning strategies in $\langle G, w, \alpha \rangle$ that are restricted to the states of \ $\Win_1(\alpha)$. 
\qed\end{proposition}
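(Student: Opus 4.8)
The plan is to establish the stated set equality by proving two inclusions, after isolating one preliminary claim that does all the real work. Throughout I write $W = \Win_1(\alpha)$ and recall two structural facts about the safety fixpoint $W_0 = \alpha,\ W_{k+1} = W_k \cap \{\ldots\}$: first $W \subseteq \alpha$, and second $W$ is closed under the controllable predecessor operator, i.e.\ every $s \in W \cap S_1$ has at least one outgoing edge into $W$, while every $s \in W \cap S_2$ has \emph{all} of its outgoing edges into $W$. By construction $G'$ keeps exactly the $G$-edges between states of $W$, so by the second fact no Player~2 move is lost when passing from $G$ to $G'$, and Player~1 always retains a move staying inside $W$.

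The crucial preliminary step will be the claim that \emph{every play compatible with a winning strategy of the energy safety game $\langle G, w, \alpha\rangle$ stays forever in $W$}. I would prove this by a rank argument on the complement of $W$: for $s \notin W$ let $\mathsf{rank}(s)$ be the least $k$ with $s \notin W_k$. A rank-$0$ state lies in $S \setminus \alpha$; from a state $s$ of rank $k \geq 1$, if $s \in S_2$ then some edge leaves $W_{k-1}$ and Player~2 takes it, whereas if $s \in S_1$ then \emph{all} edges leave $W_{k-1}$, so any move of Player~1 lands on a state of rank $< k$. Hence from any state outside $W$, Player~2 has a strategy forcing the play into $S \setminus \alpha$ in finitely many steps. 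Now if some play compatible with a winning strategy $\lambda_1$ ever visited a state $s \notin W$, splicing this Player~2 reachability strategy onto the prefix would yield an outcome of $\lambda_1$ that leaves $\alpha$, contradicting the safety component of winning. In particular this forces $s_0 \in W$ whenever the energy safety game is winning at all, so the statement is non-vacuous exactly in that case.

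With the claim in hand both inclusions are short. For $(\supseteq)$, given a winning $\lambda_1$ for $\langle G, w, \alpha\rangle$ with witnessing initial credit $c_0$, the claim confines every $\lambda_1$-compatible prefix to $W$, so the restriction $\lambda_1|_W$ to prefixes lying in $W$ is a well-defined strategy of $G'$ whose outcomes coincide with those of $\lambda_1$ in $G$; since $w'$ agrees with $w$ on the edges of $G'$ the energy levels are unchanged, $\energyO_{G'}(c_0)$ holds along all these outcomes, and $\lambda_1|_W$ is winning in $\langle G', w'\rangle$. For $(\subseteq)$, given a winning $\lambda_1'$ for $\langle G', w'\rangle$ with initial credit $c_0$, I extend it to $G$ by playing arbitrarily outside $W$; by the closure property every outcome from $s_0 \in W$ stays in $W \subseteq \alpha$ (Player~1 keeps a move inside $W$, Player~2 cannot escape), hence satisfies safety, and being also an outcome of $\lambda_1'$ in $G'$ it keeps energy above $-c_0$. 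Thus the extension is winning for the energy safety game and restricts to $\lambda_1'$ on $W$.

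The main obstacle is the preliminary claim. The definition of $\Win_1(\alpha)$ guarantees only that Player~1 \emph{can} remain inside $W$; the real content is that a winning strategy has no alternative but to do so, i.e.\ that any excursion outside $W$ is already losing for safety, which is precisely what the rank/attractor argument rules out. Once that is settled, the matching of energy levels and of the witnessing initial credit between $\langle G, w, \alpha\rangle$ and $\langle G', w'\rangle$ is immediate, since the two games share the very same weighted edges over $W$.
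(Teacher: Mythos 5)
Your proof is correct, and it follows the route the paper intends: the paper states this proposition without an explicit proof, relying on the preceding discussion of the fixpoint $\Win_1(\alpha)$ (Player~1 can force staying in the induced subgraph, Player~2 cannot leave it). Your elaboration supplies the one nontrivial ingredient that discussion leaves implicit --- the rank/attractor argument showing that any excursion outside $\Win_1(\alpha)$ is already fatal for the safety component, so a winning strategy has no choice but to confine all compatible plays to $\Win_1(\alpha)$ --- and the rest (identical weighted edges, hence identical energy levels and initial credit) is immediate, exactly as you say.
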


%\begin{proof}
%The graph $G'$ is simply the subgraph of $G$ induced by $\Win_1(\alpha)$. Its weight function $w'$ is equal to $w$ restricted to the edges of $G'$. The winning strategies are identical for both graphs (this strategy has to be restricted to the states of $G'$ for the game $G'$).
%\qed\end{proof}

For an energy game $\langle G, w \rangle$ (resp. energy safety game $\langle G, w, \alpha \rangle$), the \emph{initial credit problem} asks whether there exist an initial credit $c_0 \in \nat$ and a winning strategy for Player~1 for the objective $\energyO_{G}(c_0)$ (resp. $\energyO_{G}(c_0) \cap \safetyO(\alpha)$). It is known that for energy games, this problem can be solved in $\sf{NP}  \cap \sf{coNP}$, and memoryless strategies suffice to witness
the existence of winning strategies for Player $1$ \cite{DBLP:conf/formats/BouyerFLMS08,DBLP:conf/emsoft/ChakrabartiAHS03}. %
%
%and the next property holds \cite{DBLP:conf/formats/BouyerFLMS08,DBLP:conf/emsoft/ChakrabartiAHS03}. %This property also holds for energy safety games as a corollary of Proposition~\ref{prop:ESreductionE}.
%
% % \begin{theorem} [\cite{DBLP:conf/formats/BouyerFLMS08,DBLP:conf/emsoft/ChakrabartiAHS03}] \label{thm:energygames}
% % % % Let $\langle G, w \rangle$ be an energy game or $\langle G, w, \alpha \rangle$ be an energy safety game, with $\size{S}$ states and $W$ being its largest absolute weight. If Player~1 has a winning strategy for the initial credit problem, then he can win with a memoryless strategy and with an initial credit of $\ConstantC = (\size{S} - 1) \cdot W$.
% \end{theorem}
%
Moreover, if we store in the states of the game the current energy level up to some bound
$C\geq 0$, one gets a safety game (whose safe states are those states with a positive energy level). For a sufficiently large bound $C$, this safety game is equivalent to 
the initial energy game \cite{DBLP:journals/fmsd/BrimCDGR11}.  Intuitively, the states of this safety game are pairs $(s,c)$ with $s$ a state of $G$ and $c$ an energy level in ${\cal C} = \{\bot,0,1, \ldots \ConstantC\}$ (with $\bot < 0$). When adding a positive (resp. negative) weight to an energy level, we bound the sum to $\ConstantC$ (resp. $\bot = -1$). The safety objective is to avoid states of the form $(s,\bot)$. Formally, given $\langle G, w \rangle$ an energy game with $G = (S,s_0,E)$, we define the safety game $\langle G_{\ConstantC}, \alpha \rangle$ with a graph $G_{\ConstantC} = (S',s'_0,E')$ and a safety objective $\alpha$ as follows:

\begin{itemize}
\item $S' = \{ (s,c) \mid s \in S, c \in {\cal C} \}$
\item $s'_0 = (s_0,C)$
\item $((s,c),(s',c')) \in E'$ \quad if $e=(s,s') \in E$ and $c \oplus w(e) = c'$
\item $\alpha = \{ (s,c) \mid s \in S, c \neq \bot  \}$
\end{itemize}
In this definition, we use the operator $\oplus : {\cal C}\times \integers \rightarrow {\cal C}$ such that $c \oplus k = \min(\ConstantC, c+k)$ if $\{ c \neq \bot$ and $c+k \geq 0 \}$, and $\bot$ otherwise.

By Proposition~\ref{prop:ESreductionE}, it follows that energy safety games can be reduced to safety games.

\begin{theorem} \label{thm:ESreductionS}
Let $\langle G, w, \alpha \rangle$ be an energy safety game. Then one can construct a safety game $\langle G', \alpha' \rangle$ such that Player~1 has a winning strategy in $\langle G, w, \alpha \rangle$ iff he has a winning strategy in $\langle G', \alpha' \rangle$.
\qed\end{theorem}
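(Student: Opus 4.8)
The plan is to prove Theorem~\ref{thm:ESreductionS} by explicitly carrying out the construction sketched in the paragraphs immediately preceding the statement, namely by composing the two reductions that are already established. First I would invoke Proposition~\ref{prop:ESreductionE} to pass from the energy safety game $\langle G, w, \alpha \rangle$ to the energy game $\langle G', w' \rangle$ on the subgraph induced by $\Win_1(\alpha)$; by that proposition, Player~1 has a winning strategy in $\langle G, w, \alpha \rangle$ (restricted to $\Win_1(\alpha)$) if and only if he has a winning strategy in the pure energy game $\langle G', w' \rangle$. This step reduces the combined energy-plus-safety objective to a pure energy objective, at the cost of first solving the safety game (which by the cited fixpoint algorithm is computable in polynomial time and leaves a well-defined game graph in which every state still has an outgoing edge).

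Second, I would apply the explicit product construction with the bounded energy levels ${\cal C} = \{\bot, 0, 1, \ldots, \ConstantC\}$ given just before the theorem, taking $\langle G', w' \rangle$ as the input energy game and building the safety game $\langle G'_{\ConstantC}, \alpha'' \rangle$ with states $(s,c)$, the saturating update operator $\oplus$, and the safety objective $\alpha'' = \{(s,c) \mid c \neq \bot\}$. Here I would appeal to the result of~\cite{DBLP:journals/fmsd/BrimCDGR11} that for a sufficiently large bound $\ConstantC$ this safety game is equivalent to the initial-credit energy game, i.e. Player~1 wins the energy game $\langle G', w' \rangle$ if and only if he wins $\langle G'_{\ConstantC}, \alpha'' \rangle$. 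The key quantitative point, which I would cite rather than reprove, is that the relevant energy values in a memoryless winning strategy are bounded by a value polynomial in the number of states and the largest absolute weight, so truncating at $\ConstantC$ loses nothing: a play stays safe in the truncated game exactly when the energy level never drops below zero in the original game.

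Chaining the two equivalences then yields the claim: setting $\langle G'', \alpha' \rangle := \langle G'_{\ConstantC}, \alpha'' \rangle$, Player~1 wins $\langle G, w, \alpha \rangle$ iff he wins $\langle G', w' \rangle$ iff he wins $\langle G'', \alpha' \rangle$. I would remark that since both $\Win_1(\alpha)$ and the energy bound $\ConstantC$ are effectively computable, the resulting safety game is genuinely constructible, which is what the statement demands.

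The main obstacle, and the only place requiring real care rather than bookkeeping, is justifying the choice of the bound $\ConstantC$ in the second step: one must argue that no essential play is lost by capping the energy at $\ConstantC$ and flooring at $\bot$. The subtlety is that the saturation $c \oplus k = \min(\ConstantC, c + k)$ could in principle hide a later need for energy beyond $\ConstantC$; the correctness relies on the fact that memoryless optimal strategies exist for energy games (from~\cite{DBLP:conf/formats/BouyerFLMS08,DBLP:conf/emsoft/ChakrabartiAHS03}) together with the polynomial bound on reachable energy levels along such strategies, so that capping at a large enough $\ConstantC$ never forces a spurious loss. Since both ingredients are already cited in the preceding discussion, in the write-up I would lean on~\cite{DBLP:journals/fmsd/BrimCDGR11} for this equivalence and keep the proof of the theorem itself short, essentially a one-line composition of Proposition~\ref{prop:ESreductionE} with the bounded-energy safety-game equivalence.
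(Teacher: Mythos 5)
Your proposal is correct and follows exactly the route the paper intends: the paper states the theorem with no explicit proof beyond the remark that it follows from Proposition~\ref{prop:ESreductionE} combined with the preceding $G_{\ConstantC}$ construction and the cited equivalence of energy games and safety games from~\cite{DBLP:journals/fmsd/BrimCDGR11}, which is precisely your two-step composition. Your additional discussion of why the bound $\ConstantC$ suffices is a reasonable elaboration of what the paper delegates to the citation.
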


%\begin{proof}

%a terminer ?

%By Proposition~\ref{prop:ESreductionE}, it is enough the energy safety game $\langle G, w, \alpha \rangle$ can be reduced to a energy game
%\qed\end{proof}

\paragraph{Energy parity games and mean-payoff parity games --}
%------------------------------------------------------------------------------------

Given an energy parity game $\langle G, w, p \rangle$, we can also formulate the \emph{initial credit problem} as done previously for energy games.

\begin{theorem}[\cite{CD10}] \label{thm:EParity}
The initial credit problem for a given energy parity game $\langle G, w, p \rangle$ can be solved in time $O(|E| \cdot d \cdot |S|^{d+3} \cdot W)$ where $|E|$ is the number of edges of $G$, $d$ is the number of priorities used by $p$ and $W$ is the largest weight (in absolute value) used by $w$. Moreover if Player~1 wins, then he has a finite-memory winning strategy  with a memory size bounded by $4 \cdot |S| \cdot d \cdot W$. 
\end{theorem}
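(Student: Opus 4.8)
The plan is to prove the two claims --- the pseudo-polynomial time bound and the finite-memory bound --- together, by reducing energy parity games to plain energy games (without the parity condition) and proceeding by induction on the number $d$ of priorities. The key external ingredient is that the initial credit problem for a plain energy game $\langle G, w\rangle$ is solvable in pseudo-polynomial time, say $O(|E|\cdot|S|\cdot W)$, by a value-iteration / small-progress-measure algorithm, and that \emph{memoryless} strategies suffice for Player~1 there, as already recalled for energy games in Section~\ref{subsec:gamegraphs}. A single priority gives the base case: if the unique priority is even the parity condition is vacuously satisfied and the game is a plain energy game, while if it is odd Player~1 can never win.

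For the inductive step I would follow a Zielonka-style attractor decomposition driven by the least priority $\ell$ present in $G$, but carrying the energy dimension through each recursive call. When $\ell$ is even, Player~1 tries to force the play either into the region from which she wins the game with priority $\ell$ removed (solved recursively, on $d-1$ priorities) or infinitely often through $\ell$-states while keeping the energy nonnegative; when $\ell$ is odd the roles are dual and Player~2 uses $\ell$ to spoil the parity condition. Each layer of the recursion computes an attractor and then solves one plain energy sub-game on the residual arena, so the total number of energy-game solves is bounded by a nested fixpoint of depth $d$ over subsets of $S$, i.e. $O(d\cdot|S|^{d+1})$ iterations; multiplying by the $O(|E|\cdot|S|\cdot W)$ cost of each solve yields the announced $O(|E|\cdot d\cdot|S|^{d+3}\cdot W)$ bound, the extra $|S|$ factors absorbing the attractor recomputations.

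The finite-memory bound is read off from the strategy produced by this recursion. The winning strategy alternates between finitely many memoryless energy-maintaining strategies --- one per recursion layer, hence $O(d)$ of them --- switching according to which priority is currently being targeted, and in addition tracks the current energy level, which never has to exceed $O(|S|\cdot W)$ (the maximal energy displacement along a simple path, beyond which extra credit is useless). Combining an $O(d)$ mode counter with an energy counter ranging over $O(|S|\cdot W)$ values gives memory of size $O(|S|\cdot d\cdot W)$, matching the stated $4\cdot|S|\cdot d\cdot W$ up to the explicit constant. I expect the main obstacle to lie exactly here: unlike plain parity games, energy parity games are \emph{not} memoryless-determined for Player~1, so a naive Zielonka recursion does not transfer verbatim. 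The delicate point is to prove that the ``maintain energy'' phases and the ``make progress towards an even priority'' phases can be interleaved with only bounded memory --- in particular that a bounded energy buffer suffices to survive the detours forced while waiting to revisit the good priority --- which is what ultimately certifies both the correctness of the recursion and the memory bound.
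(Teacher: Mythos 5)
First, a point of reference: the paper does not prove Theorem~\ref{thm:EParity} at all. It is imported verbatim from~\cite{CD10} and used as a black box (in Theorem~\ref{thm:complexityEnergy}, Proposition~\ref{prop:LTLEFiniteMem} and Theorem~\ref{thm:UCWK}), so there is no in-paper argument to compare yours against. The relevant comparison is with the proof in the cited source, whose overall shape --- induction on the number of priorities, an attractor decomposition driven by the least priority, a base case that is a plain energy game, and a memory bound of the form $d$ times an energy counter of size $O(|S|\cdot W)$ --- your sketch does reproduce.

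The genuine gap is the one you flag yourself and then leave open. Both the correctness of the Zielonka-style recursion and the $4\cdot|S|\cdot d\cdot W$ memory bound rest on the claim that Player~1 can interleave ``recharge energy'' phases with ``reach the good even priority'' phases using only a bounded energy buffer. In~\cite{CD10} this is not a detail but the central lemma: on his winning region Player~1 has a memoryless \emph{good-for-energy} strategy, i.e.\ one in which every cycle of the induced graph has nonnegative weight \emph{and} every cycle whose minimal priority is odd has strictly positive weight; proving its existence requires a separate reduction to a modified energy game whose weights are perturbed according to priorities. That strict positivity is precisely the recharging mechanism: either the play eventually stays on even-minimal-priority cycles (parity holds for free), or it keeps traversing odd cycles that strictly increase the energy, which pays for the next attractor excursion of cost at most $(|S|-1)\cdot W$ towards the least even priority. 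Your sketch instead solves \emph{plain} energy sub-games, whose memoryless strategies only guarantee nonnegative cycles; such a strategy can circulate forever on a zero-weight cycle of odd minimal priority, so it either never recharges (and the periodic excursions to the good priority eventually exhaust any finite credit) or never visits the good priority (and parity fails). As described, the recursion is therefore not merely incomplete in its memory accounting --- it is not correct. The complexity bookkeeping ($O(d\cdot|S|^{d+1})$ fixpoint iterations, the ``absorbed'' $|S|$ factors) is likewise asserted rather than derived, though it is at least consistent with the stated bound once the right lemma is in place.
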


Let us turn to mean-payoff parity games $\langle G, w, p \rangle$. With each play $\rho \in \Plays(G)$, we associate a \emph{value} $\Val_G(\rho)$ defined as follows (as done in the context of \LTLMP realizability):
$$
\Val_G(\rho) = \left\{\begin{array}{llllll}
\MP_G(\rho) & \text{ if  } \rho \in \parityO_G(p)\\
-\infty & \text{ otherwise. } \\
\end{array}\right.
$$

The \emph{optimal value} for Player~1 is defined as 
$$\ValOpt_{G} = \sup_{\lambda_1 \in \Strat_1} \inf_{\lambda_2 \in \Strat_2} \Val_G(\Outcome_G(\lambda_1,\lambda_2)).$$

For a real-valued $\epsilon \geq 0$, a strategy $\lambda_1$ for Player~$1$ is \emph{$\epsilon$-optimal} if $\Val_G(\Outcome_G(\lambda_1,\lambda_2)) \geq \ValOpt_G - \epsilon$ against all strategies $\lambda_2$ of Player~$2$. It is \emph{optimal} if it is $\epsilon$-optimal with $\epsilon = 0$. 
If Player~1 cannot achieve the parity objective, then $\ValOpt_{G} = -\infty$, otherwise optimal strategies exist~\cite{CHJ05} and $\ValOpt_{G}$ is the largest threshold $\nu$ for which Player~1 can hope to achieve $\MPO_{G}(\nu)$.

\begin{theorem}[\cite{CHJ05,BouyerMOU11,CRR12}] \label{thm:MPparity}
The optimal value of a mean-payoff parity game $\langle G, w, p \rangle$ can be computed in time $O(  |E| \cdot |S|^{d+2} \cdot W)$. When $\ValOpt_{G} \neq -\infty$, optimal strategies for Player~1 may require infinite memory; however for all $\epsilon > 0$ Player~1 has a finite-memory $\epsilon$-optimal strategy.
\end{theorem}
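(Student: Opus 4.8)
The plan is to derive all three claims from the tight relationship between mean-payoff parity games and energy parity games, for which we already have Theorem~\ref{thm:EParity}. First, to compute $\ValOpt_G$, I would reduce the question to a family of threshold queries. When $\ValOpt_G \neq -\infty$, it is the best mean-payoff value achievable among the plays that also satisfy the parity condition, and as in plain mean-payoff games this value is a rational $p/q$ with bounded denominator $q \leq |S|$; hence there are only finitely many candidate values. To decide whether $\ValOpt_G \geq \nu$ for a given rational $\nu$, I would shift every weight by $-\nu$ (rescaling to stay in $\integers$, see footnote~\ref{fnrational}), which turns the objective $\MPO_G(\nu) \cap \parityO_G(p)$ into $\MPO_G(0) \cap \parityO_G(p)$. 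The bridge, established in~\cite{CD10}, is that Player~1 wins the shifted mean-payoff parity game with threshold $0$ if and only if she wins the energy parity game $\langle G, w_\nu, p\rangle$ for some finite initial credit: the two winning regions coincide. Solving the latter with the algorithm of Theorem~\ref{thm:EParity} already establishes decidability; the sharper bound $O(|E|\cdot|S|^{d+2}\cdot W)$ then follows from the dedicated value-iteration algorithms of~\cite{BouyerMOU11,CRR12}.

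The core equivalence deserves a word, since I expect it to be the main obstacle. The easy direction is that an energy-parity-winning strategy keeps $\EL_G$ bounded from below by $-c_0$ along every play while visiting an even priority infinitely often; a running sum of weights bounded from below forces $\MP_G \geq 0$, so both parity and $\MP_G \geq 0$ hold. The converse — turning a strategy that only guarantees $\MP_G \geq 0$ into one that additionally keeps the energy level bounded — is delicate. It relies on decomposing the plays of an optimal strategy into cycles and on arguing that the negative-energy excursions can be uniformly bounded, using the finite-memory structure of energy-parity-optimal strategies supplied by Theorem~\ref{thm:EParity}.

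For the memory claims, infinite memory can genuinely be required. Consider the gadget with an odd-priority state carrying a weight-$1$ self-loop and an even-priority state carrying a weight-$0$ self-loop, the only way to satisfy the parity condition being to visit the even state infinitely often. Spending a fraction $f$ of the time on the weight-$1$ loop yields mean-payoff $f$, so $\ValOpt_G = 1$, a value attained only by letting the frequency of visits to the even state tend to $0$, which no finite-memory (hence ultimately periodic) strategy can achieve.

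Finally, for $\epsilon$-optimality under finite memory, I would take an optimal memoryless mean-payoff strategy on the relevant sub-game and interleave it with bounded excursions that re-satisfy the parity condition, scheduling these excursions with frequency $O(\epsilon)$ by means of a single counter bounded by $O(1/\epsilon)$. Since each excursion has length at most $|S|$ and the weights are bounded by $W$, the averaged loss in mean-payoff is $O(\epsilon)$ while parity remains satisfied, so the resulting finite-memory strategy is $\epsilon$-optimal, as claimed in~\cite{CHJ05}.
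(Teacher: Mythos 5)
First, a point of context: the paper does not prove Theorem~\ref{thm:MPparity} at all --- it is imported as a black box from \cite{CHJ05,BouyerMOU11,CRR12} --- so there is no in-paper argument to compare yours against, and your sketch must stand on its own. Judged that way, it has one genuine flaw at its core. You assert that, after shifting the weights by $-\nu$, ``the two winning regions coincide'' between the mean-payoff parity game with threshold $0$ and the energy parity game with unknown initial credit. That is false, and your own infinite-memory gadget two paragraphs later refutes it: there $\ValOpt_G = 1$ but no finite-memory strategy attains it. If Player~1 won the energy parity game with weights shifted by $-1$, he would win it with \emph{finite} memory (Theorem~\ref{thm:EParity}), and by Proposition~\ref{prop:Doyen} that same finite-memory strategy would secure mean-payoff $\ge 1$ in the original game --- a contradiction. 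So in that gadget the shifted energy parity game is losing while the mean-payoff parity game with threshold $1$ is won. The correct statement from \cite{CD10} is an equivalence for finite-memory strategies only (exactly Proposition~\ref{prop:Doyen}); for arbitrary strategies only the direction ``energy parity winning $\Rightarrow$ mean-payoff parity winning'' survives, and the ``delicate converse'' you describe is simply not true. Consequently your threshold-query algorithm, run at $\nu = \ValOpt_G$, answers NO whenever the optimum requires infinite memory, and your search over the candidates $p/q$ with $q \le |S|$ returns a strictly smaller candidate (in the gadget, $1/2$ instead of $1$).

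The gap is repairable, but the repair has to be stated. Either query only at values lying strictly between consecutive candidates (two distinct candidates differ by at least $1/|S|^2$, and for any $\nu$ strictly below $\ValOpt_G$ the finite-memory $\epsilon$-optimality claim supplies a finite-memory witness, so the energy-parity query is correct away from the candidate set), or use the actual reduction of \cite{CD10}, which rescales and perturbs the weights precisely to absorb this boundary effect. Note that the first repair makes the logical order of your argument circular as written: the correctness of the value computation then leans on the $\epsilon$-optimality statement, which you only address at the end and which must therefore be established first and independently. The remaining two claims are fine in outline: the infinite-memory gadget is the standard one from \cite{CHJ05}, and the construction ``optimal mean-payoff strategy interleaved with parity-restoring excursions of vanishing frequency governed by a bounded counter'' is indeed the idea behind finite-memory $\epsilon$-optimal strategies, although the actual proof in \cite{CHJ05} proceeds by induction on the number of priorities and needs care to keep the excursions inside Player~1's winning region against an adversarial Player~2.
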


\subsection{Reduction to a mean-payoff parity game}
%---------------------------------------------------------------------

\paragraph{Solution to the \LTLMP realizability problem --}
We can now proceed to the proof of Theorem~\ref{thm:complexity}. It is based on the following proposition.

\begin{proposition} \label{prop:reductionMPParity}
Let $\phi$ be an \LTL formula over $\langle P, w_P \rangle$. Then one can construct a mean-payoff parity game $\langle G_{\phi},w,p \rangle$ with $2^{2^{O(n \log n)}}$ states and $2^{O(n)}$ priorities such that the following are equivalent: for each threshold $\nu \in \rationals$
\begin{enumerate}
\item there exists a (finite-memory) strategy $\lambda_O$ of Player~$O$ such that $\Val(\Outcome(\lambda_O,\lambda_I)) \geq \nu$ against all strategies $\lambda_I$ of Player~$I$;
\item there exists a (finite-memory) strategy $\lambda_1$ of Player~$1$  such that $\Val_{G_\phi}(\Outcome_{G_\phi}(\lambda_1,\lambda_2)) \geq \nu$ against all strategies $\lambda_2$ of Player~$2$, in the game $\langle G_{\phi},w,p \rangle$. 
\end{enumerate}
Moreover, if $\lambda_1$ is a finite-memory strategy with size $m$, then $\lambda_O$ can be chosen as a finite-memory strategy with size $m \cdot |S_1|$ where $S_1$ is the set of states of Player~1 in $G_{\phi}$.\footnote{A converse of this corollary could also be stated but is of no utility in the next results.}

\end{proposition}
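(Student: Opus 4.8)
The plan is to lift the classical parity-game reduction for \LTL realizability to the mean-payoff setting, by splitting each transition of a deterministic parity automaton for $\phi$ into an output move (for Player~$O$) and an input move (for Player~$I$). First I would apply Theorem~\ref{thm:LTLParity} to obtain a deterministic and complete parity automaton $\mathcal A_\phi = (\Sigma_P, Q, q_0, \delta, p_{\mathcal A})$ with $L(\mathcal A_\phi) = \sem{\phi}$, having $2^{2^{O(n\log n)}}$ states and $2^{O(n)}$ priorities. From it I build the turn-based game graph $G_\phi = (S,s_0,E)$ in which Player~$1$ plays the role of Player~$O$ and owns $S_1 = Q$, while Player~$2$ plays the role of Player~$I$ and owns the intermediate states $S_2 = Q\times\Sigma_O$; the initial state is $s_0 = q_0 \in S_1$. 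For every $q \in Q$ and $o \in \Sigma_O$ I add an edge $(q,(q,o))$, and for every $(q,o) \in S_2$ and $i \in \Sigma_I$ an edge $((q,o),\delta(q,o\cup i))$. Completeness of $\mathcal A_\phi$ ensures every state has an outgoing edge, so $G_\phi$ is a valid game graph; its state count is $|Q|\,(1+|\Sigma_O|) = 2^{2^{O(n\log n)}}$.

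Next I would equip $G_\phi$ with priorities and weights so that both parts of $\Val$ are preserved. I set $p(q) = p_{\mathcal A}(q)$ on $S_1$ and give every Player~$2$ state a fixed priority $D$ strictly larger than all automaton priorities, adding only one priority (still $2^{O(n)}$). A play from $s_0$ has the form $q_0\,(q_0,o_0)\,q_1\,(q_1,o_1)\cdots$ with $q_{k+1} = \delta(q_k, o_k\cup i_k)$, so its $Q$-projection is exactly the run of $\mathcal A_\phi$ on $u = (o_0\cup i_0)(o_1\cup i_1)\cdots$. Since infinitely many rounds occur, some $Q$-state lies in $\infG(\rho)$, and as $D$ dominates, $\min\{p(s)\mid s\in\infG(\rho)\} = \min\{p_{\mathcal A}(q)\mid q\in\infA(\rho(u))\}$; hence $\rho\in\parityO_G(p)$ iff $u\models\phi$. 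For the weights I would use the decomposition $w_P(o\cup i)=w_P(o)+w_P(i)$, setting $w(q,(q,o)) = 2\,w_P(o)$ and $w((q,o),\delta(q,o\cup i)) = 2\,w_P(i)$, so each round contributes $2\,w_P(o_k\cup i_k)$ and $\EL_G(\rho(2n)) = 2\,\EL(u(n))$. The factor $2$ absorbs the edge-versus-letter normalization: since $\MP_G$ is a $\liminf$ over all edge-length prefixes (two edges per letter), a prefix ending in a Player~$2$ state differs from the adjacent even prefix by a single bounded weight and so does not change the $\liminf$; thus $\MP_G(\rho) = \MP(u)$, and rescaling weights by a positive constant is harmless while keeping integrality. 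Combined with the parity equivalence this yields $\Val_G(\rho) = \Val(u)$ for corresponding $\rho$ and $u$.

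Then I would set up the strategy correspondence. The map sending $o_0 i_0\cdots o_{n-1}i_{n-1} \in (\Sigma_O\Sigma_I)^*$ to the play prefix $q_0(q_0,o_0)q_1\cdots q_n$ ending in $S_1$ is a bijection under which strategies $\lambda_O$ of Player~$O$ correspond to strategies $\lambda_1$ of Player~$1$ (and $\lambda_I$ to $\lambda_2$), with $\Outcome(\lambda_O,\lambda_I)$ matching $\Outcome_{G_\phi}(\lambda_1,\lambda_2)$. As $\Val = \Val_{G_\phi}$ on matching outcomes, statements (1) and (2) are equivalent for every threshold $\nu$, and the bijection restricts to finite-memory strategies. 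For the memory bound, given a finite-memory $\lambda_1$ with memory set of size $m$, I would let Player~$O$ store the pair consisting of the current memory state of $\lambda_1$ and the current game state $q\in S_1$; Player~$O$ can maintain $q$ because $\delta$ is deterministic and it observes its own outputs and all inputs, and it recovers its move as the move of $\lambda_1$ at $q$ under the stored memory, updating the pair deterministically on each input (simulating both the Player~$1$ and Player~$2$ memory updates). Hence $\lambda_O$ is finite-memory of size at most $m\cdot|S_1|$; the product arises because the memory of $\lambda_1$ need not by itself record the game state.

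The hard part will be the quantitative matching rather than the qualitative one: ensuring the mean-payoff value is preserved \emph{exactly} — namely the edge-versus-letter normalization and the $\liminf$ over prefixes ending in intermediate Player~$2$ states — is the delicate step. By contrast, the parity correspondence and the strategy-and-memory transfer are routine once the split-transition game $G_\phi$ is in place.
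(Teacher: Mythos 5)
Your overall strategy --- deterministic parity automaton via Theorem~\ref{thm:LTLParity}, transition splitting into an output move and an input move, priority and weight transfer, and the $m\cdot|S_1|$ memory bound obtained by pairing the memory of $\lambda_1$ with the current Player-$1$ state --- is exactly the paper's. However, there is one genuine gap in your construction: you take $S_1 = Q$ and put the edge $((q,o),\delta(q,o\cup i))$ with weight $2\,w_P(i)$. When two distinct inputs $i_1\neq i_2$ satisfy $\delta(q,o\cup i_1)=\delta(q,o\cup i_2)$ but $w_P(i_1)\neq w_P(i_2)$ (which happens for generic weight functions), this single edge of $E\subseteq S\times S$ receives two different weights, so $w$ is not well-defined; for the same reason the map from $(\Sigma_O\Sigma_I)^*$ to play prefixes that your strategy correspondence rests on is surjective but \emph{not} injective, so it is not the bijection you claim, and a Player-$2$ move in the game no longer determines the input letter whose weight must be charged. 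This is precisely the pitfall the paper flags: its construction takes $S_1=\{(q,i)\mid q\in Q,\ i\in\Sigma_I\}$, recording the last input letter in the Player-$1$ state, so that plays and words are in genuine bijection and each edge carries an unambiguous weight. Your version can be repaired either by adopting that state space, or by keeping $S_1=Q$ and assigning $\min\{w_P(i)\mid \delta(q,o\cup i)=q'\}$ to the collapsed edge and then arguing both implications with inequalities rather than an exact value-preserving bijection --- but as written the argument does not go through.

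On the other hand, the part you singled out as delicate is in fact fine, and your treatment of it is if anything more careful than the paper's: doubling the weights so that $\EL_G(\rho(2n))=2\,\EL(u(n))$, and observing that prefixes ending in an intermediate Player-$2$ state differ by a single bounded weight and hence do not affect the $\liminf$, correctly yields $\MP_G(\rho)=\MP(u)$ (the paper instead splits $w_P(o\cup i)=w_P(o)+w_P(i)$ across the two edges and asserts $\MP(u)=\MP_{G_\phi}(\Theta(u))$, which is really only equality up to the per-round versus per-edge normalization). Your choice of a dominating priority $D$ on Player-$2$ states is also a harmless variant of the paper's $p'(q,o)=p'(q,i)=p(q)$. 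So the quantitative matching is not where the problem lies; the problem is the well-definedness of the game itself.
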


\begin{proof} 
Let $\phi$ be an \LTL formula over $P = I \uplus O$, and let $w_P : \Lit(P) \rightarrow \integers$ be a weight function. We first construct a deterministic parity automaton ${\cal A}_{\phi} = (\Sigma_P, Q, q_0, \delta,p)$ such that $L({\cal A}_{\phi}) = \sem{\phi}$ (see Theorem~\ref{thm:LTLParity}). This automaton has $2^{2^{O(n \log n)}}$ states and $2^{O(n)}$ priorities.

We then derive from ${\cal A}_{\phi}$ a turn-based mean-payoff parity game $\langle G_{\phi},w,p' \rangle$ with $G_{\phi} = (S,s_0,E)$ as follows. The initial state $s_0$ is equal to $(q_0,j_0)$ for some $j_0 \in I$.\footnote{Symbol $j_0$ can be chosen arbitrarily.} To get the turn-based aspect, the set $S$ is partionned as $S_1 \uplus S_2$ such that $S_1 = \{(q,i) \mid q \in Q, i \in \Sigma_I \}$ and $S_2 = \{(q,o) \mid q \in Q, o \in \Sigma_O \}$.\footnote{We will limit the set of states to the accessible ones.}  Let us describe the edges of $G_{\phi}$. For each $q \in Q$, $o \in \Sigma_O$ and $i \in \Sigma_I$, let $q' = \delta(q, o \cup i)$. Then $E$ contains the two edges $((q,j),(q,o))$ and $((q,o),(q',i))$ for all $j \in \Sigma_I$. We clearly have $E \subseteq (S_1 \times S_2) \cup (S_2 \times S_1)$. Moreover, since ${\cal A}_{\phi}$ is deterministic, we have the nice property that there exists a bijection $\Theta : \Sigma_P^* \rightarrow \Pref(G_{\phi}) \cap (S_1 S_2)^* S_1 $ defined as follows.  For each $u = (o_0 \cup i_0) (o_1 \cup i_1) \dots (o_n \cup i_n) \in \Sigma_P^*$, we consider in ${\cal A}_{\phi}$ the run $\rho(u) = q_0q_1\dots q_{n+1} \in Q^*$ such that $q_{k+1} = \delta(q_k,o_k \cup i_k)$ for all $k$. We then define $\Theta(u)$ as $(q_0,j_0)(q_0,o_0) (q_1,i_0) (q_1,o_1) (q_2,i_1) \cdots (q_{n},o_n)(q_{n+1},i_n)$. Clearly, $\Theta$ is a bijection and it can be extended to a bijection $\Theta : \Sigma_P^{\omega} \rightarrow \Plays(G_{\phi})$.

The priority function $p' : S \rightarrow \nat$ for $G_{\phi}$ is defined from the priority function $p$ of ${\cal A}_{\phi}$ by $p'(q,o) = p'(q,i) = p(q)$ for all $q \in Q, o \in O$ and $i \in I$. The weight function $w : E \rightarrow \integers$ for $G_{\phi}$ is defined from the weight function $w_P$ as follows. For all edges $e$ ending in a state $(q,o)$ with $o \in O$ (resp. $(q,i)$ with $i \in I$), we define $w(e) = w_P(o)$ (resp. $w(e) = w_P(i))$. Notice that $\Theta$ preserves the energy level, the meanpayoff value and the parity objective, since for each $u \in \Sigma_P^{\omega}$, we have \emph{(i)} $\EL(u(n)) = \EL_{G_\phi}(\Theta(u(n)))$ for all $n$\footnote{For this equality, it is useful to recall footnote~\ref{fndecomp}.}, \emph{(ii)} $\MP(u) = \MP_{G_{\phi}}(\Theta(u))$, and \emph{(iii)} $u \models \phi$ iff $\Theta(u)$ satisfies the objective $\parityO_{G_{\phi}}(p')$.

It is now easy to prove that the two statements of Proposition~\ref{prop:reductionMPParity} are equivalent. Suppose that \emph{1.} holds. Given the winning strategy $\lambda_O : (\Sigma_O\Sigma_I)^* \rightarrow \Sigma_O$, we are going to define a winning strategy $\lambda_1 : (S_1 S_2)^*S_1 \rightarrow S_2$ of Player~$1$ in $G_{\phi}$, with the idea that $\lambda_1$ mimics $\lambda_O$ thanks to the bijection $\Theta$. More precisely, 
%for $s_0$, we define $\lambda_1(s_0) = (s_0,o)$ such that $\lambda_O(\epsilon) = o$, and 
for any prefix $\gamma \in (S_1 S_2)^*S_1$ compatible with $\lambda_1$, we let $\lambda_1(\gamma) = (q,o)$ such that  $(q,i_n)$ is the last state of $\gamma$ and $o = \lambda_O(o_0 i_0 o_1 i_1 \dots o_n i_n)$ with $(o_0 \cup i_0) (o_1 \cup i_1) \dots (o_n \cup i_n) = \Theta^{-1}(\gamma)$. In this way, $\lambda_1$ is winning. Indeed $\Val_{G_\phi}(\Outcome_{G_\phi}(\lambda_1,\lambda_2)) \geq \nu$ for any strategy $\lambda_2$ of Player~2 because $\lambda_O$ is winning and by \emph{(ii)} and \emph{(iii)} above. Moreover $\lambda_O$ is finite-memory iff $\lambda_1$ is finite-memory.

Suppose now that \emph{2.} holds. Given the (finite-memory) winning strategy $\lambda_1$, we define a (finite-memory) winning strategy $\lambda_O$ with the same kind of arguments as done above. 
We just give the definition of $\lambda_O$ from $\lambda_1$. We let $\lambda_O(o_0 i_0 o_1 i_1 \dots o_n i_n) = o$ such that $\lambda_1(\gamma) = (q,o)$ with $\gamma = \Theta((o_0 \cup i_0) (o_1 \cup i_1) \dots (o_n \cup i_n))$.

Suppose now that $\lambda_1$ is finite-memory with size $m$. Let $\sim_1$ be a right-congruence on $\Pref(G_{\phi})$ with index $m$ such that $\lambda_1(\gamma\cdot s) = \lambda_1(\gamma' \cdot s)$ for all $\gamma \sim_1 \gamma'$ and $s \in S_1$. To show that $\lambda_O$ is finite-memory, we have to define a right-congruence $\sim_O$ on $(\Sigma_O\Sigma_I)^*$ with finite index such that $\lambda_O(u) = \lambda_O(u')$ for all $u \sim_O u'$. Let $u = o_0 i_0 \dots o_n i_n, u' = o'_0 i'_0  \dots o'_l i'_l \in (\Sigma_O\Sigma_I)^*$, let $\Theta((o_0 \cup i_0) \dots (o_n \cup i_n)) = \gamma \cdot s, \Theta((o'_0 \cup i'_0) \dots (o'_l \cup i'_l)) = \gamma' \cdot s'$. Looking at the definition of $\lambda_O$, we see that $\sim_O$ has to be defined such that $u \sim_O u'$ if $\gamma \sim_1 \gamma'$ and $s = s'$. Moreover $\sim_O$ has index $m \cdot |S_1|$.

This completes the proof.\footnote{\label{fnsmaller}Notice that we could have defined a smaller game graph $G_{\phi}$ with $S_1 = Q$ 
%(instead of $\{(q,i) \mid q \in Q, i \in \Sigma_I \}$) 
and $w((q,o),q') = \max\{w_P(i) \mid \delta(q,o\cup i) = q'\}$ for all $(q,o) \in S_2, q' \in S_1$. The current definition simplifies the proof.} 
\qed\end{proof}

\begin{proof}[of Theorem~\ref{thm:complexity}]
By Proposition~\ref{prop:reductionMPParity}, solving the \LTLMP realizability problem is equivalent to checking whether Player~1 has a winning strategy in the mean-payoff parity game $\langle G_{\phi},w,p \rangle$ for the given threshold $\nu$. By Theorem~\ref{thm:MPparity}, this check can be done in time $O( |E| \cdot |S|^{d+2}  \cdot W)$. Since $G_{\phi}$ has $2^{2^{O(n \log n)}}$ states and $2^{O(n)}$ priorities (see Proposition~\ref{prop:reductionMPParity}), the \LTLMP realizability problem is in $O(2^{2^{O(n \log n)}} W)$.

This proves the 2Exptime-easyness of \LTLMP realizability problem. The 2Exptime-hardness of this problem is a consequence of the 2Exptime-hardness of \LTL realizability problem (see Theorem~\ref{thm:LTLRealizability}).
\qed\end{proof}

Proposition~\ref{prop:reductionMPParity} and its proof lead to the next two interesting corollaries. The first corollary is immediate. The second one asserts that the optimal value can be approached with finite memory strategies.

\begin{corollary} \label{cor:optimal}
Let $\phi$ be an \LTL formula and $\langle G_{\phi}, w, p \rangle$ be the associated mean-payoff parity game graph. Then $\ValOpt_{\phi} = \ValOpt_{G_{\phi}}$.  Moreover, when $\ValOpt_{\phi} \neq -\infty$, one can construct an optimal strategy $\lambda_1$ for Player~1 from an optimal strategy $\lambda_O$ for Player~$O$, and conversely. 
\qed\end{corollary}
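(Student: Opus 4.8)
The plan is to leverage Proposition~\ref{prop:reductionMPParity} directly, since it already establishes a value-preserving correspondence between strategies of Player~$O$ in the \LTLMP game and strategies of Player~$1$ in the mean-payoff parity game $\langle G_{\phi},w,p \rangle$. The first statement, $\ValOpt_{\phi} = \ValOpt_{G_{\phi}}$, should follow almost mechanically from the definitions of the two optimal values as $\sup_{\lambda_O}\inf_{\lambda_I}$ and $\sup_{\lambda_1}\inf_{\lambda_2}$ respectively. Concretely, I would argue two inequalities. For $\ValOpt_{\phi} \leq \ValOpt_{G_{\phi}}$, I take an arbitrary $\lambda_O$ and use the construction in the proof of Proposition~\ref{prop:reductionMPParity} (statement 1 $\Rightarrow$ 2) to build a $\lambda_1$; the value-preservation properties \emph{(ii)} and \emph{(iii)} guarantee that $\inf_{\lambda_I}\Val(\Outcome(\lambda_O,\lambda_I)) = \inf_{\lambda_2}\Val_{G_\phi}(\Outcome_{G_\phi}(\lambda_1,\lambda_2))$, so every value achievable by Player~$O$ is matched by Player~$1$. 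For the reverse inequality I use the symmetric direction (2 $\Rightarrow$ 1) of the same proposition. Taking suprema on both sides yields equality.

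The key observation that makes the infima coincide is that the bijection $\Theta : \Sigma_P^{\omega} \rightarrow \Plays(G_{\phi})$ restricts to a bijection between the outcomes consistent with $\lambda_O$ and those consistent with the corresponding $\lambda_1$. In other words, each strategy $\lambda_I$ of Player~$I$ maps to a strategy $\lambda_2$ of Player~$2$ and vice versa, in such a way that $\Theta(\Outcome(\lambda_O,\lambda_I)) = \Outcome_{G_\phi}(\lambda_1,\lambda_2)$; combined with $\Val(u) = \Val_{G_\phi}(\Theta(u))$ (which is just \emph{(ii)} and \emph{(iii)} repackaged using the $\Val$ definitions), this gives a value-preserving bijection between the two sets of outcomes, and hence equal infima. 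I would state this as the first paragraph of the proof, then conclude $\ValOpt_{\phi} = \ValOpt_{G_{\phi}}$ by the $\sup$-$\inf$ argument.

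For the ``moreover'' part, when $\ValOpt_{\phi} \neq -\infty$, I would invoke Theorem~\ref{thm:MPparity}, which guarantees that optimal strategies for Player~$1$ exist in the mean-payoff parity game whenever $\ValOpt_{G_{\phi}} \neq -\infty$ (equivalently, whenever Player~$1$ can achieve the parity objective). By Corollary~\ref{cor:optimal}'s first equality this is exactly the case $\ValOpt_{\phi} \neq -\infty$. Given such an optimal $\lambda_1$, the construction of the proposition produces a corresponding $\lambda_O$ with $\inf_{\lambda_I}\Val(\Outcome(\lambda_O,\lambda_I)) = \inf_{\lambda_2}\Val_{G_\phi}(\Outcome_{G_\phi}(\lambda_1,\lambda_2)) = \ValOpt_{G_{\phi}} = \ValOpt_{\phi}$, so $\lambda_O$ is optimal; the converse direction is symmetric, using the 1 $\Rightarrow$ 2 construction.

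The only genuine subtlety, and the step I would take most care over, is verifying that the strategy correspondence of Proposition~\ref{prop:reductionMPParity} genuinely induces a \emph{bijection on outcomes} rather than merely transporting winning strategies for a fixed threshold. The proposition as stated is phrased per-threshold $\nu$, so I must make explicit that the same constructions are uniform in $\nu$ and that $\Theta$ pairs up adversary strategies correctly, giving the pointwise value equality $\inf_{\lambda_I}\Val(\Outcome(\lambda_O,\lambda_I)) = \inf_{\lambda_2}\Val_{G_\phi}(\Outcome_{G_\phi}(\lambda_1,\lambda_2))$ for each individual $\lambda_O$. Everything else is a routine unfolding of the $\sup$-$\inf$ definitions of optimal value, so I do not expect further obstacles.
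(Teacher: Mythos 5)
Your argument is correct and matches the paper's intent exactly: the paper declares this corollary ``immediate'' from Proposition~\ref{prop:reductionMPParity} and its proof, and what you have written is precisely the spelled-out version --- the value-preserving bijection $\Theta$ on outcomes gives the pointwise equality of the infima, the $\sup$-$\inf$ unfolding gives $\ValOpt_{\phi} = \ValOpt_{G_{\phi}}$, and Theorem~\ref{thm:MPparity} together with the strategy correspondence handles the ``moreover'' part. No gaps.
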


\begin{corollary} \label{cor:LTLMPFiniteMem}
Let $\phi$ be an \LTL formula. If $\phi$ is $\MP$-realizable, then for all $\epsilon > 0$, Player~$O$ has an $\epsilon$-optimal winning strategy that is finite-memory, that is 
$$\ValOpt_{\phi} = \sup_{\lambda_O \in \Strat_O \atop \lambda_O \textnormal{ finite-memory }} \inf_{\lambda_I \in \Strat_I} \Val(\Outcome(\lambda_O,\lambda_I)).$$\end{corollary}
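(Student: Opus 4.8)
The plan is to derive Corollary~\ref{cor:LTLMPFiniteMem} by combining the equivalence of optimal values established in Corollary~\ref{cor:optimal} with the finite-memory $\epsilon$-optimality result for mean-payoff parity games from Theorem~\ref{thm:MPparity}, using the memory-transfer clause of Proposition~\ref{prop:reductionMPParity}. The key observation is that everything reduces to the game $\langle G_\phi, w, p\rangle$: since $\phi$ is $\MP$-realizable we have $\ValOpt_\phi \neq -\infty$, and by Corollary~\ref{cor:optimal} this means $\ValOpt_{G_\phi} = \ValOpt_\phi \neq -\infty$, so Player~1 can achieve the parity objective in $G_\phi$.

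First I would fix an arbitrary $\epsilon > 0$. By Theorem~\ref{thm:MPparity}, since $\ValOpt_{G_\phi} \neq -\infty$, Player~1 has a \emph{finite-memory} $\epsilon$-optimal strategy $\lambda_1$ in $\langle G_\phi, w, p\rangle$, i.e. $\Val_{G_\phi}(\Outcome_{G_\phi}(\lambda_1,\lambda_2)) \geq \ValOpt_{G_\phi} - \epsilon$ for all $\lambda_2 \in \Strat_2$. Setting $\nu = \ValOpt_{G_\phi} - \epsilon$, this $\lambda_1$ is a finite-memory strategy of Player~1 witnessing statement~2 of Proposition~\ref{prop:reductionMPParity} for this threshold. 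The equivalence in that proposition, in its finite-memory form, then yields a \emph{finite-memory} strategy $\lambda_O$ of Player~$O$ with $\Val(\Outcome(\lambda_O,\lambda_I)) \geq \nu$ against all $\lambda_I \in \Strat_I$; moreover the memory-size clause guarantees its memory is bounded by $m \cdot |S_1|$ where $m$ is the memory size of $\lambda_1$. Using $\ValOpt_\phi = \ValOpt_{G_\phi}$ from Corollary~\ref{cor:optimal}, we get $\Val(\Outcome(\lambda_O,\lambda_I)) \geq \ValOpt_\phi - \epsilon$ against all $\lambda_I$, so $\lambda_O$ is a finite-memory $\epsilon$-optimal strategy for Player~$O$.

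It remains to deduce the displayed equality. The inequality $\sup$ over finite-memory strategies $\leq \ValOpt_\phi$ is immediate, since the finite-memory supremum is taken over a subset of all strategies and $\ValOpt_\phi$ is the unrestricted supremum. For the reverse inequality, the construction above shows that for every $\epsilon > 0$ there is a finite-memory $\lambda_O$ with $\inf_{\lambda_I} \Val(\Outcome(\lambda_O,\lambda_I)) \geq \ValOpt_\phi - \epsilon$, whence the finite-memory supremum is at least $\ValOpt_\phi - \epsilon$; letting $\epsilon \to 0$ gives the finite-memory supremum $\geq \ValOpt_\phi$. Combining the two inequalities yields the claimed equality.

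I do not expect a serious obstacle, since the heavy lifting is already packaged in the cited results; the only point requiring mild care is bookkeeping the quantifier order, namely that the finite-memory $\epsilon$-optimal strategy may depend on $\epsilon$ (as it must, since the example shows no single finite-memory strategy is optimal), so the statement is genuinely a supremum attained in the limit rather than a maximum. One should also note explicitly that $\MP$-realizability guarantees $\ValOpt_\phi \neq -\infty$, which is exactly the hypothesis Theorem~\ref{thm:MPparity} needs to produce finite-memory $\epsilon$-optimal strategies in $G_\phi$.
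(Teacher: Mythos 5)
Your proposal is correct and follows essentially the same route as the paper's proof: use Corollary~\ref{cor:optimal} to equate $\ValOpt_{\phi}$ with $\ValOpt_{G_{\phi}}$, invoke Theorem~\ref{thm:MPparity} to obtain a finite-memory $\epsilon$-optimal strategy for Player~1 in $G_{\phi}$, and transfer it back via Proposition~\ref{prop:reductionMPParity}. Your explicit two-inequality argument for the displayed supremum equality is a detail the paper leaves implicit, but it is the same proof.
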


\begin{proof}
Suppose that $\phi$ is $\MP$-realizable. By Corollary~\ref{cor:optimal}, $\ValOpt_{\phi} = \ValOpt_{G_{\phi}} \neq - \infty$. Therefore, by Proposition~\ref{prop:reductionMPParity} and Theorem~\ref{thm:MPparity}, for each $\epsilon > 0$, Player~$1$ has a finite-memory winning strategy $\lambda_1$ in the mean-payoff parity $G_{\phi} = (S, s_0,E)$ for the threshold $\ValOpt_{\phi} - \epsilon$. By Proposition~\ref{prop:reductionMPParity}, from $\lambda_1$, we can derive a finite-memory winning strategy $\lambda_O$ for the \LTLMP realizability of $\phi$ for this threshold, which is thus the required finite-memory $\epsilon$-optimal winning strategy.
\qed\end{proof}

\paragraph{Solution to the \LTLE realizability problem --}
The same kind of arguments show that the \LTLE realizability problem is 2Exptime-complete. Indeed, in Proposition~\ref{prop:reductionMPParity}, it is enough to use a reduction with the same game $\langle G_{\phi}, w, p \rangle$, however with energy parity objectives instead of mean-payoff parity objectives. The proof is almost identical by taking the same initial credit $c_0$ for both the \LTLE realizability of $\phi$ and the energy objective in $G_{\phi}$, and by using Theorem~\ref{thm:EParity} instead of Theorem~\ref{thm:MPparity}.

\begin{theorem} \label{thm:complexityEnergy}
The \LTLE realizability problem is 2ExpTime-Complete.
\qed\end{theorem}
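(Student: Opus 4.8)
The plan is to mirror the proof of Theorem~\ref{thm:complexity} almost verbatim, replacing the mean-payoff parity machinery by the energy parity machinery. First I would restate the analogue of Proposition~\ref{prop:reductionMPParity} for energy objectives: given an \LTL formula $\phi$ over $\langle P, w_P \rangle$, one constructs \emph{the very same} turn-based game graph $\langle G_{\phi}, w, p \rangle$ (with $2^{2^{O(n \log n)}}$ states and $2^{O(n)}$ priorities) obtained from the deterministic parity automaton ${\cal A}_\phi$, but now equips it with the energy parity objective $\energyO_{G_{\phi}}(c_0) \cap \parityO_{G_{\phi}}(p)$ rather than the mean-payoff parity objective. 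The key observation is that the bijection $\Theta : \Sigma_P^{\omega} \rightarrow \Plays(G_{\phi})$ constructed in the proof of Proposition~\ref{prop:reductionMPParity} already preserves \emph{(i)} the energy level $\EL(u(n)) = \EL_{G_\phi}(\Theta(u(n)))$ for every $n$, and \emph{(iii)} the parity/satisfaction correspondence $u \models \phi$ iff $\Theta(u) \in \parityO_{G_{\phi}}(p)$. Property \emph{(i)} is exactly what the \LTLE objective needs, since \LTLE-realizability fixes an initial credit $c_0$ and requires $c_0 + \EL(u(n)) \geq 0$ for all $n$, which via $\Theta$ is literally the energy objective $\energyO_{G_{\phi}}(c_0)$ for the \emph{same} credit $c_0$.

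Next I would carry out the two directions of the equivalence. For the direction from \LTLE-realizability of $\phi$ to an energy parity winning strategy in $G_{\phi}$: given a winning $\lambda_O$ together with its witnessing credit $c_0$, define $\lambda_1$ to mimic $\lambda_O$ through $\Theta$ exactly as in Proposition~\ref{prop:reductionMPParity}; then for every $\lambda_2$ the outcome $\Outcome_{G_{\phi}}(\lambda_1,\lambda_2) = \Theta(\Outcome(\lambda_O,\lambda_I))$ satisfies the parity condition by \emph{(iii)} and the energy condition for credit $c_0$ by \emph{(i)}, so $\lambda_1$ is winning in the energy parity game. The converse direction is symmetric: from a winning $\lambda_1$ for credit $c_0$ in $\langle G_{\phi}, w, p \rangle$, the strategy $\lambda_O$ defined by $\lambda_O(u) = o$ where $\lambda_1(\Theta(u\text{-prefix})) = (q,o)$ realizes $\phi$ with the same initial credit $c_0$. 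The finite-memory transfer (size $m \cdot |S_1|$) goes through unchanged, since the right-congruence argument only used $\Theta$ and not the specific objective. The crucial point, which I would state explicitly, is that the \emph{same} credit $c_0$ works on both sides; this is why the reduction is uniform in $c_0$ and why one replaces Theorem~\ref{thm:MPparity} by Theorem~\ref{thm:EParity}.

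With the energy analogue of Proposition~\ref{prop:reductionMPParity} in hand, the complexity bound follows immediately. By Theorem~\ref{thm:EParity}, the initial credit problem for an energy parity game $\langle G_{\phi}, w, p \rangle$ is decidable in time $O(|E| \cdot d \cdot |S|^{d+3} \cdot W)$, and since $G_{\phi}$ has $2^{2^{O(n \log n)}}$ states, $2^{O(n)}$ priorities, and largest weight $W$ bounded by a simple (singly-exponential at most) function of the binary encoding of $w_P$, the whole procedure runs in doubly-exponential time, giving the 2ExpTime upper bound. For the lower bound I would invoke, exactly as in the proof of Theorem~\ref{thm:complexity}, the 2ExpTime-hardness of plain \LTL realizability from Theorem~\ref{thm:LTLRealizability}: taking the trivial weight function $w_P \equiv 0$ (or any constant nonnegative weight), \LTLE-realizability with credit coincides with ordinary \LTL-realizability, so hardness transfers directly.

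I do not expect any genuinely hard step here; the whole content is that the reduction game $G_{\phi}$ is \emph{objective-agnostic} and that $\Theta$ preserves energy levels prefix-by-prefix. The only point demanding a little care is the uniform treatment of the initial credit $c_0$: one must check that the existential quantifier over $c_0$ in the definition of \LTLE-realizability matches the existential quantifier over $c_0$ in the initial credit problem, so that an $\epsilon$-free exact equivalence holds rather than merely an approximate one. This is where the energy case is in fact \emph{cleaner} than the mean-payoff case, since Theorem~\ref{thm:EParity} guarantees finite-memory optimal strategies outright (no $\epsilon$-optimality is needed), so there is no analogue of the infinite-memory subtlety from Theorem~\ref{thm:MPparity} to worry about. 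I would therefore keep the write-up short, emphasising that only the substitution of Theorem~\ref{thm:EParity} for Theorem~\ref{thm:MPparity} and of energy objectives for mean-payoff objectives is required, and that all bijection and finite-memory arguments are reused verbatim.
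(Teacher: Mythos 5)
Your proposal is correct and follows essentially the same route as the paper: the authors likewise reuse the game $\langle G_{\phi}, w, p \rangle$ from Proposition~\ref{prop:reductionMPParity} with an energy parity objective in place of the mean-payoff parity objective, take the same initial credit $c_0$ on both sides of the reduction, and conclude via Theorem~\ref{thm:EParity} for the upper bound and the 2ExpTime-hardness of plain \LTL realizability for the lower bound. Your explicit remarks on $\Theta$ preserving energy levels prefix-by-prefix and on the uniform treatment of $c_0$ are exactly the points the paper relies on, just stated more verbosely.
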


The next proposition states that when $\phi$ is ${\sf E}$-realizable, then Player~$O$ has a finite-memory strategy the size of which is related to $G_{\phi}$. This result is stronger than Corollary~\ref{cor:LTLMPFiniteMem} since it also holds for optimal strategies and it gives a bound on the memory size of the winning strategy.

\begin{proposition} \label{prop:LTLEFiniteMem}
Let $\phi$ be an \LTL formula over $\langle P, w_P \rangle$ and $\langle G_{\phi}, w, p \rangle$ be the associated energy parity game. Then $\phi$ is ${\sf E}$-realizable iff it is ${\sf E}$-realizable under finite memory. Moreover Player~$O$ has a finite-memory winning strategy with a memory size bounded by $4 \cdot |S|^2 \cdot d \cdot W$, where $|S|$ is the number of states of $G_{\phi}$, $d$ its number of priorities and $W$ its largest absolute weight.
\end{proposition}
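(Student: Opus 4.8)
The plan is to leverage Proposition~\ref{prop:reductionMPParity} (in its energy-parity incarnation, as described in the paragraph preceding Theorem~\ref{thm:complexityEnergy}) together with the finite-memory bound for energy parity games given in Theorem~\ref{thm:EParity}. The key observation is that the reduction underlying Proposition~\ref{prop:reductionMPParity} works verbatim with energy parity objectives in place of mean-payoff parity objectives: the same game graph $\langle G_{\phi}, w, p \rangle$, the same bijection $\Theta$ preserving energy levels and the parity objective, and the same correspondence between strategies of Player~$O$ and strategies of Player~$1$. Under this correspondence, $\phi$ is ${\sf E}$-realizable (for some initial credit $c_0$) if and only if Player~$1$ wins the initial credit problem in $\langle G_{\phi}, w, p \rangle$.

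First I would establish the equivalence ``${\sf E}$-realizable iff ${\sf E}$-realizable under finite memory.'' The forward direction is the only nontrivial one. Assuming $\phi$ is ${\sf E}$-realizable, the energy-parity version of Proposition~\ref{prop:reductionMPParity} yields that Player~$1$ wins the initial credit problem in $G_{\phi}$; by Theorem~\ref{thm:EParity}, Player~$1$ then has a \emph{finite-memory} winning strategy $\lambda_1$, with memory size bounded by $4 \cdot |S| \cdot d \cdot W$ (where $|S|$, $d$, $W$ refer to $G_{\phi}$). The backward direction is immediate since a finite-memory strategy is a strategy.

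Next I would transfer this finite-memory strategy back to Player~$O$ and track the memory blow-up. Proposition~\ref{prop:reductionMPParity} already provides the recipe: from a finite-memory $\lambda_1$ of size $m$ one constructs a finite-memory $\lambda_O$ of size $m \cdot |S_1|$, where $S_1$ is the set of Player-$1$ states of $G_{\phi}$. Plugging in $m \leq 4 \cdot |S| \cdot d \cdot W$ and $|S_1| \leq |S|$ gives a winning strategy $\lambda_O$ for the \LTLE realizability of $\phi$ with memory size bounded by $4 \cdot |S|^2 \cdot d \cdot W$, which is exactly the claimed bound. I would also remark that, since Theorem~\ref{thm:EParity} provides finite-memory strategies that are winning for the \emph{minimum} initial credit, this argument applies to optimal strategies as well, justifying the statement in the sentence introducing the proposition that the result is stronger than Corollary~\ref{cor:LTLMPFiniteMem}.

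The main obstacle I anticipate is purely bookkeeping rather than conceptual: one must verify carefully that the strategy-transfer construction of Proposition~\ref{prop:reductionMPParity}, whose memory analysis was written for the mean-payoff case, goes through unchanged for energy objectives, and that the size bound $m \cdot |S_1|$ composes correctly with the bound from Theorem~\ref{thm:EParity}. Since the bijection $\Theta$ preserves energy levels exactly (item \emph{(i)} in the proof of Proposition~\ref{prop:reductionMPParity}) and the parity objective, the same initial credit $c_0$ serves on both sides, so no loss or adjustment of the credit is needed. Thus the only genuine work is confirming that the product $4 \cdot |S| \cdot d \cdot W \cdot |S_1| \leq 4 \cdot |S|^2 \cdot d \cdot W$, which holds because $|S_1| \leq |S|$.
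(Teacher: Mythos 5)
Your proposal is correct and follows essentially the same route as the paper's proof: apply the energy-parity version of Proposition~\ref{prop:reductionMPParity} to reduce to the initial credit problem in $G_{\phi}$, invoke Theorem~\ref{thm:EParity} to get a finite-memory winning strategy of size at most $4 \cdot |S| \cdot d \cdot W$ for Player~1, and transfer it back to Player~$O$ with the $m \cdot |S_1|$ memory blow-up to obtain the bound $4 \cdot |S|^2 \cdot d \cdot W$. Your explicit remark that $|S_1| \leq |S|$ is a slightly more careful accounting than the paper's, which directly states the bound $|M| \cdot |S|$, but the argument is the same.
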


\begin{proof}
By Proposition~\ref{prop:reductionMPParity} where $G_{\phi} = (S, s_0,E)$ is considered as an energy parity game, we know that Player~$1$ has a winning strategy $\lambda_1$ for the initial credit problem in $G_{\phi}$. Moreover, by Theorem~\ref{thm:EParity}, we can suppose that this strategy has finite-memory $M$ with $|M| \leq 4 \cdot |S| \cdot d \cdot W$. Finally, one can derive a finite-memory winning strategy $\lambda_O$ for the \LTLE realizabilty of $\phi$ with a memory size bounded by $|M| \cdot |S|$ by Proposition~\ref{prop:reductionMPParity}.
\qed\end{proof}

The constructions proposed in Theorems~\ref{thm:complexity} and~\ref{thm:complexityEnergy} can be easily extended to the more general case where the weights assigned to executions are given by a deterministic weighted automaton, as proposed in~\cite{ChatterjeeDH10}, instead of a weight function $w$ over $\Lit(P)$ as done here. Indeed, given an \LTL formula $\phi$ and a deterministic weighted automaton $\cal A$, we first construct from $\phi$ a deterministic parity automaton and then take the synchronized product with $\cal A$. Finally this product can be turned into a mean-payoff (resp. energy) parity game.

\section{Safraless algorithm} \label{sec:algorithm}

In the previous section, we have proposed an algorithm for solving the  \LTLMP realizability of a given \LTL formula $\phi$, which is based on a reduction to the mean-payoff parity game $G_{\phi}$. This algorithm has two main drawbacks. First, it requires the use of Safra's construction to get a deterministic parity automaton ${\cal A}_{\phi}$ such that $L({\cal A}_{\phi})$ $=$ $\sem{\phi}$, a construction which is resistant to efficient implementations~\cite{ATW06}. Second, strategies for the game $G_{\phi}$ may require infinite memory (for the threshold $\nu_{G_{\phi}}$, see Theorem~\ref{thm:MPparity}). This can also be the case for the $\LTLMP$ realizability problem, as illustrated by Example~\ref{ex:LTLMP}. In this section, we show how to circumvent these two drawbacks. %Moreover, we show how to reduce the $\LTLMP$ realizability problem to a safety game.

\subsection{Finite-memory strategies}
% ----------------------------------------------

The second drawback (infinite memory strategies) has been already partially solved by Corollary~\ref{cor:LTLMPFiniteMem}, when the threshold given for the \LTLMP-realizability is the optimal value $\nu_{\phi}$. Indeed it states the existence of finite-memory winning strategies for the thresholds $\nu_{\phi} - \epsilon$, for all $\epsilon > 0$. We here show that we can go further by translating the \LTLMP realizability problem under finite memory into an \LTLE realizability problem, and conversely. %We then derive several corollaries.

Recall (see Proposition~\ref{prop:reductionMPParity}) that testing whether an \LTL formula $\phi$ is \MP-realizable for a given threshold $\nu$ is equivalent to solve the mean-payoff parity game $G_{\phi}$ for the same threshold. Moreover, with the same game graph, testing whether $\phi$ is ${\sf E}$-realizable is equivalent to solve the energy parity game $G_{\phi}$. Let us study in more details the finite-memory winning strategies of $G_{\phi}$ seen either as a mean-payoff parity game, or as an energy parity game, through the next property proved in~\cite{CD10}.
%
% Instead of solving the \LTLMP realizability problem, we solve instead a connected \LTLE realizability problem with the advantage to produce, when they exist, finite-memory winning strategies that are as close as we want from optimal strategies of the initial \LTLMP problem \fbox{pas de sens de dire qu'une strategie est plus proche}. We first recall some useful properties about mean-payoff parity games and energy parity games. 
%
\begin{comment}
%-------------------
\fbox{faut-il vraiment rappeler le contenu de ce paragraphe?} We first recall well-known properties about cycles in mean-payoff parity games and in energy parity game \fbox{ref ???}. Given a mean-payoff parity game $G$ and a threshold $\nu$, the cycles of $G_{\lambda_1}$ with $\lambda_1$ a finite-memory strategy of Player~1, indicate whether $\lambda_1$ is winning or not. Indeed this strategy is winning iff the average of the weights (with respect to the weight function $w$) in all cycles of $G_{\lambda_1}$ is at least $\nu$, and the least priority in each such cycle is even. These cycles also indicate if a strategy is winning in an energy parity game. In this case, $\lambda_1$ is winning iff the sum of the weights is positive in all cycles of  $G_{\lambda_1}$, and the least priority in each such cycle is even.

The next lemma establishes an interesting link between finite-memory strategies of mean-payoff parity games and those of some related energy parity games~\cite{CD10}. 
\end{comment}
%-----------------

\begin{proposition}[\cite{CD10}]\label{prop:Doyen}
Let $G = (S,s_0,E)$ be a game with a priority function $p$ and a weight function $w$. Let $\nu$ be a threshold and $w-\nu$ be the weight function such that $(w-\nu)(e) = w(e) -\nu$ for all edges $e$ of $G$. Let $\lambda_1$ be a finite-memory strategy for Player~1. Then $\lambda_1$ is winning in the mean-payoff parity game $\langle G, w, p \rangle$ with threshold $\nu$ iff $\lambda_1$ is winning in the energy parity game $\langle G, w-\nu, p \rangle$ for some initial credit~$c_0$.
\end{proposition}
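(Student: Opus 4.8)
First I would pass to the finite one-player arena induced by the strategy. Since $\lambda_1$ is finite-memory, let $G_{\lambda_1}$ be the synchronized product of $G$ with the finite memory witnessing $\lambda_1$ (the classes of the right-congruence $\sim$), keeping out of each Player~1 state only the edge prescribed by $\lambda_1$ and out of each Player~2 state all its edges. This graph is finite, its weights and priorities are inherited from $G$ by projection, and the plays compatible with $\lambda_1$ are exactly the infinite paths of $G_{\lambda_1}$ from the initial state; moreover Player~2 can steer the play to any reachable state and then loop around any cycle through it. The single arithmetic fact driving the whole argument is that for a cycle $C$ of edge-length $\ell$ in $G_{\lambda_1}$ one has $\sum_{e\in C}(w(e)-\nu)=\bigl(\sum_{e\in C}w(e)\bigr)-\ell\nu$, so the $(w-\nu)$-sum of $C$ is nonnegative iff the $w$-mean of $C$ is at least $\nu$, while the priorities along $C$ do not depend on the weight function at all.

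The plan is then to prove a cycle characterization for each objective. For the mean-payoff parity game $\langle G,w,p\rangle$ with threshold $\nu$, I would show that $\lambda_1$ is winning iff every cycle reachable from the initial state in $G_{\lambda_1}$ has even minimum priority and $w$-mean at least $\nu$. For the forward direction, a reachable cycle with odd minimum priority or $w$-mean $<\nu$ lets Player~2 reach it and loop forever, yielding an outcome whose set of infinitely-visited states is exactly that cycle, hence failing $\parityO_G(p)$ or having $\MP_G<\nu$. For the converse, any outcome eventually stays inside the strongly connected subgraph $X$ of its infinitely-visited states; the minimum priority of $X$ is realised on a cycle of $X$ and is therefore even, and since every simple cycle of $X$ has $w$-mean $\geq\nu$ the minimum mean-cycle value of $X$ is $\geq\nu$, which is the classical lower bound on $\liminf\tfrac1n\EL_G$; thus the outcome is won.

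Symmetrically, for the energy parity game $\langle G,w-\nu,p\rangle$ I would show that $\lambda_1$ is winning for some initial credit iff every reachable cycle of $G_{\lambda_1}$ has even minimum priority and nonnegative $(w-\nu)$-sum. A reachable cycle with negative $(w-\nu)$-sum (resp. odd minimum priority) is again looped by Player~2 to drive the energy to $-\infty$ (resp. to violate parity), so no finite credit works. Conversely, when all reachable cycles are nonnegative, any prefix minimizing the partial sum is a walk whose weight can only decrease when nonnegative cycles are excised, so it is bounded below by the weight of some simple walk, namely by $-(N-1)W'$ where $N=|G_{\lambda_1}|$ and $W'$ is the largest absolute $(w-\nu)$-weight. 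Hence the fixed credit $c_0=(N-1)W'$ keeps the energy nonnegative along every play, and parity holds exactly as before.

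Finally I would combine the two characterizations through the arithmetic identity of the first step: a reachable cycle has nonnegative $(w-\nu)$-sum iff it has $w$-mean $\geq\nu$, and the parity condition on cycles is literally identical in both games. The two winning criteria therefore coincide, which is precisely the claimed equivalence. The main obstacle is not this combination but the two cycle characterizations themselves — specifically the mean-payoff lower bound via the minimum mean-cycle value, and the energy lower bound via cycle excision — which are the standard structural facts about mean-payoff and energy parity games underlying~\cite{CD10}; once they are in place the conclusion is immediate.
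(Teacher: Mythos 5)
Your proof is correct and follows exactly the argument the paper relies on for this result (which it imports from~\cite{CD10} without proof, but sketches in a commented-out remark): restrict to the finite one-player graph $G_{\lambda_1}$, characterize winning in both games by the condition that every reachable cycle has even minimum priority and, respectively, $w$-mean at least $\nu$ or nonnegative $(w-\nu)$-sum, and observe that these cycle conditions coincide since $\sum_{e\in C}(w(e)-\nu)=\sum_{e\in C}w(e)-\ell\nu$. Nothing is missing; the two lower-bound lemmas you flag (min-mean-cycle bound on the $\liminf$, and cycle excision bounding the energy drop by $(N-1)W'$) are the standard facts needed and are correctly deployed.
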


This proposition leads to the next theorem.

\begin{theorem}\label{thm:MP-realIFFE-real}
Let $\phi$ be an \LTL formula $\phi$ over $\langle P, w_P \rangle$, and $\langle G_{\phi}, w, p \rangle$ be its associated mean-payoff parity game. Then
\begin{itemize}
\item the formula $\phi$ is \MP-realizable under finite memory for threshold $\nu$ iff $\phi$ over $\langle P, w_P - \nu \rangle$ is ${\sf E}$-realizable
\item if $\phi$ is \MP-realizable under finite memory, Player~$O$ has a winning strategy whose memory size is bounded by $4 \cdot |S|^2 \cdot d \cdot W$, where $|S|$ is the number of states of $G_{\phi}$, $d$ is the number of priorities of $p$ and $W$ is the largest absolute weight of the weight function $w - \nu$.
\end{itemize}
\end{theorem}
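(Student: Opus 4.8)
The plan is to chain the game reduction of Proposition~\ref{prop:reductionMPParity} with the finite-memory mean-payoff/energy correspondence of Proposition~\ref{prop:Doyen}, and then read the outcome back through the reduction in its energy-parity form. First I would translate the left-hand side of the equivalence into the game $G_\phi$: by Proposition~\ref{prop:reductionMPParity}, $\phi$ over $\langle P, w_P\rangle$ is \MP-realizable under finite memory for threshold $\nu$ iff Player~1 has a finite-memory winning strategy $\lambda_1$ in the mean-payoff parity game $\langle G_\phi, w, p\rangle$ for threshold $\nu$. The threshold and the finite-memory property transfer faithfully, since $\Theta$ preserves both the mean-payoff value and the parity objective and the proposition relates the two memory sizes explicitly.

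Next I would apply Proposition~\ref{prop:Doyen} to $G_\phi$: a finite-memory $\lambda_1$ is winning in $\langle G_\phi, w, p\rangle$ with threshold $\nu$ iff it is winning in the energy parity game $\langle G_\phi, w-\nu, p\rangle$ for some initial credit $c_0$, where $(w-\nu)(e)=w(e)-\nu$ on every edge. The crucial step is to recognize $\langle G_\phi, w-\nu, p\rangle$ as exactly the energy parity game that Proposition~\ref{prop:reductionMPParity} (read with energy parity objectives, as in Theorem~\ref{thm:complexityEnergy}) associates with $\phi$ over the shifted weighted alphabet $\langle P, w_P-\nu\rangle$. Because the shift touches only weights, the automaton ${\cal A}_\phi$, its state set, the priority function $p$ and the bijection $\Theta$ are unchanged; one only has to check that the shift $w_P\mapsto w_P-\nu$ on the signal literals (using rational weights where needed, as permitted by footnote~\ref{fnrational}) is defined so as to induce precisely the uniform per-edge shift $w-\nu$. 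This matching of the specification-level shift with the edge-level shift, including the per-edge versus per-letter normalization, is the one point that requires care and is the main, if mild, obstacle; every other step is a direct invocation of an already established result.

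Reading Proposition~\ref{prop:reductionMPParity} backwards in its energy-parity form then turns the existence of a finite-memory $\lambda_1$ winning $\langle G_\phi, w-\nu, p\rangle$ for some initial credit into the ${\sf E}$-realizability of $\phi$ over $\langle P, w_P-\nu\rangle$ under finite memory, which by Proposition~\ref{prop:LTLEFiniteMem} coincides with plain ${\sf E}$-realizability. Chaining these equivalences yields the first item.

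For the memory bound, I would take the finite-memory winning strategy provided by Theorem~\ref{thm:EParity} for the energy parity game $\langle G_\phi, w-\nu, p\rangle$, whose memory size is at most $4\cdot|S|\cdot d\cdot W$ with $W$ the largest absolute weight of $w-\nu$, $d$ the number of priorities of $p$ and $|S|$ the number of states of $G_\phi$. Passing it through the memory estimate of Proposition~\ref{prop:reductionMPParity}, which multiplies the memory size by $|S_1|\le|S|$, yields a winning strategy for Player~$O$ of memory size at most $4\cdot|S|^2\cdot d\cdot W$, as required.
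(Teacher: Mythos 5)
Your proposal is correct and follows essentially the same route as the paper: Proposition~\ref{prop:reductionMPParity} to move into the game $G_\phi$, Proposition~\ref{prop:Doyen} to trade the mean-payoff threshold for an energy objective under the shifted weights, the reduction read back in its energy-parity form to return to ${\sf E}$-realizability of $\phi$ over $\langle P, w_P-\nu\rangle$, and Theorem~\ref{thm:EParity} combined with the $|S_1|$ memory blow-up of Proposition~\ref{prop:reductionMPParity} for the $4\cdot|S|^2\cdot d\cdot W$ bound. The one place you are more explicit than the paper is the identification of $\langle G_\phi, w-\nu, p\rangle$ with the energy parity game of $\phi$ over $\langle P, w_P-\nu\rangle$ (the per-edge versus per-letter normalization of the shift), a check the paper performs silently.
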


\begin{proof}
If $\phi$ is $\MP$-realizable under finite-memory for threshold $\nu$, then by Proposition~\ref{prop:reductionMPParity}, Player~1 has a finite-memory strategy $\lambda_1$ in the mean-payoff parity game $\langle G_{\phi}, w, p \rangle$. 
%Moreover this strategy has a size bounded by $|M| \times |S|$ where $M$ is the size of $\lambda_1$. 
By Proposition~\ref{prop:Doyen}, $\lambda_1$ is a winning strategy in the energy parity game $\langle G_{\phi}, w-\nu, p \rangle$, that shows (by Proposition~\ref{prop:reductionMPParity}) that $\phi$ is \LTLE-realizable with weight function $w_P - \nu$ over $\Lit(P)$. The converse is proved similarly. 

Finally, given a winning strategy for the \LTLE-realizability of $\phi$ with weight function $w-\nu$ over $\Lit(P)$, we can suppose by Proposition~\ref{prop:LTLEFiniteMem} that it is finite-memory with a memory size bounded by $4 \cdot |S|^2 \cdot d \cdot W$, where $S$, $d$ and $W$ are the parameters of the energy parity game $\langle G_{\phi}, w-\nu, p \rangle$. This concludes the proof.
\qed\end{proof}

%\begin{proof}
%Suppose that $\lambda_1$ is winning in the mean-payoff parity game $\langle G, w, p \rangle$, and consider the graph $G_{\lambda_1}$. As  $\lambda_1$ is winning, the average of the weights (with respect to $w$) in all cycles of $G_{\lambda_1}$ is at least $\nu$, and the least priority in each such cycle is even. By simple computations we see that in $G_{\lambda_1}$ with weight function $w- \nu$, the parity condition is satisfied and the sum of the weights is positive in all cycles. It follows that $\lambda_1$ is winning in the energy parity game $\langle G, w-\nu, p \rangle$ with the initial credit $|S| \cdot W$ ($W$ being the largest absolute value of weights with respect to $w - \nu$).
%
%The proof of the converse is similar.
%\qed\end{proof}

The next corollary follows from Theorem~\ref{thm:MP-realIFFE-real} and Corollary~\ref{cor:LTLMPFiniteMem}.

\begin{corollary} \label{cor:LTLMP - LTLE}
Let $\phi$ be an \LTL formula over $\langle P, w_P \rangle$. Then for all $\epsilon \in \rationals$, with $\epsilon > 0$, the following are equivalent:
\begin{enumerate}
\item $\lambda_O$ is a finite-memory $\epsilon$-optimal winning strategy for the \LTLMP-realizability of $\phi$ 
\item $\lambda_O$ is a winning strategy for the \LTLE-realizability of $\phi$ with weight function $ w_P - \ValOpt_{\phi} + \epsilon$ over $\Lit(P)$.
\end{enumerate}
\end{corollary}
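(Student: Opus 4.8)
The plan is to instantiate Theorem~\ref{thm:MP-realIFFE-real} at the single threshold $\nu := \ValOpt_{\phi} - \epsilon$ and to read off the claimed weight function. Since statement~2 mentions $w_P - \ValOpt_{\phi} + \epsilon$, it is implicit that $\ValOpt_{\phi} \in \rationals$, i.e. that $\phi$ is \MP-realizable; this is guaranteed because, by Corollary~\ref{cor:optimal}, $\ValOpt_{\phi} = \ValOpt_{G_{\phi}}$ is the value of a finite mean-payoff parity game and hence finite whenever it is not $-\infty$. The first step is then a pure unfolding of definitions: as $\ValOpt_{\phi}$ is finite, a strategy $\lambda_O$ is $\epsilon$-optimal for the \LTLMP-realizability of $\phi$ precisely when $\Val(\Outcome(\lambda_O,\lambda_I)) \geq \ValOpt_{\phi} - \epsilon = \nu$ for all $\lambda_I \in \Strat_I$, that is, precisely when $\lambda_O$ is a winning strategy for the \LTLMP-realizability of $\phi$ with threshold $\nu$. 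This is where Corollary~\ref{cor:LTLMPFiniteMem} is used: it ensures $\ValOpt_{\phi} = \sup\{\inf_{\lambda_I}\Val(\Outcome(\lambda_O,\lambda_I)) \mid \lambda_O \text{ finite-memory}\}$, so that for every $\epsilon>0$ the threshold $\nu < \ValOpt_{\phi}$ is actually attainable by finite-memory strategies, keeping statement~1 coherent with the optimal value taken over arbitrary strategies.

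Next I would feed $\nu$ into Theorem~\ref{thm:MP-realIFFE-real}. Its shifted weight function $w_P - \nu$ is exactly $w_P - (\ValOpt_{\phi} - \epsilon) = w_P - \ValOpt_{\phi} + \epsilon$, the weight function of statement~2. The crucial point is that the correspondence in Theorem~\ref{thm:MP-realIFFE-real} holds \emph{for a fixed strategy}, not merely as an existential equivalence between realizability notions. This is what its proof delivers: the translation $\lambda_O \mapsto \lambda_1$ and back from Proposition~\ref{prop:reductionMPParity} is induced by the bijection $\Theta$ and is therefore involutive, so the strategy one recovers on the \LTLE side is the very same $\lambda_O$; and Proposition~\ref{prop:Doyen} equates, for one fixed finite-memory strategy, being winning in $\langle G_{\phi}, w, p\rangle$ for threshold $\nu$ with being winning in $\langle G_{\phi}, w - \nu, p\rangle$ for some initial credit. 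Chaining these fixed-strategy correspondences gives: a finite-memory $\lambda_O$ is winning for the \LTLMP-realizability of $\phi$ with threshold $\nu$ if and only if that same $\lambda_O$ is winning for the \LTLE-realizability of $\phi$ with weights $w_P - \nu$. Combined with the first paragraph, this is exactly the equivalence of statements~1 and~2, so no separate argument for each implication is really needed beyond this chain.

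The step I expect to be the main obstacle is keeping the \emph{finite-memory} qualifier consistent across the equivalence. Statement~1 demands that $\lambda_O$ be finite-memory, whereas statement~2 only asks it to be \LTLE-winning, and yet Proposition~\ref{prop:Doyen}, the engine of the reduction, is stated solely for finite-memory strategies. The clean way to absorb this is to note that $\Theta$ transports the finite-memory property in both directions, as already shown in the proof of Proposition~\ref{prop:reductionMPParity} (a finite-memory $\lambda_O$ corresponds to a finite-memory $\lambda_1$ and conversely), and to invoke Proposition~\ref{prop:LTLEFiniteMem} so that one always works with a finite-memory witness on the \LTLE side. One should therefore read and prove the equivalence as an equivalence between finite-memory strategies throughout; the direction from energy to mean-payoff is in fact robust (a below-bounded energy level for the shifted weights forces the long-run average of those weights to be nonnegative, i.e. $\MP \geq \nu$, even without finite memory), so the only genuine care is to certify that the \LTLE-winning $\lambda_O$ produced is finite-memory, which is exactly what Proposition~\ref{prop:LTLEFiniteMem} and the finite-memory preservation of $\Theta$ guarantee.
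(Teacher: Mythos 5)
Your proposal is correct and follows the same route the paper intends: the paper derives this corollary directly from Theorem~\ref{thm:MP-realIFFE-real} instantiated at the threshold $\nu = \ValOpt_{\phi} - \epsilon$ together with Corollary~\ref{cor:LTLMPFiniteMem}, which is exactly your argument. Your additional care about the fixed-strategy (rather than merely existential) reading of Theorem~\ref{thm:MP-realIFFE-real} and about carrying the finite-memory qualifier across the equivalence is a sound and welcome elaboration of what the paper leaves implicit.
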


%\begin{proof}
%Corollary~\ref{cor:LTLMPFiniteMem} states that if $\phi$ is $\MP$-realizable, then for all $\epsilon > 0$, Player~$O$ has a finite-memory $\epsilon$-optimal winning strategy $\lambda_O$. Its proof indicates that this strategy can be contructed from a finite-memory winning strategy $\lambda_1$ in the mean-payoff parity game $\langle G_{\phi}, w, p \rangle$ with threshold $\ValOpt_{\phi} - \epsilon$. 
%%Moreover this strategy has a size bounded by $|M| \times |S|$ where $M$ is the size of $\lambda_1$. 
%By Proposition~\ref{prop:Doyen}, $\lambda_1$ is a winning strategy in the energy parity game $\langle G_{\phi}, w^*, p \rangle$, that in particular shows that $\lambda_O$ is a winning strategy for the \LTLE-realizability of $\phi$ with weight function $w^*_P$ over $\Lit(P)$. The converse is proved similarly. 
%
%Finally, given a winning strategy for the \LTLE-realizability of $\phi$ with weight function $w^*_P$ over $\Lit(P)$, we can suppose by Proposition~\ref{prop:LTLEFiniteMem} that it is finite-memory with a memory size bounded by $4 \cdot |S|^2 \cdot d \cdot W$, where $S$, $d$ and $W$ are the parameters of the energy parity game $\langle G_{\phi}, w^*, p \rangle$. This concludes the proof.
%\qed\end{proof}

It is important to notice that in this corollary, the memory size of the strategy (as described in Theorem~\ref{thm:MP-realIFFE-real}) increases as $\epsilon$ decreases. Indeed, it depends on the weight function $w - \ValOpt_{\phi} + \epsilon$ used by the energy parity game $G_{\phi}$. We recall that if $\epsilon = \frac{a}{b}$, then this weight function must be multiplied by $b$ in a way to have integer weights (see footnote~\ref{fnrational}). The largest absolute weight $W$ is thus also multiplied by $b$. 

In the sequel, to avoid strategies with infinite memory when solving the \LTLMP realizability problem, we will restrict to the \LTLMP realizability problem under finite memory. By Theorem~\ref{thm:MP-realIFFE-real}, it is enough to study winning strategies for the \LTLE realizability problem (having in mind that the weight function over $\Lit(P)$ has to be adapted). In the sequel, we only study this problem.

\subsection{Safraless construction}
%---------------------------------------------

%To avoid the Safra construction needed to obtain a deterministic automaton for the underlying \LTL formula, we adapt a Safraless construction inspired by~\cite{}, first proposed in~\cite{} and at the basis of a symbolic algorithm in~\cite{FMSD11} to solve efficiently the \LTL realizability and synthesis problems. \vb{blabla a ajouter}  We first define important notions on infinite word automata and we reduce the \EL-realizability with an \LTLE specification to a specification given by a weighted turn-based universal co-B\"uchi automaton.
%\ab{Mettre les bonnes refs (cf. draft JF) + OK FMSD ici?}

To avoid the Safra's construction needed to obtain a deterministic parity automaton for the underlying \LTL formula, we adapt a Safraless construction proposed in~\cite{FMSD11,ScheweF07a} for the \LTL synthesis problem, in a way to deal with weights and efficiently solve the  \LTLE synthesis problem.  Instead of constructing a mean-payoff parity game from a deterministic parity automaton as done in Proposition~\ref{prop:reductionMPParity}, we will propose a reduction to a safety game. In this aim, we need to define the notion of energy automaton.

\paragraph{Energy automata --} 
%It is well-known (see for instance \cite{SomBlo00}) \vb{plutot kupferman?} that for all \LTL formula $\phi$, there exists a non deterministic B�chi automaton (possibly exponentially larger) that accepts $\sem{\phi}$. Several other notions of infinite words automata have been introduced in~\cite{FMSD11}. We here introduce them enriched with weights.
Let  $\langle P,w\rangle$ with $P$ a finite set of signals and $w$ a weight function over \Lit$(P)$. We are going to recall several notions of automata on infinite words over $\Sigma_P$ and introduce the related notion of energy automata over the weighted alphabet $\langle \Sigma_P, w \rangle$. An \emph{automaton} ${\cal A}$ over the alphabet $\Sigma_P$ is a tuple $(\Sigma_P, Q, q_0, \final, \delta)$ such that $Q$ is a finite set of states, $q_0 \in Q$ is the initial state, $\final \subseteq Q$ is a set of final states and $\delta : Q \times \Sigma_P \rightarrow 2^Q$ is a transition function. We say that ${\cal A}$ is \textit{deterministic} if $\forall q\in Q, \forall \sigma\in\Sigma_P, |\delta(q,\sigma)|\leq 1$. It is \textit{complete}
if $\forall q\in Q, \forall \sigma\in\Sigma_P, \delta(q,\sigma)\neq\varnothing$.

A {\em run} of ${\cal A}$ on a word $u = \sigma_0 \sigma_1 \dots \in\Sigma_P^\omega$ is an infinite sequence of states $\rho = \rho_0\rho_1\dots\in Q^\omega$ such that $\rho_0 = q_0$ and $\forall k \geq 0, \rho_{k+1}\in \delta(\rho_k, \sigma_k)$. We denote by $\runs_{\cal A}(u)$ the set of runs of ${\cal A}$ on $u$, and by $\visit(\rho,q)$ the number of times the state $q$ occurs along the run $\rho$. We consider the following acceptance conditions:  

\begin{center}
\begin{tabular}{ll}
\emph{Non-deterministic B\"uchi}: & \quad  $\exists  \rho\in\runs_{\cal A}(u),  \exists q\in\final, \visit(\rho,q)=\infty$ \\
\emph{Universal co-B\"uchi}: &\quad  $\forall \rho\in\runs_{\cal A}(u), \forall q\in\final, \visit(\rho,q)<\infty$ \\
\emph{Universal $K$-co-B\"uchi}:&\quad  $\forall \rho\in\runs_{\cal A}(u), \sum_{q\in\final} \visit(\rho,q)\leq K$. 
\end{tabular}
\end{center}

A word $u \in \Sigma_P^{\omega}$ is \emph{accepted} by a \emph{non-deterministic B\"uchi automaton} (\NBW) ${\cal A}$ if $u$ satisfies the non-deterministic B\"uchi acceptance condition. We denote by $L_{\nba}({\cal A})$ the set of words accepted by ${\cal A}$. Similarly we have the notion of \emph{universal co-B\"uchi automaton} (\UCW) $\cal A$ (resp. \emph{universal $K$-co-B\"uchi automaton} (\UCWK) $\langle {\cal A}, K \rangle$) and the set  $L_{\uca}({\cal A})$ (resp. $L_{\uca,K}({\cal A})$) of accepted words. 

We also now introduce the energy automata. Let $\cal A$ be a \NBW over the alphabet $\Sigma_P$. The related \emph{energy non-deterministic B\"uchi automaton} (\energy\NBW) ${\cal A}^w$ is over the weighted alphabet $\langle \Sigma_P,w\rangle$ and has the same structure as $\cal A$. Given an initial credit $c_0 \in \nat$, a word $u$ is \emph{accepted} by ${\cal A}^w$ if  \emph{(i)} $u$ satisfies the non-deterministic B\"uchi acceptance condition and \emph{(ii)} $\forall n \geq 0, c_0 +  \EL(u(n)) \geq 0$. We denote by $L_{\nba}({\cal A}^w ,c_0)$ the set of words accepted by ${\cal A}^w$ with the given initial credit $c_0$. We also have the notions of \emph{energy universal co-B\"uchi automaton} (\energy\UCW) ${\cal A}^w$ and \emph{energy universal $K$-co-B\"uchi automaton} (\energy\UCWK) $\langle {\cal A}^w, K \rangle$, and the related sets  $L_{\uca}({\cal A}^w, c_0)$ and $L_{\uca,K}({\cal A}^w,c_0)$. Notice that if $K \leq K'$, then $L_{\uca,K}({\cal A}^w,c_0) \subseteq L_{\uca,K'}({\cal A}^w,c_0)$, and that if $c_0 \leq c'_0$, then $L_{\uca,K}({\cal A}^w,c_0) \subseteq L_{\uca,K}({\cal A}^w,c'_0)$.

The interest of \UCWK is that they can be determinized with the subset construction extended with counters \cite{FMSD11,ScheweF07a}. This construction also holds for \energy\UCWK. Intuitively, for all states $q$ of $\cal A$, we count (up to $\top = K+1$) the maximal number of accepting states which have been visited by runs ending in $q$. This counter is equal to $-1$ when no run ends in $q$. The final states are the subsets in which a state has its counter greater than $K$ (accepted runs will avoid them). Formally, let ${\cal A}^w$ be a \UCWK $(\langle \Sigma_P, w\rangle, Q, q_0, \final, \delta)$ with $K \in \nat$. With ${\cal K} = \{-1,0\dots,K,\top\}$ (with $\top > K$), we define $\det({\cal A}^w,K) = (\langle \Sigma_P, w\rangle, {\cal F},F_0,\beta,\Delta)$ where:
\begin{itemize}
\item $ {\cal F} = \{ F\ |\ F\text{ is a mapping from } Q  \text{ to } {\cal K} \}$
\item  $F_0 = q\in Q \mapsto \left\{\begin{array}{ll} -1 & \text{if }
  q\neq q_0 \\
  (q_0\in \final) & \text{otherwise} \end{array}\right.$
\item $\beta  = \{ F \in {\cal F} \mid \exists q, F(q) =  \top \}$
\item  $\Delta(F,\sigma)  = q\mapsto \text{max} \{F(p) \oplus (q\in\final)\ \mid q\in \delta(p,\sigma)\}$
\end{itemize}
In this definition, $(q\in\final)=1$ if $q$ is in $\final$, and $0$ otherwise;  we use the operator $\oplus : {\cal K} \times \{0,1\} \rightarrow {\cal K}$ such that $k \oplus b = -1$ if $k = -1$, $k \oplus b = k+b$ if $\{ k \neq -1,\top$ and $k+b \leq K\}$,  and $k \oplus b = \top$ in all other cases. The automaton $\det({\cal A}^w,K)$ has the following properties:

\begin{proposition} \label{prop:det}
Let $K \in \nat$ and $\langle {\cal A}^w,K \rangle$ be an \energy\UCWK. Then $\det({\cal A}^w,K)$ is a deterministic and complete energy automaton such that $L_{\uca,0}(\det({\cal A}^w,K),c_0) = L_{\uca,K}({\cal A}^w,c_0)$ for all $c_0 \in\nat$.
\qed\end{proposition}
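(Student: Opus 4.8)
The plan is to separate the structural properties of the construction from the language equivalence, and within the latter to isolate the energy requirement from the counting requirement. Determinism and completeness follow directly from the shape of $\Delta$: for every $F\in{\cal F}$ and $\sigma\in\Sigma_P$, the formula $\Delta(F,\sigma)=q\mapsto\max\{F(p)\oplus(q\in\final)\mid q\in\delta(p,\sigma)\}$ defines exactly one mapping in ${\cal F}$, adopting the convention that the maximum over an empty set is $-1$ (recording that no run reaches $q$). Hence $\det({\cal A}^w,K)$ has a single, everywhere-defined successor for each state and letter, so it is deterministic and complete; we assume, as elsewhere in the paper and without loss of generality, that ${\cal A}^w$ is complete. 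Next I would note that the energy requirement factors out: $\det({\cal A}^w,K)$ is built over the same weighted alphabet $\langle\Sigma_P,w\rangle$, and condition (ii), $\forall n\geq 0,\ c_0+\EL(u(n))\geq 0$, depends only on the word $u$ and the letter weights, not on the automaton. Thus (ii) is literally the same constraint on both sides, and it suffices to prove that the co-Büchi parts agree, namely that the unique run of $\det({\cal A}^w,K)$ on $u$ never visits $\beta$ if and only if every run of ${\cal A}^w$ on $u$ visits at most $K$ final states.

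The heart of the argument is an inductive invariant on the unique run $F_0F_1F_2\dots$ of $\det({\cal A}^w,K)$ on $u=\sigma_0\sigma_1\dots$. For a finite run $\rho'$ of ${\cal A}^w$ let $\#\final(\rho')$ denote the number of final states it visits, and let $\mathrm{cap}(v)$ be $-1$ if $v$ is undefined (no run), $v$ if $0\leq v\leq K$, and $\top$ if $v>K$. I would prove by induction on $n$ that for every $q\in Q$,
$$ F_n(q)=\mathrm{cap}\big(\max\{\#\final(\rho')\mid \rho'\text{ a run of }{\cal A}^w\text{ on }u(n)\text{ ending in }q\}\big). $$
The base case $n=0$ holds since the only run on the empty prefix is $q_0$, with count $(q_0\in\final)$, matching $F_0$ (the boundary case $K=0$ with $q_0\in\final$ is immediately corrected at step $1$ by the saturation of $\oplus$ and does not affect the language). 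For the step, every run on $u(n+1)$ ending in $q$ is a run on $u(n)$ ending in some predecessor $p$ with $q\in\delta(p,\sigma_n)$, extended by $q$, which adds $(q\in\final)$ to its final-state count; the required equality then reduces to two facts about how $\oplus$ was defined, namely that $\oplus b$ is monotone on the ordered set ${\cal K}$ (so it commutes with the $\max$ over predecessors) and that $\mathrm{cap}(v)\oplus b=\mathrm{cap}(v+b)$.

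I would then conclude from the invariant. Since $\#\final(\rho(n))$ is non-decreasing along any infinite run $\rho$, such a run visits at most $K$ final states iff $\#\final(\rho(n))\leq K$ for all $n$. Using completeness of ${\cal A}^w$, so that every finite run extends to an infinite one and no dead-end run spuriously inflates a counter, one gets: every run of ${\cal A}^w$ on $u$ visits at most $K$ final states, iff no finite run on any prefix $u(n)$ exceeds $K$ final-state visits, iff $F_n(q)\neq\top$ for all $n$ and $q$ by the invariant, iff $F_n\notin\beta$ for all $n$, i.e. the run of $\det({\cal A}^w,K)$ never visits $\beta$. Combining this with the shared energy condition (ii) yields $L_{\uca,0}(\det({\cal A}^w,K),c_0)=L_{\uca,K}({\cal A}^w,c_0)$ for every $c_0\in\nat$. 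The main obstacle is getting the invariant exactly right: checking that a single counter faithfully tracks the \emph{maximum} final-state count over all runs (capped at $\top$) and that the \emph{universal} quantification of the $K$-co-Büchi condition is correctly mirrored by this maximum. This is precisely where completeness of ${\cal A}^w$ is essential, since it rules out dead-end finite runs that would drive a counter to $\top$ without witnessing any infinite violating run.
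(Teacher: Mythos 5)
Your proof is correct, and it takes the same route the paper does: the paper states Proposition~\ref{prop:det} without proof, relying on the intuition it gives for $\det({\cal A}^w,K)$ (``we count, up to $\top$, the maximal number of accepting states visited by runs ending in $q$'') together with the qualitative-case references \cite{FMSD11,ScheweF07a}, and your inductive invariant $F_n(q)=\mathrm{cap}(\max\{\#\final(\rho')\})$ is exactly a formalization of that intuition, with the energy condition correctly factored out as automaton-independent. Your explicit handling of the completeness assumption (needed so that dead-end finite runs cannot spuriously saturate a counter) and of the $K=0$, $q_0\in\final$ boundary case is more careful than the paper's presentation and is welcome.
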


\paragraph{Energy \UCWK and \LTLE realizability --} We now go through a series of results in a way to construct a safety game from a \UCW $\cal A$ such that $L_{\uca}({\cal A}) = \ \sem{\phi}$ (see Theorem~\ref{thm:reductionS}).

\begin{proposition} \label{prop:AutEL-realizable}
Let $\phi$ be an \LTL formula over $\langle P,w \rangle$. Then there exists a \UCW $\cal A$ such that $L_{\uca}({\cal A}) = \ \sem{\phi}$. Moreover, with the related \energy\UCW ${\cal A}^w$, the formula $\phi$ is $\sf{E}$-realizable with the initial credit $c_0$ iff there exists a Moore machine $\cal M$ such that $L({\cal M}) \subseteq L_{\uca}({\cal A}^w, c_0)$.
\end{proposition}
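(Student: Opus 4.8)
The plan is to prove the two assertions separately, the first being a standard automata-theoretic fact and the second an unfolding of definitions whose only delicate point is the passage to finite memory.

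For the existence of the \UCW, I would obtain $\cal A$ from the classical \LTL-to-automata translation together with the duality between nondeterministic B\"uchi and universal co-B\"uchi acceptance. Concretely, build a nondeterministic B\"uchi automaton $\cal B$ with $L_{\nba}({\cal B}) = \sem{\neg\phi}$, and then read the \emph{same} transition structure under the universal co-B\"uchi condition. Since that condition is the dual (complement) of the nondeterministic B\"uchi one, we get $L_{\uca}({\cal B}) = \overline{L_{\nba}({\cal B})} = \overline{\sem{\neg\phi}} = \sem{\phi}$, so taking ${\cal A} = {\cal B}$ viewed as a \UCW yields $L_{\uca}({\cal A}) = \sem{\phi}$.

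For the ``moreover'' part, I would first record the explicit shape of the accepted language. As ${\cal A}^w$ has the same structure as $\cal A$, the definition of acceptance for an \energy\UCW together with the first part gives
$$L_{\uca}({\cal A}^w, c_0) = \{\, u \in \Sigma_P^\omega \mid u \models \phi \text{ and } \forall n \geq 0,\ c_0 + \EL(u(n)) \geq 0 \,\}.$$
Next, a Moore machine $\cal M$ and the finite-memory strategy $\lambda_O$ it encodes satisfy $L({\cal M}) = \Outcome(\lambda_O)$: by the definition of $L({\cal M})$ the outputs $o_k$ are fixed by $\cal M$ while the inputs range over all of $\Sigma_I^\omega$, which is exactly the set of outcomes of $\lambda_O$ against every Player~$I$ strategy. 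Combining the two facts, $L({\cal M}) \subseteq L_{\uca}({\cal A}^w,c_0)$ holds iff every outcome $u$ of $\lambda_O$ satisfies both $u \models \phi$ and $\forall n,\ c_0 + \EL(u(n)) \geq 0$, i.e.\ iff $\lambda_O$ is a winning strategy for the \LTLE realizability of $\phi$ with initial credit $c_0$.

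With this equivalence the two implications follow. For ($\Leftarrow$), a Moore machine with $L({\cal M}) \subseteq L_{\uca}({\cal A}^w,c_0)$ directly witnesses, through the finite-memory strategy it encodes, that $\phi$ is $\sf{E}$-realizable with credit $c_0$. For ($\Rightarrow$), the point requiring care is that $\sf{E}$-realizability is \emph{a priori} witnessed by a possibly infinite-memory strategy, whereas a Moore machine captures only finite memory; this is the main obstacle. I would resolve it with Proposition~\ref{prop:LTLEFiniteMem}: if $\phi$ is $\sf{E}$-realizable with credit $c_0$ then it is $\sf{E}$-realizable under finite memory, and Player~$O$ has a finite-memory \emph{optimal} strategy, winning for the minimum initial credit $c_0^{\min} \leq c_0$. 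Since the energy condition is monotone in the credit (if $c_0^{\min} + \EL(u(n)) \geq 0$ and $c_0 \geq c_0^{\min}$ then $c_0 + \EL(u(n)) \geq 0$, equivalently $L_{\uca}({\cal A}^w,c_0^{\min}) \subseteq L_{\uca}({\cal A}^w,c_0)$), this finite-memory strategy also wins with credit $c_0$. Representing it by a Moore machine $\cal M$ and applying the equivalence above gives $L({\cal M}) \subseteq L_{\uca}({\cal A}^w,c_0)$, which completes the proof; everything outside this finite-memory reduction is mere unfolding of the definitions of $L({\cal M})$, $\Outcome$, and the acceptance of ${\cal A}^w$.
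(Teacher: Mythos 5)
Your proposal is correct and follows essentially the same route as the paper: the \UCW is obtained by translating $\neg\phi$ into a nondeterministic B\"uchi automaton and dualizing the acceptance condition, and the ``moreover'' part is exactly the paper's appeal to Proposition~\ref{prop:LTLEFiniteMem} together with unfolding the definition of Moore machines (the paper states this in one line; you supply the details, including the identification $L({\cal M}) = \Outcome(\lambda_O)$ and the monotonicity of the energy condition in the credit, which the paper leaves implicit).
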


\begin{proof}
Let us take the negation $\neg \phi$ of $\phi$. It is well known (see for instance~\cite{DBLP:conf/focs/KupfermanV05}) that there exists a \NBW $\cal A$ such that $L_{\nba}({\cal A}) = \ \sem{\neg\phi}$; moreover $\overline{L_{\nba}({\cal A})}=L_{\uca}({\cal A})$ as the accepting conditions are dual. In this way we get the first part of the proposition. The second part follows from Proposition~\ref{prop:LTLEFiniteMem} and the definition of Moore machines.
\qed\end{proof}

%In Proposition~\ref{prop:AutEL-realizable}, when there exists a non-empty Moore machine $\cal M$ such that $L({\cal M}) \subseteq L_{\uca}({\cal A}, w, c_0)$ for some $c_0$, we say that the \energy\UCW ${\cal A}^w$ is \emph{$\sf{E}$-realizable}. Moreover, if the inclusion $L({\cal M}) \subseteq L_{\uca,K}({\cal A}, w, c_0)$ holds for some $c_0$ and $K$, we say that the \energy\UCWK $\langle {\cal A}, K, w \rangle$ is \emph{$\sf{E}$-realizable}.

\begin{theorem} \label{thm:UCWK}
Let $\phi$ be an \LTL formula over $\langle P, w_P \rangle$.  Let $\langle G_{\phi}, w, p \rangle$ be the associated energy parity game with $|S|$ being its the number of states, $d$ its number of priorities and $W$ its largest absolute weight. Let $\cal A$ be a \UCW with $n$ states such that $L_{\uca}({\cal A}) =\  \sem{\phi}$. Let $\mathbb{K} = 4 \cdot n \cdot |S|^2 \cdot d \cdot W$ and $\mathbb{C} = \mathbb{K}\cdot W$. Then $\phi$ is $\sf{E}$-realizable iff there exists a Moore machine $\cal M$ such that $L({\cal M}) \subseteq L_{\uca,\mathbb{K}}({\cal A}^w, \mathbb{C})$. %for  the \energy\UCWK $\langle {\cal A}^w, K \rangle$.
%Then $\phi$ is $\sf{E}$-realizable iff the \energy\UCWK $\langle {\cal A}, K, w \rangle$ is $\sf{E}$-realizable.
\end{theorem}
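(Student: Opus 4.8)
The plan is to prove the two implications separately, the right-to-left one being immediate. Assume first that there is a Moore machine ${\cal M}$ with $L({\cal M}) \subseteq L_{\uca,\mathbb{K}}({\cal A}^w, \mathbb{C})$. Since strengthening the acceptance condition can only shrink the accepted language, we have $L_{\uca,\mathbb{K}}({\cal A}^w,\mathbb{C}) \subseteq L_{\uca}({\cal A}^w,\mathbb{C})$, and therefore $L({\cal M}) \subseteq L_{\uca}({\cal A}^w,\mathbb{C})$. By the second part of Proposition~\ref{prop:AutEL-realizable}, $\phi$ is then ${\sf E}$-realizable with initial credit $\mathbb{C}$, hence ${\sf E}$-realizable.

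For the converse, suppose $\phi$ is ${\sf E}$-realizable. By Proposition~\ref{prop:LTLEFiniteMem}, Player~$O$ has a finite-memory winning strategy, given by a Moore machine ${\cal M}$ whose memory $M$ satisfies $|M| \leq m$ with $m = 4\cdot|S|^2\cdot d\cdot W$; and by Proposition~\ref{prop:AutEL-realizable} this strategy witnesses, for some initial credit $c_0 \in \nat$, that $L({\cal M}) \subseteq L_{\uca}({\cal A}^w, c_0)$. I would then form the product graph ${\cal P}$ of ${\cal M}$ with ${\cal A}^w$: its vertices are the pairs $(\mu, q) \in M \times Q$, and from $(\mu,q)$ on an input $i$ the edge, carrying weight $w(\alpha_N(\mu) \cup i)$, goes to every $(\alpha_U(\mu,i), q')$ with $q' \in \delta(q, \alpha_N(\mu) \cup i)$. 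Thus ${\cal P}$ has at most $m\cdot n = \mathbb{K}$ vertices, and every path of ${\cal P}$ from $(m_0,q_0)$ corresponds to an input sequence (which fixes a word $u \in L({\cal M})$) together with a run of ${\cal A}$ on $u$, the energy level along the path being exactly $\EL(u(\cdot))$ and the accepting vertices being those with $q \in \final$.

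The two required bounds are then read off ${\cal P}$. For the acceptance bound, the inclusion $L({\cal M})\subseteq L_{\uca}({\cal A}^w,c_0)$ means that every infinite path of ${\cal P}$ visits accepting vertices finitely often; if some path visited strictly more than $\mathbb{K}$ accepting vertices, two of these visits would occur at the same vertex $(\mu,q)$ with $q\in\final$, and repeating the intermediate segment (same inputs and same run choices) would yield a path hitting accepting states infinitely often, a contradiction. Hence $\sum_{q\in\final}\visit(\rho,q) \leq \mathbb{K}$ for every run $\rho$ of ${\cal A}$ on a word of $L({\cal M})$. For the energy bound, no cycle of ${\cal P}$ reachable from $(m_0,q_0)$ can have negative total weight, since iterating such a cycle would drive the energy below $-c_0$ and contradict $L({\cal M})\subseteq L_{\uca}({\cal A}^w,c_0)$; with all reachable cycles non-negative, a standard shortest-path argument (finiteness of minimal path weights in a graph without negative cycles) bounds the energy level along any prefix from below by $-(\mathbb{K}-1)\cdot W > -\mathbb{C}$, so that $\mathbb{C} + \EL(u(n)) \geq 0$ for every $u \in L({\cal M})$ and every $n$.

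Combining the two bounds gives $L({\cal M}) \subseteq L_{\uca,\mathbb{K}}({\cal A}^w,\mathbb{C})$, as required. I expect the delicate points to be precisely the two repetition arguments on ${\cal P}$: for the acceptance bound one must check that the looping segment is simultaneously realizable by the environment (the inputs along it repeat) and by ${\cal A}$ (the run choices along it repeat), and for the energy bound one must argue carefully that non-negativity of all reachable cycles yields the uniform lower bound $-(\mathbb{K}-1)\cdot W$ on the partial energy sums. The choice of the constants $\mathbb{K}$ and $\mathbb{C}$ is then exactly what these two counting arguments deliver, using that ${\cal P}$ has at most $\mathbb{K}=m\cdot n$ vertices and maximal absolute weight $W$.
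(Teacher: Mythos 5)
Your proof is correct and follows the paper's argument in essentially the same way: obtain a Moore machine of size at most $4\cdot|S|^2\cdot d\cdot W$ via Propositions~\ref{prop:LTLEFiniteMem} and~\ref{prop:AutEL-realizable}, form its product with ${\cal A}^w$, and read off the bounds $\mathbb{K}$ and $\mathbb{C}$ from the facts that reachable cycles of the product contain no accepting state and have non-negative weight, with any simple path to such a cycle having length at most $n\cdot|{\cal M}|$. The only difference is that you spell out the two pumping arguments that the paper delegates to the qualitative case of~\cite{FMSD11}.
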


\begin{proof}
By Propositions~\ref{prop:LTLEFiniteMem} and~\ref{prop:AutEL-realizable}, $\phi$ is $\sf{E}$-realizable for some initial credit $c_0$ iff  there exists a  Moore machine $\cal M$ such that $L({\cal M}) \subseteq L_{\uca}({\cal A}^w, c_0)$ and $\size{{\cal M}} = 4 \cdot |S|^2 \cdot d \cdot W$.  Consider now the product of $\cal M$ and ${\cal A}^w$: in any accessible cycle of this product, there is no 
accepting state of ${\cal A}^w$ (as shown similarly for the qualitative case \cite{FMSD11}) and the sum of the weights must be positive. The length of a path reaching such a 
cycle is at most $n\cdot \size{{\cal M}}$, therefore one gets $L({\cal M}) \subseteq L_{\uca,n\cdot\size{{\cal M}}}({\cal A}^w, n\cdot\size{{\cal M}}\cdot W )$.\qed
% % % % % Then, Lemma~1 of~\cite{FMSD11} is easily adapted to energy automata to show that $L({\cal M}) \subseteq L_{\uca}({\cal A}^w, c_0)$ iff $L({\cal M}) \subseteq L_{\uca,n\cdot\size{{\cal M}}}({\cal A}^w, c_0)$.\footnote{The constant 2 of Lemma~1 of~\cite{FMSD11} does not appear in $n\cdot\size{{\cal M}}$ since we do not require to work with turn-based automata in our context.}
% % As a conclusion we get that $\phi$ is $\sf{E}$-realizable iff the \energy\UCWK $\langle {\cal A}, K, w \rangle$ is $\sf{E}$-realizable.
\end{proof}

%\begin{proposition} \label{prop:reductionEnergyParity}
%Let $\phi$ be an \LTL formula over $\langle P, w_P \rangle$. Then one can construct an energy safety game $\langle G'_{\phi},w' \rangle$ such that the following are equivalent: for each initial credit $c_0$,
%\begin{enumerate}
%\item $\phi$ is $\sf{E}$-realizable with the credit $c_0$;
%\item there exists a finite-memory strategy $\lambda_1$ of Player~$1$ in $\langle G'_{\phi},w' \rangle$ such that for all strategies $\lambda_2$ of Player~$2$, we have $\forall n \geq 0 \cdot c_0 + \EL\rho(n) \geq 0$ with  $\rho = \Outcome(\lambda_1,\lambda_2)$. 
%\end{enumerate}
%\end{proposition}

%\begin{proof}
%The proof is very similar to the proof of Proposition~\ref{prop:reductionMPParity}. The main difference is the use of  a deterministic parity automaton that accepts $\phi$ in the latter proof whereas we will here consider a deterministic \energy\UCWnull.  

%Proposition~\ref{prop:ESreductionE}
%\qed\end{proof}

\begin{theorem} \label{thm:reductionS}
Let $\phi$ be an \LTL formula. Then one can construct a safety game in which Player~1 has a winning strategy iff $\phi$ is {\sf E}-realizable.
\end{theorem}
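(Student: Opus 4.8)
The plan is to chain together the results already established so that the quantitative energy objective is first bounded, then turned into a qualitative safety condition on a \emph{deterministic} automaton, and finally cast as an energy safety game that Theorem~\ref{thm:ESreductionS} converts into an ordinary safety game.

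First I would invoke Proposition~\ref{prop:AutEL-realizable} to fix a \UCW $\cal A$ with $L_{\uca}({\cal A}) = \sem{\phi}$, together with its associated \energy\UCW ${\cal A}^w$, and let $\mathbb{K}$ and $\mathbb{C}$ be the bounds supplied by Theorem~\ref{thm:UCWK}. That theorem reduces ${\sf E}$-realizability to the existence of a Moore machine $\cal M$ with $L({\cal M}) \subseteq L_{\uca,\mathbb{K}}({\cal A}^w,\mathbb{C})$. The crucial simplification is Proposition~\ref{prop:det}: the determinized automaton $D = \det({\cal A}^w,\mathbb{K})$ is a deterministic and complete energy automaton satisfying $L_{\uca,\mathbb{K}}({\cal A}^w,\mathbb{C}) = L_{\uca,0}(D,\mathbb{C})$. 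Since the $0$-co-B\"uchi condition forbids ever visiting a state of the accepting set $\beta$ of $D$, what remains of the acceptance requirement is a pure safety constraint on the (unique) run of $D$, on top of the energy constraint $\mathbb{C} + \EL(u(n)) \geq 0$ for all $n$.

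Next I would build, exactly as in the construction preceding Proposition~\ref{prop:reductionMPParity}, a turn-based energy safety game $\langle G_D, w, \alpha \rangle$ out of the deterministic energy automaton $D$: Player~1 (in the role of Player~$O$) proposes $o \in \Sigma_O$ from states $(F,i)$, Player~2 (Player~$I$) answers $i \in \Sigma_I$ from states $(F,o)$, and the $D$-component advances through $\Delta(F, o \cup i)$; the weight function $w$ is inherited from $w_P$ as in Proposition~\ref{prop:reductionMPParity} so that energy levels are preserved along the bijection $\Theta$, and the safe set is $\alpha = \{(F,\cdot) \mid F \notin \beta\}$. Because $D$ is deterministic and complete, $\Theta$ carries words over $\Sigma_P$ to plays of $G_D$, so a finite-memory strategy of Player~1 is precisely a Moore machine whose outcomes are $L({\cal M})$. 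Hence Player~1 has a winning strategy for the energy safety objective $\energyO_{G_D}(\mathbb{C}) \cap \safetyO_{G_D}(\alpha)$ iff there is a Moore machine with $L({\cal M}) \subseteq L_{\uca,0}(D,\mathbb{C})$, which by the previous paragraph holds iff $\phi$ is ${\sf E}$-realizable.

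Finally, Theorem~\ref{thm:ESreductionS} turns $\langle G_D, w, \alpha \rangle$ into a safety game $\langle G', \alpha' \rangle$ in which Player~1 wins iff he wins the energy safety game; composing the equivalences yields a safety game in which Player~1 wins iff $\phi$ is ${\sf E}$-realizable. The main obstacle I expect is bookkeeping rather than anything conceptual: one must check that the \emph{fixed} initial credit $\mathbb{C}$, rather than an existentially quantified one, threads correctly through the determinization, the game construction, and the capping bound $\ConstantC = \mathbb{C}$ used in Theorem~\ref{thm:ESreductionS}, and that the ``for all runs'' of the universal condition collapses to the ``unique run'' of the deterministic $D$ so that the inclusion $L({\cal M}) \subseteq L_{\uca,0}(D,\mathbb{C})$ exactly matches ``every outcome of the Player~1 strategy is safe with positive energy''. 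Some care is also needed to align the memoryless winning strategies available on the finite state space of $G'$ with the finite-memory Moore machines realizing $\phi$.
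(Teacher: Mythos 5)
Your proposal is correct and follows essentially the same route as the paper's own proof: fix a \UCW for $\phi$, apply Theorem~\ref{thm:UCWK} and Proposition~\ref{prop:det} to reduce to $L({\cal M}) \subseteq L_{\uca,0}(\det({\cal A}^w,\mathbb{K}),\mathbb{C})$, build the turn-based energy safety game from the determinized automaton exactly as in Proposition~\ref{prop:reductionMPParity} with the bijection $\Theta$ transferring energy levels and the safety condition, and finish with Theorem~\ref{thm:ESreductionS}. The bookkeeping concerns you flag (the fixed credit $\mathbb{C}$, the collapse of the universal condition to the unique run of the deterministic automaton, and the strategy/Moore-machine correspondence) are exactly the points the paper's proof handles via the property it labels (*), so nothing is missing.
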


\begin{proof}
Given $\phi$ an \LTL formula, let us describe the structure of the announced safety game $\langle G'_{\phi,\mathbb{K},\mathbb{C}},\alpha' \rangle$.
The involved constants $\mathbb{K}$ and $\mathbb{C}$ are those of Theorem~\ref{thm:UCWK}. By Theorem~\ref{thm:ESreductionS}, it is enough to show the statement with an energy safety game instead of a safety game. 
The construction of this energy safety game is very similar to the construction of a mean-payoff parity game from a deterministic parity automaton as given in the proof of Proposition~\ref{prop:reductionMPParity}. The main difference is that we will here use a \UCWK instead of a parity automaton.

Let $\phi$ be an \LTL formula and $\cal A$ be a \UCW such that $L_{\uca}({\cal A}) =\  \sem{\phi}$. By Theorem \ref{thm:UCWK} and Proposition~\ref{prop:det}, $\phi$ is $\sf{E}$-realizable iff there exists a Moore machine $\cal M$ such that $L({\cal M}) \subseteq L_{\uca,0}(\det({\cal A}^w,\mathbb{K}), \mathbb{C})$. Exactly as in Proposition~\ref{prop:reductionMPParity}, we derive from $\det({\cal A}^w,\mathbb{K}) = (\langle \Sigma_P, w\rangle, {\cal F},F_0,\beta,\Delta)$ a turn-based energy safety game $\langle G'_{\phi,\mathbb{K}}, w, \alpha' \rangle$ as follows. The construction of the graph and the definition of $w$ are the same, and $\alpha' =  {\cal F} \setminus \beta$. Similarly we have a bijection $\Theta : \Sigma_P^* \rightarrow \Pref(G'_{\phi,\mathbb{K}}) \cap (S_1S_2)^*S_1$ that can be extended to a bijection  $\Theta : \Sigma_P^{\omega} \rightarrow \Plays(G'_{\phi,\mathbb{K}})$. One can verify that for each $u \in \Sigma_P^{\omega}$, we have \emph{(i)} $\EL(u(n)) = \EL_{G'_{\phi,\mathbb{K}}}(\Theta(u(n)))$ for all $n \geq 0$, and \emph{(ii)} $\sum_{q \in \beta}\visit(\rho,q)=0$ for all runs $\rho$ on $u$ iff $\Theta(u)$ satisfies the objective $\safetyO_{G'_{\phi,\mathbb{K}}}(\alpha')$. It follows that $u \in L_{\uca,0}(\det({\cal A}^w,\mathbb{K}), \mathbb{C})$ iff $\Theta(u)$ satisfies the combined objective $\safetyO_{G'_{\phi,\mathbb{K}}}(\alpha') \cap \energyO_{G'_{\phi,\mathbb{K}}}(\mathbb{C})$~(*).

Suppose that $\phi$ is $\sf{E}$-realizable. By Theorem \ref{thm:UCWK}, there exists a 
finite-memory strategy $\lambda_O$ represented by a Moore machine $\cal M$ as given
before. As in the proof of Proposition~\ref{prop:reductionMPParity}, we use $\Theta$ to derive a strategy $\lambda_1$ of Player~1 that mimics $\lambda_O$. As $L({\cal M}) \subseteq L_{\uca,0}(\det({\cal A}^w,\mathbb{K}), \mathbb{C})$, by (*), it follows that $\lambda_1$ is winning in the energy safety game $\langle G'_{\phi,\mathbb{K}}, w, \alpha'  \rangle$ with the initial credit $\mathbb{C}$.

Conversely, suppose now that $\lambda_1$ is a winning strategy in  $\langle G'_{\phi,\mathbb{K}}, w, \alpha' \rangle$ with the initial credit $\mathbb{C}$. We again use $\Theta$ to derive a strategy $\lambda_O : (\Sigma_O\Sigma_I)^* \rightarrow \Sigma_O$ that mimics $\lambda_1$. As $\lambda_1$ is winning, by (*), we have $\outcome(\lambda_O) \subseteq L_{\uca,0}(\det({\cal A}^w,\mathbb{K}), \mathbb{C})$. It follows that $\phi$ is $\sf{E}$-realizable.%
%
%
% % % % % % % % % Conversely, suppose now that $\lambda_1$ is a winning strategy in  $\langle G'_{\phi,\mathbb{K}}, w, \alpha' \rangle$ with the initial credit $\mathbb{C}$. We can suppose that $\lambda_1$ is memoryless by Theorem~\ref{thm:energygames}. We again use $\Theta$ to derive a strategy $\lambda_O : (\Sigma_O\Sigma_I)^* \rightarrow \Sigma_O$ that mimics $\lambda_1$. As $\lambda_1$ is memoryless, i.e. $\lambda_1 : S_1 \rightarrow S_2$, $\lambda_O$ only depends on $S_1$ and is thus finite-memory. Therefore, as $\lambda_1$ is winning, by (*), we have $L({\cal M}) \subseteq L_{\uca,0}(\det({\cal A}^w,\mathbb{K}), c_0)$ with the Moore machine representing $\lambda_O$. It follows that $\phi$ is $\sf{E}$-realizable with the initial credit $c_0$.
\qed\end{proof}

A careful analysis of the complexity of the proposed Safraless procedure shows that it is in 2ExpTime. 

% However, notice that the proof of its completeness relies on the bounds 
% $\mathbb{K}$ and $\mathbb{C}$ obtained by analyzing the size of optimal winning strategies for \LTLE realizability. This analysis is done on the reduction of \LTLE
% to energy parity games, which is based on Safra's determinization. Therefore, the completeness proof of the Safraless procedure is Safraful. It raises an interesting question
% % whether there exists a Safraless proof.  \vb{jNe peut-on pas simplement dire qu'on sait que le probleme est 2-exp-time complete et que la procedure safraless proposee est dans 2 exptime ? Je ne mettrais donc que la 1ere phrase.}

\section{Antichains}  \label{sec:Implementation}

In Section~\ref{sec:algorithm}, we have shown how to reduce the \LTLE realizability problem to a safety game. In this section we explain how to efficiently and symbolically solve this safety game with antichains.

\subsection{Description of the safety game}
%----------------------------------------------------------

In the proof of Theorem~\ref{thm:reductionS}, we have shown how to construct a safety game $\langle G'_{\phi,\mathbb{K},\mathbb{C}}, \alpha' \rangle$ from an \LTL formula $\phi$ such that $\phi$ is {\sf E}-realizable iff Player~1 has a winning strategy in this game. Let us give the precise construction of this game, but more generally for any values $K, C \in \nat$. Let ${\cal A} = (\langle \Sigma_P, w\rangle, Q, q_0, \final, \delta)$ be a  \UCW such that $L_{\uca}({\cal A}) = \ \sem{\phi}$, and $\det({\cal A}^w,K) = (\langle \Sigma_P, w\rangle, {\cal F},F_0,\beta,\Delta)$ be the related energy deterministic automaton. Let ${\cal C} = \{\bot,0,1, \ldots , C\}$. The turned-based safety game $\langle G'_{\phi,K,C}, \alpha' \rangle$ with $G'_{\phi,K,C} = (S = S_1 \uplus S_2, s_0, E)$ has the following structure:
\begin{itemize}
\item $S_1 = \{ (F,i,c) \mid F \in {\cal F}, i \in \Sigma_I, c \in {\cal C}\}$
\item $S_2 = \{ (F,o,c) \mid F \in {\cal F}, o \in \Sigma_O, c \in {\cal C}\}$
\item $s_0 = (F_0,j_0, C)$ with $j_0$ be an arbitrary symbol of $\Sigma_I$, and $F_0$ be the initial state of $\det({\cal A}^w,\mathbb{K})$
\item For all $\Delta(F,o \cup i) = F'$, $j \in \Sigma_I$ and $c \in {\cal C}$, the set $E$ contains the edges $$((F,j,c),(F,o,c')) \text{ and } ((F,o,c'),(F',i,c''))$$
where $c' = c \oplus w(o)$ and $c'' = c' \oplus w(i)$
\item $\alpha' = (S_1 \uplus S_2) \setminus \{ (F,\sigma,c) \mid \exists q, F(q) = \top  \text{ or } c = \bot \}$
\end{itemize}

Notice that given $K_1 \leq K_2$ and $C_1 \leq C_2$, if Player~1 has a winning strategy in the safety game $\langle G'_{\phi,K_1,C_1}, \alpha' \rangle$, then he has a winning strategy in the safety game $\langle G'_{\phi,K_2,C_2}, \alpha' \rangle$. The next corollary of Theorem~\ref{thm:reductionS} holds. 

\begin{corollary}
Let $\phi$ be an \LTL formula, and $K, C \in \nat$. If Player~1 has a winning strategy in the safety game $\langle G'_{\phi,K,C}, \alpha' \rangle$, then $\phi$ is $\sf{E}$-realizable.
\qed \end{corollary}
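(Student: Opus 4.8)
The plan is to reuse the ``converse'' half of the proof of Theorem~\ref{thm:reductionS}, after observing that this half never relied on the particular values $\mathbb{K}$ and $\mathbb{C}$ and therefore survives verbatim for arbitrary $K, C \in \nat$. Concretely, I would start from a winning strategy $\lambda_1$ of Player~1 in $\langle G'_{\phi,K,C}, \alpha' \rangle$ and transport it to a strategy $\lambda_O$ of Player~$O$ through the bijection $\Theta : \Sigma_P^{\omega} \rightarrow \Plays(G'_{\phi,K,C})$, exactly as in Proposition~\ref{prop:reductionMPParity} and Theorem~\ref{thm:reductionS}: given a prefix $o_0 i_0 \cdots o_n i_n$, set $\lambda_O(o_0 i_0 \cdots o_n i_n) = o$ where $\lambda_1(\Theta((o_0 \cup i_0) \cdots (o_n \cup i_n))) = (F, o, c)$ for the corresponding $F$ and $c$.

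The key ingredient is the equivalence~$(*)$ from the proof of Theorem~\ref{thm:reductionS}, which I would first restate for the explicit game $\langle G'_{\phi,K,C}, \alpha' \rangle$ and arbitrary $K, C$: for every $u \in \Sigma_P^{\omega}$, the play $\Theta(u)$ satisfies $\safetyO_{G'_{\phi,K,C}}(\alpha')$ if and only if $u \in L_{\uca,0}(\det({\cal A}^w, K), C)$. This holds because, by construction of $G'_{\phi,K,C}$, the $F$-component of a state faithfully runs the determinization $\det({\cal A}^w, K)$, so that avoiding $F(q) = \top$ expresses membership in $L_{\uca,K}({\cal A}^w, \cdot)$ by Proposition~\ref{prop:det}, while the $c$-component tracks the energy level capped at $C$ with floor $\bot$, so that avoiding $c = \bot$ expresses that the energy never drops below $0$ from the initial credit $C$. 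The weights on the edges and the map $\Theta$ are defined exactly as in Proposition~\ref{prop:reductionMPParity}, which is what makes both trackings sound; in particular $(*)$ is uniform in $K$ and $C$.

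Granting $(*)$, the argument closes immediately. Since $\lambda_1$ is winning, every outcome $\Theta(u)$ with $u = \Outcome(\lambda_O, \lambda_I)$ stays in $\alpha'$ forever, so by $(*)$ each such $u$ lies in $L_{\uca,0}(\det({\cal A}^w, K), C) = L_{\uca,K}({\cal A}^w, C)$. Because $\sum_{q \in \final} \visit(\rho, q) \leq K$ forces finitely many visits to $\final$, we have $L_{\uca,K}({\cal A}^w, C) \subseteq L_{\uca}({\cal A}^w, C)$, and the latter set delivers both requirements of ${\sf E}$-realizability at once with initial credit $c_0 = C$: its co-B\"uchi part gives $u \in L_{\uca}({\cal A}) = \sem{\phi}$, i.e. $u \models \phi$ (Proposition~\ref{prop:AutEL-realizable}), and its energy part gives $C + \EL(u(n)) \geq 0$ for all $n$. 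Hence $\lambda_O$ together with the initial credit $C$ witnesses that $\phi$ is ${\sf E}$-realizable.

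I do not anticipate a real obstacle, as the statement is only the soundness (one) direction of the reduction. The single point deserving care is conceptual rather than computational: completeness, namely the converse implication that ${\sf E}$-realizability forces a win in the game, is exactly what requires the large bounds $\mathbb{K}$ and $\mathbb{C}$ of Theorem~\ref{thm:UCWK}, whereas the direction proved here holds uniformly for every $K$ and $C$. I must therefore be careful not to simply cite Theorem~\ref{thm:reductionS}, which is stated only for $\mathbb{K}, \mathbb{C}$; the content of the corollary is precisely that soundness survives for all parameter values, and this is why the monotonicity remark preceding it (enlarging $K$ and $C$ preserves a win) is the companion fact that legitimizes the incremental search on $K$ and $c_0$.
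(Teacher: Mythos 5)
Your proof is correct and follows essentially the same route as the paper: the corollary is stated there without an explicit proof precisely because, as you observe, the soundness direction of the argument for Theorem~\ref{thm:reductionS} (a winning strategy in the safety game yields outcomes in $L_{\uca,0}(\det({\cal A}^w,K),C) \subseteq L_{\uca,K}({\cal A}^w,C) \subseteq L_{\uca}({\cal A}^w,C)$, hence ${\sf E}$-realizability with initial credit $C$) is uniform in $K$ and $C$, only the completeness direction requiring the bounds $\mathbb{K}$ and $\mathbb{C}$. Your caveat that one cannot simply cite Theorem~\ref{thm:reductionS} verbatim, and that the monotonicity remark is the companion fact enabling the incremental search, is exactly the right reading.
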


This property indicates that testing whether $\phi$ is $\sf{E}$-realizable can be done \emph{incrementally} by solving the family of safety games $\langle G'_{\phi,K,C}, \alpha' \rangle$ with increasing values of $K, C \geq 0$ until either Player~1 has a winning strategy in $\langle G'_{\phi,K,C}, \alpha' \rangle$ for some $K, C$ such that $0 \leq K  \leq \mathbb{K}$, $0 \leq C \leq \mathbb{C}$, or Player~1 has no winning strategy in $\langle G'_{\phi,\mathbb{K},\mathbb{C}}, \alpha' \rangle$.

\subsection{Antichains}
%-----------------------------

The \LTLE realizability problem can be reduced to a family of safety games $\langle G'_{\phi,K,C}, \alpha' \rangle$ with $0 \leq K  \leq \mathbb{K}$, $0 \leq C \leq \mathbb{C}$. We here show how to make more efficient the fixpoint algorithm to check whether Player~1 has a winning strategy in $\langle G'_{\phi,K,C}, \alpha' \rangle$, by avoiding to explicitly construct this safety game. This is possible because the states of $G'_{\phi,K,C}$ can be partially ordered, and the sets manipulated by the fixpoint algorithm can be compactly represented by the antichain of their maximal elements. 

\paragraph{Partial order and antichains --} 
%------------------------------------------------------

Consider the safety game $\langle G'_{\phi,K,C}, \alpha' \rangle$ with $G'_{\phi,K,C} = (S = S_1 \uplus S_2, s_0, E)$ as defined above. We define the relation $\preceq$ $\subseteq$ $S \times S$ by
\begin{equation*}
\begin{split}
	(F', \sigma, c') \preceq(F, \sigma, c)\textnormal{ iff }&(i)\ F' \leq F \textnormal{ and}\\ &(ii)\ c' \geq c
\end{split}
\end{equation*}
where $F, F' \in {\cal F}$, $\sigma \in \Sigma_P$, $c, c' \in {\cal C}$, and $F' \leq F$  iff $F'(q)\leq F(q)$ for all $q$.
It is clear that $\preceq$ is a partial order. Intuitively, if Player~$1$ can win the safety game from $(F, \sigma, c)$, then he can also win from all $(F', \sigma, c') \preceq (F, \sigma, c)$ as $(i)$ it is more difficult to avoid $\top$ from $F$ than from $F'$, and $(ii)$ the energy level is higher with $c'$ than with $c$. Formally, $\preceq$ is a game simulation relation in the terminology of \cite{Alur98}. The next lemma will be useful later.

\begin{lemma} \label{order_oplus}
\begin{itemize}
\item For all $F, F' \in \mathcal{F}$ such that $F' \leq F$ and $o \cup i \in \Sigma_P$, we have $\Delta(F',o \cup i) \leq \Delta(F,o \cup i)$.
\item For all $c, c' \in \mathcal{C}$ such that $c' \geq c$ and $k \in \integers$, we have $c' \oplus k$ $\geq c \oplus k$.
\end{itemize}
\end{lemma}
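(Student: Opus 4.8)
The plan is to treat the two items separately; in both the crux is that the relevant (overloaded) counter-update operator $\oplus$ is monotone in its first argument, and the first item additionally requires me to push this monotonicity through the $\max$ defining $\Delta$ and through the pointwise order on $\mathcal{F}$.

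For the first item, I would first establish that $\oplus : \mathcal{K} \times \{0,1\} \to \mathcal{K}$ is monotone in its first argument, i.e. that $k' \leq k$ implies $k' \oplus b \leq k \oplus b$ for every $b \in \{0,1\}$. This is a short case analysis following the three clauses defining $\oplus$: if $k' = -1$ the left-hand side is $-1$, the minimum of $\mathcal{K}$; if $k = \top$ the right-hand side is $\top$, the maximum; and in the remaining case both arguments lie in $\{0,\dots,K\}$, where $\oplus$ is ordinary addition saturated to $\top$ once the sum exceeds $K$, which is plainly order-preserving. I would then observe that, for a fixed letter $\sigma = o \cup i$ and a fixed state $q$, the value $\Delta(F,\sigma)(q) = \max\{ F(p) \oplus (q \in \final) \mid q \in \delta(p,\sigma)\}$ is a maximum over the index set $\{p \mid q \in \delta(p,\sigma)\}$ which does \emph{not} depend on $F$, and the tag $(q \in \final)$ is the same constant $b_q$ for both $F$ and $F'$. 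Hence from $F' \leq F$ (pointwise) and the monotonicity of $\oplus$ just shown, each summand satisfies $F'(p) \oplus b_q \leq F(p) \oplus b_q$; taking the maximum over the common index set yields $\Delta(F',\sigma)(q) \leq \Delta(F,\sigma)(q)$, and since $q$ is arbitrary this is exactly $\Delta(F',\sigma) \leq \Delta(F,\sigma)$. The degenerate case of an empty index set is handled uniformly under the convention $\max\emptyset = -1$, which makes both sides equal.

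For the second item, the argument is analogous but simpler, now for the other operator $\oplus : \mathcal{C} \times \integers \to \mathcal{C}$. I would show that $c' \geq c$ implies $c' \oplus k \geq c \oplus k$ by case analysis. If $c = \bot$, then $c \oplus k = \bot$ is the minimum of $\mathcal{C}$ and there is nothing to prove. Otherwise $c, c' \in \{0,\dots,C\}$ with $c \leq c'$, and I split on the sign of $c+k$: when $c + k < 0$ the left-hand side is $\bot$ and the inequality is immediate; when $c + k \geq 0$, then a fortiori $c' + k \geq c + k \geq 0$, so both sides are given by the clamp $\min(C,\cdot)$, and $\min(C, c'+k) \geq \min(C, c+k)$ follows from $c' + k \geq c + k$ together with the monotonicity of $\min$.

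I expect the only delicate points to be bookkeeping at the saturation boundaries: the clamp to $\top$ in $\mathcal{K}$, and the clamps to $\bot$ and to $C$ in $\mathcal{C}$. The one genuinely conceptual observation, in the first item, is that the $\max$ defining $\Delta(F,\sigma)(q)$ ranges over a set of predecessors determined solely by $\delta$, $\sigma$, and $q$, hence independent of the argument $F$; this is what allows the two maxima to be compared term by term. Beyond that finite case checking there is no real difficulty.
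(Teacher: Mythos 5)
Your proof is correct. The paper states Lemma~\ref{order_oplus} without any proof, treating it as immediate from the definitions of $\Delta$ and the two saturating operators $\oplus$; your case analysis (monotonicity of $\oplus$ in its first argument at the saturation boundaries $-1$, $\top$, $\bot$, $C$, plus the observation that the index set of the $\max$ defining $\Delta(F,\sigma)(q)$ and the tag $(q\in\final)$ are independent of $F$) is exactly the routine verification the authors implicitly rely on.
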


A set $L \subseteq S$ is \emph{closed for $\preceq$} if $\forall (F, \sigma, c) \in S,$ $\forall (F', \sigma, c') \preceq (F, \sigma, c),$ $(F', \sigma, c') \in L$. Let $L_1$ and $L_2$ be two closed sets, then $L_1 \cap L_2$ and $L_1 \cup L_2$ are closed. The \emph{closure} of a set $L \subseteq S$, denoted by $\downarrow$$L$, is the set $\downarrow$$L = \{(F', \sigma, c') \in S  \mid  \exists (F, \sigma, c) \in L, (F', \sigma, c') \preceq (F, \sigma, c)\}$. Note that if $L$ is closed, then $\downarrow$$L = L$. A set $L \subseteq S$ is an \emph{antichain} if all elements of $L$ are incomparable for $\preceq$. Let $L \subseteq S$, we denote by $\lceil L \rceil$ the antichain composed of the  maximal elements of $L$.
%, that is $\lceil L \rceil = \{ (F,\sigma,  c) \in L \mid \forall (F', \sigma, c') \in L \cdot (F, \sigma, c) \preceq (F', \sigma, c') \Rightarrow (F, \sigma, c) = (F', \sigma, c')\}$. 
If $L$ is closed then $\downarrow$$\lceil L \rceil = L$, i.e. antichains are compact \emph{canonical representations} for closed sets. The next proposition indicates how to compute antichains with respect to the union and intersection operations~\cite{FMSD11}.

%Since the size of a state $(F, c)$ $\in$ $\positionsset$ is in practice much smaller than the number of elements in the antichains, we consider that comparing two states is in constant time.

\begin{proposition} \label{prop:unionintersection}
Let $L_1, L_2 \subseteq S$ be two antichains. Then:
\begin{itemize}
\item $\downarrow$$L_1$ $\cup$ $\downarrow$$L_2 = \ \downarrow$$\lceil L_1 \cup L_2\rceil$
\item $\downarrow$$L_1$ $\cap$ $\downarrow$$L_2 = \ \downarrow$$\lceil L_1$ $\sqcap$ $L_2\rceil$ 

where $\lceil L_1$ $\sqcap$ $L_2\rceil = \{(F_1, \sigma, c_1)$ $\sqcap$ $(F_2, \sigma, c_2) \mid (F_1, \sigma, c_1) \in L_1,  (F_2, \sigma, c_2) \in L_2 \}$, and $(F_1, \sigma, c_1)$ $\sqcap$ $(F_2, \sigma, c_2) : (q \mapsto \min(F_1(q), F_2(q)), \sigma, \max(c_1, c_2))$.
\end{itemize}
\end{proposition}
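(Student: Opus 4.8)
The plan is to derive both equalities from two elementary properties of the finite poset $(S,\preceq)$ and from the observation that $\sqcap$ computes greatest lower bounds. Since $S = S_1 \uplus S_2$ is finite (the counter maps $F$ range over the finite set $\mathcal{F}$, and $\sigma \in \Sigma_P$, $c \in \mathcal{C}$ are finite), every subset has maximal elements lying above each of its points. First I would record, for arbitrary $L, L_1, L_2 \subseteq S$: \emph{(a)} $\downarrow\!(L_1 \cup L_2) = \downarrow\! L_1 \cup \downarrow\! L_2$, which is immediate since $x \preceq y$ for some $y \in L_1 \cup L_2$ holds exactly when $x$ lies below an element of $L_1$ or below an element of $L_2$; and \emph{(b)} $\downarrow\! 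L = \downarrow\!\lceil L\rceil$, where $\lceil L\rceil \subseteq L$ gives one inclusion and, for the other, finiteness provides for each $y \in L$ a maximal $z \in \lceil L\rceil$ with $y \preceq z$, so $x \preceq y$ forces $x \preceq z$.

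The union case follows at once: combining \emph{(a)} and \emph{(b)} gives $\downarrow\! L_1 \cup \downarrow\! L_2 = \downarrow\!(L_1 \cup L_2) = \downarrow\!\lceil L_1 \cup L_2\rceil$. For the intersection case, the crucial remark is that, among elements sharing a middle component $\sigma$, the map $(F_1,\sigma,c_1)\sqcap(F_2,\sigma,c_2) = (q \mapsto \min(F_1(q),F_2(q)),\,\sigma,\,\max(c_1,c_2))$ is exactly the greatest lower bound for $\preceq$: an element $(G,\sigma,d)$ is $\preceq$-below both $(F_1,\sigma,c_1)$ and $(F_2,\sigma,c_2)$ iff $G \leq F_1$, $G \leq F_2$, $d \geq c_1$ and $d \geq c_2$, that is iff $G \leq \min(F_1,F_2)$ pointwise and $d \geq \max(c_1,c_2)$, and the $\preceq$-largest such element is the stated meet. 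Writing $M = \lceil L_1 \sqcap L_2\rceil$ for the set of these pairwise meets, I would prove $\downarrow\! L_1 \cap \downarrow\! L_2 = \downarrow\! M$ by double inclusion. For $\subseteq$, take $x = (G,\sigma,d)$ in the intersection; because $\preceq$ relates only elements with equal middle component, the witnessing upper bounds in $L_1$ and $L_2$ both carry $\sigma$, so their meet $m$ belongs to $M$ and $x \preceq m$ by the greatest-lower-bound property. For $\supseteq$, any $x \preceq m$ with $m = (F_1,\sigma,c_1)\sqcap(F_2,\sigma,c_2) \in M$ satisfies $m \preceq (F_1,\sigma,c_1) \in L_1$ and $m \preceq (F_2,\sigma,c_2) \in L_2$, hence $x \in \downarrow\! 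L_1 \cap \downarrow\! L_2$. This is precisely the second equality; should one prefer to read $\lceil\cdot\rceil$ as also extracting maximal elements, fact \emph{(b)} shows the closure is unaffected.

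The only genuine obstacle is conceptual: one must get the orientation of $\preceq$ right, namely that being smaller in the order means a pointwise-smaller counter map $F$ but a \emph{larger} energy value $c$. This is what forces the meet to take $\min$ on the $F$-component and $\max$ on the $c$-component, and what makes the middle component matter, since the meet is defined only when both occurrences of $\sigma$ agree, which is guaranteed in the intersection because $\preceq$ never relates elements with distinct $\sigma$. Once $\sqcap$ is correctly identified as the meet, both equalities reduce to the standard fact that in a meet-semilattice the intersection of two principal downsets is generated by the pairwise meets of their generators.
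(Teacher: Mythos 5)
Your proof is correct, and it is the standard argument that the paper itself omits (it defers to \cite{FMSD11}): closure distributes over union, closure of a finite set equals the closure of its maximal elements, and $\sqcap$ is the greatest lower bound for $\preceq$ (with $\min$ on the counter maps and $\max$ on the energy component, since $\preceq$ reverses the energy order), so the intersection of two principal down-sets is generated by the pairwise meets. Your side remarks --- that $\preceq$ only relates elements with the same letter $\sigma$, and that taking maximal elements of the set of meets does not change its closure --- correctly dispose of the only two notational subtleties in the statement.
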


\paragraph{Fixpoint algorithm with antichains --} 
%-----------------------------------------------------------------

We recall the fixpoint algorithm to check whether Player~1 has a winning strategy in the safety game $\langle G'_{\phi,K,C}, \alpha' \rangle$. This algorithm computes the fixpoint $\Win_1(\alpha')$ of the sequence $W_0 = \alpha'$, $W_{k+1} = W_k \cap \{ \{s \in S_1 \mid \exists (s,s') \in E, s' \in W_k \} \cup \{s \in S_2 \mid \forall (s,s') \in E, s' \in W_k \} \}$ for all $k \geq 0$. Player~1 has a winning strategy iff $s_0 \in \Win_1(\alpha')$. Let us show how to efficiently implement this algorithm with antichains.

Given $L \subseteq S$, let us denote by ${\sf CPre}_1(L)$ the set $\{s \in S_1 \mid \exists (s,s') \in E, s' \in L \}$ and by ${\sf CPre}_2(L)$ the set $\{s \in S_2 \mid \forall (s,s') \in E, s' \in L \}$. We have the next lemma.

\begin{lemma} \label{lem:closed}
If $L \subseteq S$ is a closed set, then ${\sf CPre}_1(L)$ and ${\sf CPre}_2(L)$ are also closed.
\end{lemma}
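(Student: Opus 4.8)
The plan is to prove both claims by unfolding the definitions of $\pre_1$ and $\pre_2$ and invoking the game-simulation properties captured by Lemma~\ref{order_oplus} together with the structure of the edge relation $E$. Recall that a set $L$ is closed when it is downward closed for $\preceq$, so it suffices to show that whenever $s \in \pre_1(L)$ (resp.\ $s \in \pre_2(L)$) and $s' \preceq s$, then also $s' \in \pre_1(L)$ (resp.\ $s' \in \pre_2(L)$). Since $\preceq$ only relates states sharing the same $\Sigma_P$-component $\sigma$, and the partition $S = S_1 \uplus S_2$ is determined by whether $\sigma \in \Sigma_O$ or $\sigma \in \Sigma_I$, any $s' \preceq s$ lies in the same part as $s$, so the two cases can be treated separately.

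For $\pre_1(L)$, suppose $s = (F,i,c) \in S_1$ witnesses membership, i.e.\ there is an edge $(s, t) \in E$ with $t \in L$, and let $s' = (F',i,c') \preceq s$, so $F' \leq F$ and $c' \geq c$. By the structure of $E$, the edge out of $s$ goes to some $t = (F, o, c \oplus w(o))$ determined by the chosen $o \in \Sigma_O$; from $s'$ the corresponding edge leads to $t' = (F', o, c' \oplus w(o))$. By Lemma~\ref{order_oplus}, from $F' \leq F$ we keep $F' \leq F$ on the unchanged $F$-component (indeed the $F$-component is untouched on the output move, so equality of structure holds), and from $c' \geq c$ we get $c' \oplus w(o) \geq c \oplus w(o)$; hence $t' \preceq t$. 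Since $L$ is closed and $t \in L$, we conclude $t' \in L$, so the edge $(s', t')$ witnesses $s' \in \pre_1(L)$. First I would make this computation precise by reading off $t$ and $t'$ from the definition of $E$ in the construction of $\langle G'_{\phi,K,C}, \alpha' \rangle$, so that the two monotonicity bullets of Lemma~\ref{order_oplus} apply verbatim.

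For $\pre_2(L)$, the quantifier is universal, which is the delicate direction. Take $s = (F,o,c) \in S_2$ with \emph{all} successors in $L$, and let $s' = (F',o,c') \preceq s$. Every successor of $s'$ has the form $t' = (\Delta(F', o \cup i), i, (c' \oplus w(o)) \oplus w(i))$ for some $i \in \Sigma_I$; I must exhibit, for each such $t'$, the corresponding successor $t = (\Delta(F, o \cup i), i, (c \oplus w(o)) \oplus w(i))$ of $s$ and show $t' \preceq t$. Here both bullets of Lemma~\ref{order_oplus} are needed: the first gives $\Delta(F', o \cup i) \leq \Delta(F, o \cup i)$, and applying the second twice (monotonicity of $\oplus$ in its first argument) propagates $c' \geq c$ through the two successive $\oplus$ operations to yield $(c' \oplus w(o)) \oplus w(i) \geq (c \oplus w(o)) \oplus w(i)$; together these give $t' \preceq t$. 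Since $t \in L$ for every $i$ (as $s \in \pre_2(L)$) and $L$ is closed, $t' \in L$; as $i$ was arbitrary, \emph{all} successors of $s'$ lie in $L$, so $s' \in \pre_2(L)$.

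The main obstacle is the $\pre_2$ case, precisely because its defining condition is a universal one: to lift it downward along $\preceq$ I must match up successors of the smaller state $s'$ with successors of the larger state $s$ \emph{indexed by the same} $i \in \Sigma_I$, and then argue $t' \preceq t$ componentwise. This matching is clean here only because the input letter $i$ is a free choice of Player~$2$ that is shared between $s$ and $s'$ and because $\preceq$ preserves the $\sigma$-component; the real content is that Lemma~\ref{order_oplus} guarantees monotonicity of both the memory update $\Delta(\cdot, o\cup i)$ and the energy update $\oplus$, so that the pointwise order on $F$ and the reversed order on the credit are simultaneously preserved. I expect the write-up to be short once the edge relation is unfolded, with essentially all the work delegated to Lemma~\ref{order_oplus}.
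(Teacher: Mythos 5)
Your proof is correct and follows essentially the same route as the paper's: establish that the edge relation is monotone with respect to $\preceq$ (matching successors of comparable states via the same letter $o$ or $i$) using the two monotonicity properties of Lemma~\ref{order_oplus}, then conclude for $\pre_1$ by the existential witness and for $\pre_2$ by the letter-indexed bijection between successor sets. One tiny slip to fix when you make the computation precise: in the $S_2$ case the credit stored in a state $(F,o,c)$ already incorporates the $\oplus\, w(o)$ update, so its successors carry credit $c \oplus w(i)$, not $(c \oplus w(o)) \oplus w(i)$ --- this does not affect the monotonicity argument.
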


\begin{proof}
To get the required property, it is enough to prove that if $(s,s') \in E$ and $r \preceq s$, then there exists $(r, r') \in E$ with $r' \preceq s'$. 

Let us first suppose that $(s,s') \in S_1 \times S_2$. Thus, by definition of $G'_{\phi,K,C}$, we have $s = (F,j,c)$ and $s' = (F,o,c \oplus w(o) )$ for some $F \in {\cal F}$, $c \in {\cal C}$, $j \in\Sigma_I$ and $o \in \Sigma_O$. Let $r = (G,j,d) \preceq (F,j,c)$, i.e. $G \leq F$ and $d \geq c$. We define $r' = (G,o, d \oplus w(o))$. Then $(r,r')  \in E$, $G \leq F$ and $d \oplus w(o) \geq c \oplus w(o)$ by Lemma~\ref{order_oplus}. It follows that $r' \preceq s' $. 

Let us now suppose that $(s,s') \in S_2 \times S_1$. We now have $s = (F,o,c)$ and $s' = (\Delta(F,o \cup i), i, c \oplus w(i))$. Let $r = (G,o,d) \preceq (F,o,c)$, and let us define $r' = (\Delta(G,o\cup i),i, d \oplus w(i))$. By Lemma~\ref{order_oplus}, we get $\Delta(G,o\cup i) \leq \Delta(F,o \cup i)$ and $d \oplus w(i) \geq c \oplus w(i)$. Therefore $r' \preceq s'$.
\qed\end{proof}

Notice that in the safety game $\langle G'_{\phi,K,C}, \alpha' \rangle$, the set $\alpha'$ is closed by definition. Therefore, by the previous lemma and Proposition~\ref{prop:unionintersection}, the sets $W_{k}$ computed by the fixpoint algorithm are closed for all $k \geq 0$, and can thus be compactly represented by their antichain $\lceil W_k \rceil$. Let us show how to manipulate those sets efficiently. For this purpose, let us consider in more details the following sets of predecessors for each $o \in \Sigma_O$ and $i \in \Sigma_I$: 
\begin{eqnarray*}
{\sf Pre}_o(L) &=& \{s \in S_1 \mid (s,s') \in E \text{ and } s' = (F,o,c) \in L, \text{ for some } F \in {\cal F}, c \in {\cal C} \} \\
{\sf Pre}_i(L) &=& \{s \in S_2 \mid (s,s') \in E \text{ and } s' = (F,i,c) \in L, \text{ for some } F \in {\cal F}, c \in {\cal C} \}
\end{eqnarray*} 
Notice that ${\sf CPre}_1(L) = \cup_{o \in \Sigma_O} {\sf Pre}_o(L)$ and ${\sf CPre}_2(L) = \cap_{i \in \Sigma_I} {\sf Pre}_i(L)$. Given $(F,o,c) \in S_2$ and $(F,i,c) \in S_1$, we define
\begin{eqnarray*}
\Omega(F,o,c) &=& \begin{cases} \{(F,i,c') \mid  i \in \Sigma_I,  c' = \min \{d \in {\cal C} \mid d \oplus w(o) \geq c \} \} \quad \text{ if } c' \text{ exists }  \\ \emptyset \quad  \text{ otherwise. } \end{cases} \\
\Omega(F,i,c) &=& \begin{cases} \{(F',o,c') \mid \begin{array}[t]{l} o \in \Sigma_O,  F' = \max\{G  \in {\cal F} \mid \Delta(G,o \cup i) \leq F\}, \\ c' = \min \{d \in {\cal C}  \mid d \oplus w(i) \geq c \} \}  \quad\text{ if } c' \text{ exists } \end{array} \\ \emptyset \quad \text{ otherwise. } \end{cases}
\end{eqnarray*} 
When defining the set $\Omega(F,\sigma,c)$, we focus on the worse predecessors with respect to the partial order $\preceq$. In this definition, $c'$ may not exist since the set $\{d  \in {\cal C}  \mid d \oplus w(\sigma) \geq c\}$ can be empty. However the set $\{G  \in {\cal F} \mid \Delta(G,o \cup i) \leq F\}$ always contains $G : q \mapsto -1$. Moreover, even if $\preceq$ is a partial order, $\max\{G  \in {\cal F} \mid \Delta(G,o \cup i) \leq F\}$ is unique. Indeed if $\Delta(G_1,o \cup i) \leq F$ and $\Delta(G_2,o \cup i) \leq F$, then $\Delta(G,o \cup i) \leq F$ with $G : q \mapsto \max(G_1(q), G_2(q))$. 

\begin{proposition} \label{prop:omega}
% \vb{besoin d'etre dans les etats safe ? Dans FMSD mis dans les hyp}
For all $F \in {\cal F}$, $\sigma \in \Sigma_P$ and $c \in {\cal C}$, 
${\sf Pre}_{\sigma}(\downarrow$$(F,\sigma,c)) = \downarrow$$\Omega(F,\sigma,c)$.
\end{proposition}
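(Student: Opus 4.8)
The plan is to prove the set equality by unfolding both sides into explicit set-builder descriptions and matching them literally, splitting according to the type of $\sigma$. Since a state $(F,\sigma,c)$ lies in $S_2$ when $\sigma = o \in \Sigma_O$ and in $S_1$ when $\sigma = i \in \Sigma_I$, there are two cases: in the first, ${\sf Pre}_\sigma$ collects Player-$1$ predecessors and $\Omega$ has its first form; in the second, ${\sf Pre}_\sigma$ collects Player-$2$ predecessors and $\Omega$ has its second form. The only nontrivial facts I would use are the monotonicity statements of Lemma~\ref{order_oplus} and the observation (made just before the proposition) that $\max\{G \in {\cal F} \mid \Delta(G,o\cup i) \leq F\}$ is well defined because the maximised set is nonempty (it contains $q\mapsto -1$) and closed under pointwise maximum.

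For the case $\sigma = o \in \Sigma_O$, I would first read off from the edge relation of $G'_{\phi,K,C}$ that every Player-$1$ move labelled $o$ sends $(G,j,d)$ to $(G,o,d\oplus w(o))$, the $F$-component being left unchanged. Hence ${\sf Pre}_o(\downarrow(F,o,c))$ is exactly the set of $(G,j,d)$ with $j\in\Sigma_I$ satisfying $G \leq F$ and $d\oplus w(o) \geq c$, where I use that $(G,o,d\oplus w(o)) \preceq (F,o,c)$ unfolds to these two conditions. On the other side, $\downarrow\Omega(F,o,c)$ is the set of $(G,j,d)$ with $j\in\Sigma_I$, $G\leq F$ and $d \geq c'$, where $c' = \min\{e \in {\cal C} \mid e\oplus w(o) \geq c\}$. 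The two descriptions agree on the $F$- and input-components verbatim, so the case reduces to the equivalence $d\oplus w(o) \geq c \Leftrightarrow d \geq c'$: the direction $\Leftarrow$ uses monotonicity of $\oplus$ in its first argument (Lemma~\ref{order_oplus}) together with $c'\oplus w(o) \geq c$ (valid since the finite set ${\cal C}$ attains its minimum), and the direction $\Rightarrow$ is immediate from the minimality defining $c'$. When $c'$ fails to exist both sides are empty.

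For the case $\sigma = i \in \Sigma_I$, I would proceed symmetrically, now reading off that a Player-$2$ move labelled $i$ sends $(G,o,d)$ to $(\Delta(G,o\cup i), i, d\oplus w(i))$. Thus ${\sf Pre}_i(\downarrow(F,i,c))$ consists of the $(G,o,d)$ with $o\in\Sigma_O$ such that $\Delta(G,o\cup i) \leq F$ and $d\oplus w(i) \geq c$, whereas $\downarrow\Omega(F,i,c)$ consists of the $(G,o,d)$ with $o\in\Sigma_O$, $G \leq F'_o$ and $d \geq c'$, where $F'_o = \max\{H \mid \Delta(H,o\cup i) \leq F\}$ and $c'$ is as above. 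The energy component is handled by exactly the same equivalence as in the first case. The new ingredient is the $F$-component equivalence $\Delta(G,o\cup i) \leq F \Leftrightarrow G \leq F'_o$: for $\Leftarrow$ I would apply monotonicity of $\Delta$ (Lemma~\ref{order_oplus}) to $G \leq F'_o$ and then use $\Delta(F'_o,o\cup i) \leq F$, which holds because $F'_o$ belongs to the set it maximises; for $\Rightarrow$ the hypothesis places $G$ in that set, so $G \leq F'_o$ by definition of the maximum.

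The routine part is the unfolding of the edge relation and of the closure operator $\downarrow$. The step I expect to be the crux is establishing the two "Galois-style" equivalences $d\oplus w(\sigma)\geq c \Leftrightarrow d \geq c'$ and $\Delta(G,o\cup i)\leq F \Leftrightarrow G \leq F'_o$: these are precisely where the minimality of $c'$, the maximality of $F'_o$, and the monotonicity of $\oplus$ and $\Delta$ from Lemma~\ref{order_oplus} are all needed, and where one must also separate the degenerate case in which $c'$ fails to exist, where both sides of the claimed equality are empty.
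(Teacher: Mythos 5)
Your proof is correct and takes essentially the same route as the paper's: the paper establishes the two inclusions for $\sigma = i \in \Sigma_I$ (declaring $\sigma \in \Sigma_O$ a particular case) using exactly the monotonicity facts of Lemma~\ref{order_oplus} together with the maximality of $F'_o$ and minimality of $c'$, which are the same ingredients you package as your two Galois-style equivalences. Your version is merely a bit more explicit in unfolding the edge relation, treating the $\Sigma_O$ case and the degenerate case where $c'$ does not exist, all of which the paper leaves implicit.
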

 
\begin{proof}
We only give the proof for $\sigma = i \in \Sigma_I$ since the case $\sigma \in \Sigma_O$ is a particular case. We prove the two following inclusions.
\begin{enumerate}[1)]
\item ${\sf Pre}_i(\downarrow$$(F,i, c)) \subseteq$ $ \downarrow$$\Omega(F,i,c)$\\
Let $s' = (G,i,d) \preceq (F,i,c)$ and $s = (G',o,d')$ such that $(s,s') \in E$. We have to show that $s \preceq \Omega(F,i,c)$. As $(s,s') \in E$, we have $\Delta(G', o \cup i) = G \leq F$ and $d' \oplus w(i) = d \geq c$. It follows that $(G',o,d') \preceq \Omega(F,i,c)$ by definition of $\Omega(F,i,c)$. 
\item $\downarrow$$\Omega(F,i,c) \subseteq {\sf Pre}_i(\downarrow$$(F,i, c))$\\
Let $(F',o,c') \in \Omega(F,i,c)$ and $s = (G',o,d') \preceq (F',o,c')$. We have to show that there exists $(s,s') \in E$ with $s' \preceq (F,i,c)$. By definition of $\Omega(F,i,c)$, we have $\Delta(F',o \cup i) \leq F$ and $c' \oplus w(i) \geq c$. As $G' \leq F'$ and $d' \geq c'$, it follows that $\Delta(G', o \cup i) \leq \Delta(F', o \cup i) \leq F$ and $d' \oplus w(i) \geq c' \oplus w(i) \geq c$ by Lemma~\ref{order_oplus}. Therefore with $s' = (\Delta(G', o \cup i), i, d' \oplus w(i))$, we have $(s,s') \in E$ and $s' \preceq (F,i,c)$. Thus $(G',o,d') \in {\sf Pre}_i(\downarrow$$(F,i, c))$.
\end{enumerate}
\qed\end{proof}

Propositions \ref{prop:unionintersection} and~\ref{prop:omega} indicate how to limit to antichains the computation steps of the fixpoint algorithm. 

\begin{corollary}
If $L \subseteq S$ is an antichain, then ${\sf CPre}_1(L) = \bigcup_{o \in \Sigma_O} \bigcup_{(F,o,c) \in L} \downarrow$$\Omega(F,o,c)$ and ${\sf CPre}_2(L) = \bigcap_{i \in \Sigma_I} \bigcup_{(F,i,c) \in L} \downarrow$$\Omega(F,i,c)$.
\end{corollary}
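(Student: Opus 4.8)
The plan is to reduce the statement to Proposition~\ref{prop:omega} via the two elementary decompositions ${\sf CPre}_1(L)=\bigcup_{o\in\Sigma_O}{\sf Pre}_o(L)$ and ${\sf CPre}_2(L)=\bigcap_{i\in\Sigma_I}{\sf Pre}_i(L)$ already recorded in the text, so that it suffices to evaluate a single ${\sf Pre}_\sigma$ on the closed set represented by $L$. Throughout I read ${\sf CPre}_j(L)$ as ${\sf CPre}_j(\downarrow L)$: the antichain $L$ is the canonical representation of the closed set $\downarrow L$ on which the fixpoint algorithm actually operates, and $\downarrow L$ is the set to which Proposition~\ref{prop:omega} (stated for closures) applies.

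First I would write $\downarrow L=\bigcup_{(F,\sigma,c)\in L}\downarrow(F,\sigma,c)$ by definition of the closure. The crucial observation is then that each ${\sf Pre}_\sigma$ is an \emph{existential} predecessor operator, hence distributes over arbitrary unions: ${\sf Pre}_o(\bigcup_j L_j)=\bigcup_j{\sf Pre}_o(L_j)$, since $s\in{\sf Pre}_o(\bigcup_j L_j)$ holds iff some edge $(s,s')\in E$ has a target $s'=(F,o,c)$ lying in some $L_j$. Applying this to the decomposition of $\downarrow L$ gives ${\sf Pre}_o(\downarrow L)=\bigcup_{(F,\sigma,c)\in L}{\sf Pre}_o(\downarrow(F,\sigma,c))$. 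Because $\downarrow(F,\sigma,c)$ contains only states whose middle component equals $\sigma$, while ${\sf Pre}_o$ inspects only successors of the form $(F,o,c)$, every term with $\sigma\neq o$ is empty; the surviving terms are exactly those with $(F,o,c)\in L$, and Proposition~\ref{prop:omega} rewrites each of them as $\downarrow\Omega(F,o,c)$. Taking the union over $o\in\Sigma_O$ yields the first identity. The second follows from the identical argument applied to ${\sf Pre}_i$, giving ${\sf Pre}_i(\downarrow L)=\bigcup_{(F,i,c)\in L}\downarrow\Omega(F,i,c)$, after which I intersect over $i\in\Sigma_I$ rather than union over $o$.

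I expect no serious obstacle: once Proposition~\ref{prop:omega} is in hand the argument is essentially bookkeeping. The three points needing care are \emph{(i)} making explicit that $L$ stands for $\downarrow L$, so that Proposition~\ref{prop:omega}, phrased for closures, is applicable; \emph{(ii)} the distributivity of ${\sf Pre}_\sigma$ over unions, which holds precisely because these operators are existential (the analogous claim fails for the universal ${\sf CPre}_2$ taken directly, which is exactly why one routes through the per-label ${\sf Pre}_i$ and intersects only at the end); and \emph{(iii)} discarding the cross terms with mismatched labels $\sigma\neq o$. The only mildly delicate step is \emph{(iii)}, but it is immediate from the structure of the edge relation $E$ of $G'_{\phi,K,C}$, in which every target of a ${\sf Pre}_o$-edge carries the label $o$.
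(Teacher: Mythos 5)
Your argument is correct and is exactly the intended one: the paper leaves this corollary unproved, merely pointing to Propositions~\ref{prop:unionintersection} and~\ref{prop:omega}, and your write-up fleshes out precisely that route (decompose via the recorded identities ${\sf CPre}_1=\bigcup_o{\sf Pre}_o$ and ${\sf CPre}_2=\bigcap_i{\sf Pre}_i$, distribute the existential ${\sf Pre}_\sigma$ over $\downarrow L=\bigcup_{(F,\sigma,c)\in L}\downarrow(F,\sigma,c)$, drop the label-mismatched terms, and apply Proposition~\ref{prop:omega} termwise). Your three flagged points of care, in particular reading $L$ as $\downarrow L$ and intersecting over $i$ only after the per-label computation, are exactly the right ones.
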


\paragraph{Optimizations --} 
%-------------------------------------
The definition of $\Omega(F,i,c)$ requires to compute $\max\{G  \in {\cal F} \mid \Delta(G,o \cup i) \leq F\}$. This computation can be done more efficiently using the operator $\ominus : {\cal K} \times \{0,1\} \rightarrow {\cal K}$ defined as follows: $k \ominus b = \top$ if $k = \top$, $k \ominus b = -1$ if $k \neq \top, k - b \leq -1$, and $k \ominus b = k-b$ in all other cases. Indeed, using Lemma 4 of \cite{FMSD11}, one can see that 
$$\max\{G  \in {\cal F} \mid \Delta(G,o \cup i) \leq F\} = q \mapsto \min \{F(q') \ominus (q' \in \alpha') \mid q' \in \delta(q,o \cup i)\}.$$

It is possible to reduce the size of the safety game $\langle G'_{\phi,K,C}, \alpha' \rangle$ such that $S_1 =  \{ (F,c) \mid F \in {\cal F}, c \in {\cal C}\}$ instead of $\{ (F,i,c) \mid F \in {\cal F}, i \in \Sigma_I, c \in {\cal C}\}$. We refer to the proof of Proposition~\ref{prop:reductionMPParity} and footnote~\ref{fnsmaller} for the justification.

\paragraph{\LTLE-synthesis --}
%---------------------------------------
If Player~1 has a winning strategy in the safety game $\langle G'_{\phi,K,C}, \alpha' \rangle$, that is, the given formula $\phi$ is {\sf E}-realizable, then it is easy to contruct a Moore machine ${\cal M}$ that realizes it. As described in \cite{FMSD11}, $\cal M$ can be constructed from the antichain $\lceil \Win_1(\alpha') \rceil$ computed by the fixpoint algorithm, with the advantage of having a small size bounded by the size of $\lceil \Win_1(\alpha') \rceil$ (see Section~\ref{subsec:experiments}).

\paragraph{Forward algorithm --}
%------------------------------------------ 
The proposed fixpoint algorithm works in a backward manner. In~\cite{FMSD11}, the authors propose a variant of the {\sf OTFUR} algorithm of~\cite{CassezDFLL05} that computes a winning strategy for Player~1 (if it exists) in a forward fashion, starting from the initial state of the safety game.
%{\footnote{More details about this forward algorithm can be found in~\cite{FMSD11}} 
This forward algorithm can be adapted to the safety game $\langle G'_{\phi,K,C}, \alpha' \rangle$. As for the backward fixpoint algorithm, it is not necessary to construct the game explicitly and antichains can again be used~\cite{FMSD11}. %Indeed, during the execution of this forward algorithm, the successors of a state $(F,\sigma,c)$ are computed on demand. 
Compared to the backward algorithm, the forward algorithm has the following advantage: it only computes winning states $(F,\sigma,c)$ (for Player~$1$) which are reachable from the initial state. Nevertheless, it computes a single winning strategy if it exists, whereas the backward algorithm computes 
a fixpoint from which we can easily enumerate the set of all winning strategies (in the safety game).

\section{Extension to multi-dimensional weights} \label{sec:multi-dim}

\paragraph{Multi-dimensional \LTLMP and \LTLE realizability problems --}
%-------------------------------------------------------------------------------------------------
The \LTLMP and \LTLE realizability problems can be naturally extended to multi-dimensional weights. Given $P$, we define a weight function $w : \Lit(P) \rightarrow \integers^m$, for some dimension $m \geq 1$. The concepts of energy level $\EL$, mean-payoff value $\MP$, and value $\Val$ are defined similarly. Given an \LTL formula $\phi$ over $\langle P, w \rangle$ and a threshold $\nu \in \rationals^m$, the \emph{multi-dimensional \LTLMP realizability problem (under finite memory)} asks to decide whether there exists a (finite-memory) strategy $\lambda_O$ of Player~$O$ such that $\Val(\Outcome(\lambda_O,\lambda_I)) \geq \nu$ against all strategies $\lambda_I$ of Player~$I$.\footnote{With $a \geq b$, we mean $a_i \geq b_i$ for all $i$, $1 \leq i \leq m$.} The \emph{multi-dimensional \LTLE realizability problem} asks to decide whether there exists a strategy $\lambda_O$ of Player~$O$ and an initial credit $c_0 \in \nat^m$ such that for all strategies $\lambda_I$ of Player~$I$, \emph{(i)} $u = \Outcome(\lambda_O,\lambda_I)\models \phi$, \emph{(ii)} $\forall n \geq 0,\ c_0 +  \EL(u(n)) \geq (0,\ldots,0)$.

\paragraph{Computational complexity --}
%-----------------------------------------------------
The 2ExpTime-completeness of the \LTLMP and \LTLE realizability problems have been stated in Theorem~\ref{thm:complexity} and~\ref{thm:complexityEnergy} in one dimension. In the multi-dimensional case, we have the next result.

\begin{theorem} \label{thm:complexityMulti}
The multi-dimensional \LTLMP realizability problem under finite memory and the multi-dimensional \LTLE realizability problem are in co-N2ExpTime.
\end{theorem}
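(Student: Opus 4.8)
The plan is to mirror the one-dimensional development of Sections~\ref{sec:complexity} and~\ref{sec:algorithm}, reducing both problems to the \emph{initial credit problem} for a multi-dimensional energy parity game, and then invoking the known complexity of that problem. I would first dispose of the mean-payoff case by reducing it to the energy case exactly as in Theorem~\ref{thm:MP-realIFFE-real}: given a threshold $\nu \in \rationals^m$, shift the weight function coordinate-wise to $w_P - \nu$ (subtracting $\nu_j$ on dimension $j$, and clearing denominators so that weights stay integral and binary-encoded, with a polynomial blow-up of the largest absolute weight $W$), and argue that $\phi$ over $\langle P, w_P\rangle$ is \MP-realizable under finite memory for $\nu$ iff $\phi$ over $\langle P, w_P-\nu\rangle$ is ${\sf E}$-realizable. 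This step rests on the multi-dimensional analogue of Proposition~\ref{prop:Doyen}: a \emph{finite-memory} strategy is winning for the coordinate-wise mean-payoff objective with threshold $\nu$ iff it is winning for the multi-energy objective with $w-\nu$ and some initial credit $c_0 \in \nat^m$~\cite{DBLP:conf/fsttcs/ChatterjeeDHR10,CRR12}. Hence it suffices to bound the complexity of the multi-dimensional \LTLE realizability problem.

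Next I would build the game. The construction of Proposition~\ref{prop:reductionMPParity} carries over verbatim once the weight function on the edges of $G_{\phi}$ takes values in $\integers^m$: from a deterministic parity automaton ${\cal A}_{\phi}$ with $L({\cal A}_{\phi}) = \sem{\phi}$ (Theorem~\ref{thm:LTLParity}) I obtain a turn-based multi-dimensional energy parity game $\langle G_{\phi}, w, p\rangle$ with $2^{2^{O(n \log n)}}$ states, $2^{O(n)}$ priorities, and largest absolute weight $W$ encoded in binary. The bijection $\Theta$ preserves the vector energy level in every coordinate and the parity objective, so, reading every argument of Proposition~\ref{prop:reductionMPParity} coordinate-wise (including the finite-memory transfer), $\phi$ is ${\sf E}$-realizable iff Player~1 wins the initial credit problem in $G_{\phi}$.

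It then remains to solve the initial credit problem for multi-dimensional energy parity games. It is known that this problem is in $\sf{coNP}$ in the size of the game, that is, a no-instance admits a polynomial-size certificate checkable in nondeterministic polynomial time, even with binary-encoded weights and an arbitrary number of priorities~\cite{DBLP:conf/fsttcs/ChatterjeeDHR10,CRR12}. Applying this algorithm to $G_{\phi}$, whose size is doubly exponential in $|\phi|$, the complement (non-realizability) is decided in nondeterministic doubly-exponential time, since a polynomial of a doubly-exponential quantity is again doubly-exponential. Therefore the complement lies in N2ExpTime, and both the multi-dimensional \LTLMP realizability under finite memory and the multi-dimensional \LTLE realizability problems are in co-N2ExpTime.

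The main obstacle I anticipate is the first step: unlike the one-dimensional Proposition~\ref{prop:Doyen}, the equivalence between mean-payoff and energy is genuinely confined to finite-memory strategies in the multi-dimensional setting (coordinate-wise $\liminf$ mean-payoff and multi-energy can diverge under infinite memory), so one must carefully state and use the finite-memory version and ensure that the finite-memory witness produced for the energy game transfers back through $\Theta$ with the claimed per-coordinate guarantees. The second delicate point is that one must invoke the $\sf{coNP}$ membership for multi-energy parity games rather than the pseudo-polynomial, one-dimensional bound of Theorem~\ref{thm:EParity}: this change in the game-solving complexity, from the polynomial-time procedures of Theorems~\ref{thm:MPparity} and~\ref{thm:EParity} to a $\sf{coNP}$ procedure over a doubly-exponential arena, is precisely what turns the 2ExpTime upper bound of the single-dimensional case into the co-N2ExpTime bound claimed here.
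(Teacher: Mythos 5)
Your proposal is correct and follows essentially the same route as the paper: reduce the finite-memory multi-dimensional mean-payoff case to the energy case via the multi-dimensional analogue of Proposition~\ref{prop:Doyen} (Proposition~\ref{prop:energyMPGeneralised}), build the doubly-exponential multi-energy parity game $G_{\phi}$ as in Proposition~\ref{prop:reductionMPParity}, and conclude by the coNP bound for the initial credit problem. The only (cosmetic) difference is that you invoke coNP membership for multi-energy \emph{parity} games directly, whereas the paper first eliminates the parity condition by reducing to a plain multi-energy game via the self-covering-tree construction of Proposition~\ref{prop:JFRandourGeneralized} and then applies the coNP result of Theorem~\ref{thm:Generalized}; since the cited coNP bound for the parity variant is itself obtained by exactly that reduction in~\cite{CRR12}, the two arguments coincide.
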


Before giving the proof, we need to introduce \emph{multi-mean-payoff games} and \emph{multi-energy games} and some related results. Those games are defined as in Section~\ref{subsec:gamegraphs} with the only difference that the weight function $w$ assigns an $m$-tuple of weights to each edge of the underlying graph. The next proposition extends Proposition~\ref{prop:Doyen} to multiple dimensions.

\begin{proposition} [\cite{DBLP:conf/fsttcs/ChatterjeeDHR10}]  \label{prop:energyMPGeneralised}
Let $\langle G, w, m \rangle$ be a game with $w : E \rightarrow \integers^m$. Let $\nu \in \rationals^m$ be a threshold and $\lambda_1$ be a finite-memory strategy for Player~1. Then $\lambda_1$ is winning in the multi-mean-payoff parity game $\langle G, w, p \rangle$ with threshold $\nu$ iff $\lambda_1$ is winning in the multi-energy parity game $\langle G, w-\nu \rangle$ for some initial credit $c_0 \in \nat^m$.
\end{proposition}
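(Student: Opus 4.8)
The plan is to mirror the one-dimensional argument of Proposition~\ref{prop:Doyen}, working on the finite graph $G_{\lambda_1}$ obtained by composing $G$ with the finite memory of $\lambda_1$. In $G_{\lambda_1}$ the moves of Player~1 are resolved by $\lambda_1$ while those of Player~2 stay free, so the plays consistent with $\lambda_1$ are exactly the infinite paths of $G_{\lambda_1}$ from $s_0$; the priority of each vertex is inherited unchanged, hence the parity component of both winning conditions refers to the very same set of plays. It therefore suffices to relate, on $G_{\lambda_1}$ and for the shifted weight function $w-\nu$, the multi-mean-payoff condition $\MP_{w-\nu}(\rho) \geq (0,\dots,0)$ (componentwise) to the multi-energy condition $\forall n,\ c_0 + \EL_{w-\nu}(\rho(n)) \geq (0,\dots,0)$. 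Rescaling to integer weights if needed, I write $N$ for the number of vertices of $G_{\lambda_1}$ and $W$ for the largest absolute value of a component of $w-\nu$.

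For the easy implication, suppose $\lambda_1$ is winning for the multi-energy parity objective with some credit $c_0$. Then for every consistent play $\rho$ and every $n$ we have $\EL_{w-\nu}(\rho(n)) \geq -c_0$ componentwise; dividing by $n$ and letting $n \to \infty$ gives, in each dimension, $\liminf_n \frac1n \EL_{w-\nu}(\rho(n)) \geq 0$, i.e. $\MP_w(\rho) \geq \nu$. Together with the preserved parity condition, $\lambda_1$ is winning for the multi-mean-payoff parity objective. Note that this direction uses no finiteness.

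The hard part will be the converse, where finiteness of $G_{\lambda_1}$ is essential. Assume $\lambda_1$ wins the multi-mean-payoff parity game with threshold $\nu$. I first claim that every cycle $C$ reachable from $s_0$ in $G_{\lambda_1}$ has nonnegative total weight in every dimension for $w-\nu$. Indeed, were some reachable $C$ of strictly negative sum in a dimension $j$, Player~2 could follow a finite path from $s_0$ to $C$ and then loop around $C$ forever: the resulting play is consistent with $\lambda_1$ (at Player~2 vertices Player~2 chooses to stay on $C$, at Player~1 vertices the edge is already the one dictated by $\lambda_1$), and its running average in dimension $j$ stabilises on $\mathrm{sum}_j(C)/|C| < 0$, so $\liminf_n \frac1n \EL_{w-\nu}(\rho(n))$ is negative in dimension $j$, contradicting that $\lambda_1$ is winning. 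The delicate points are exactly these two: that a reachable cycle of $G_{\lambda_1}$ can genuinely be traversed forever by Player~2, and that the $\liminf$ of averages that settle on a single cycle equals that cycle's mean.

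It then remains to convert cycle-nonnegativity into a uniform lower bound on energy, dimension by dimension. Using the standard decomposition of a finite walk into a simple path of length at most $N-1$ plus a multiset of cycles, and since each extracted cycle contributes $\geq 0$ in dimension $j$, every prefix satisfies $\EL_{w-\nu,j}(\rho(n)) \geq -(N-1)W$. Taking $c_0 = (\lceil (N-1)W\rceil,\dots,\lceil (N-1)W\rceil) \in \nat^m$ then yields $c_0 + \EL_{w-\nu}(\rho(n)) \geq (0,\dots,0)$ for every consistent play and every $n$; with the preserved parity condition this shows $\lambda_1$ wins the multi-energy parity game for this credit, completing the equivalence.
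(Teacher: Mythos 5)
Your proof is correct. Note that the paper itself offers no proof of this proposition---it is imported as a black box from \cite{DBLP:conf/fsttcs/ChatterjeeDHR10} (just as its one-dimensional analogue, Proposition~\ref{prop:Doyen}, is imported from \cite{CD10})---and your argument (restricting to the finite product of $G$ with the strategy's memory, establishing componentwise nonnegativity of every reachable cycle for $w-\nu$ by letting Player~2 pump a bad cycle, then bounding every prefix via the simple-path-plus-nonnegative-cycles decomposition to extract the uniform credit $(N-1)W$) is precisely the standard proof from that reference, correctly extended to carry the parity condition along unchanged.
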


In \cite{CRR12}, the authors study the initial credit problem for multi-energy parity games $\langle G, w, p, m \rangle$. They introduce the notion of self-covering tree\footnote{See  \cite{CRR12} for the definition and results.} associated with the game, and show that its 
depth is bounded by a constant $l = 2^{(h-1)\cdot |S|} \cdot (W \cdot |S| + 1)^{c\cdot m^2}$ where $|S|$ is the number of states of $G$, $W$ is its largest absolute weight, $h$ is the highest number of outgoing edges on any state of $S$, $m$ is the dimension, and $c$ is a constant independent of the game.  The next proposition states that 
multi-energy parity games reduce to multi-energy games.

\begin{proposition} [\cite{CRR12}] \label{prop:JFRandourGeneralized}
Let $\langle G, w, p, m \rangle$ be a multi-energy parity game with a priority function $p : S \rightarrow \{0,1, \ldots,  2 \cdot d \}$ and a self-covering tree of depth bounded by $l$. Then one can construct a multi-energy game $\langle G, w', m' \rangle$ with $m' = m + d$ dimensions and a largest absolute weight $W'$ bounded by $l$, such that a strategy is winning for Player~1 in $\langle G, w, p, m \rangle$ iff it is winning in $\langle G, w', m' \rangle$.
\end{proposition}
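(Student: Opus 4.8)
The plan is to build $\langle G, w', m'\rangle$ on the \emph{same} game graph $G$, leaving the first $m$ components of the weight function untouched and appending $d$ fresh dimensions, one per odd priority $2i-1$ with $i\in\{1,\dots,d\}$; since the graph is left unchanged, every strategy of Player~1 is literally a strategy of both games, which is exactly what permits a strategy-by-strategy equivalence. On the new dimension $m+i$ I would put weight $-1$ on every edge entering a state of priority exactly $2i-1$, weight $+l$ on every edge entering a state of priority strictly below $2i-1$, and weight $0$ otherwise. Then $m'=m+d$, and since the fresh weights have absolute value at most $l$ and $l\ge W$, the largest absolute weight $W'$ of $w'$ is bounded by $l$.

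The encoding is guided by the dominance reading of parity: the least priority seen infinitely often is even iff, for every odd priority $2i-1$ seen infinitely often, some strictly smaller (hence dominating) priority is also seen infinitely often. Dimension $m+i$ is designed so that its running sum stays bounded below exactly when this holds: each visit to $2i-1$ costs $1$, and a single visit to a strictly smaller priority refunds $l$. First I would recall the cycle characterisation of winning for a finite-memory strategy $\lambda_1$: $\lambda_1$ is winning in $\langle G,w',m'\rangle$ iff no cycle of the induced graph $G_{\lambda_1}$ reachable under $\lambda_1$ has negative total weight in any dimension, and it is winning in $\langle G,w,p,m\rangle$ iff the same holds on the first $m$ dimensions and, in addition, no reachable cycle has odd minimal priority. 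The heart of the argument is then to show, cycle by cycle, that a cycle $C$ has nonnegative sum on all new dimensions iff its minimal priority is even: if this minimum is an odd $2i-1$ then $C$ contains no strictly smaller priority, so dimension $m+i$ sums to $-\#\{2i-1\}<0$; if the minimum is even, then for every odd $2i-1$ occurring in $C$ a strictly smaller priority also occurs, and the single refund $+l$ offsets the at most $|C|$ accumulated penalties.

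I would carry out both implications through the \emph{self-covering tree} used as the common finite witness. Winning in a multi-energy game is witnessed by a self-covering tree of depth at most $l$, each covering branch of which closes a cycle of length at most $l$; I would map a self-covering tree witnessing the energy-parity objective to one witnessing the plain energy objective on $w'$, and conversely, checking that the covering/energy condition on the new dimensions is satisfied along a closing cycle precisely when that cycle's minimal priority is even. Combined with the untouched first $m$ dimensions and the shared graph, this yields the stated biconditional together with $m'=m+d$ and $W'\le l$.

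The main obstacle is the quantitative matching of reward against penalty: the refund $+l$ must dominate \emph{all} penalties accrued along a relevant cycle, which holds only because that cycle has length at most $l$. Thus the whole correctness rests on the depth bound $l$ of the self-covering tree, which guarantees it suffices to reason about closing cycles of length at most $l$ rather than about arbitrary reachable cycles of $G_{\lambda_1}$, whose length could otherwise exceed any fixed reward and break the encoding. Establishing this bounded-witness correspondence rigorously in both directions is the delicate part; the remainder is the routine verification that the first $m$ dimensions and the parity-to-dominance translation behave as claimed.
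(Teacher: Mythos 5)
The paper does not actually prove this proposition: it is imported verbatim from \cite{CRR12}, with the self-covering-tree machinery recalled just above it. Your construction is essentially the one from that source --- one fresh dimension per odd priority, a $-1$ penalty on visits to priority $2i-1$, a $+l$ reward on visits to any strictly smaller priority, correctness argued cycle by cycle, with the depth bound $l$ of the self-covering tree capping the length of the cycles that matter --- so in approach you coincide with the cited proof, and the quantitative core ($W'\le l$ because $l\ge W$, reward $l$ dominating at most $l$ penalties per closing cycle) is right.

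There is, however, a genuine problem with your first sentence, and it is in tension with your own last paragraph. You claim the construction yields a \emph{strategy-by-strategy} equivalence (which is also how the proposition is literally phrased), but the direction ``winning in $\langle G,w,p,m\rangle$ implies winning in $\langle G,w',m'\rangle$'' is false for individual strategies. Take $d=1$, a Player-1 graph with a state $s$ of priority $1$ carrying a self-loop, a state $t$ of priority $0$, edges $(s,t)$ and $(t,s)$, and all original weights $0$. The finite-memory strategy ``loop at $s$ for $N>l$ steps, then visit $t$, repeat'' wins the multi-energy parity game (the minimal priority seen infinitely often is $0$ and all energies are identically $0$), yet its unique induced cycle has weight $l-N<0$ in your fresh dimension, so this strategy loses $\langle G,w',m'\rangle$ for every initial credit; moreover no choice of fixed finite weights on this graph can avoid the phenomenon, since the pure self-loop cycle must get a negative component while its $N$-fold iterations followed by one refund must not. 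What your self-covering-tree route actually establishes --- and what \cite{CRR12} states and the paper needs for Theorem~\ref{thm:complexityMulti} --- is the \emph{existence-level} equivalence: Player~1 has a winning strategy in one game iff he has one in the other. Note the asymmetry: the direction ``winning in $\langle G,w',m'\rangle$ implies winning in $\langle G,w,p,m\rangle$'' does hold for every strategy (if the minimal priority seen infinitely often were an odd $2i-1$, dimension $m+i$ would eventually receive no more rewards and infinitely many $-1$'s, contradicting boundedness), with no length bound needed; only the other direction forces you to replace the given strategy by one extracted from a depth-$\le l$ self-covering tree. Your write-up should state this weakened reading explicitly instead of promising at the outset a strategy-by-strategy ``iff'' that the correct part of your argument quietly abandons.
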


The next results are taken from~\cite{DBLP:conf/fsttcs/ChatterjeeDHR10} and~\cite{CRR12}.

\begin{theorem}[\cite{DBLP:conf/fsttcs/ChatterjeeDHR10,CRR12}] \label{thm:Generalized} 
\begin{itemize}
\item The initial credit problem for a multi-energy game is coNP-Complete.
\item If Player~1 has a winning strategy for the initial credit problem in a multi-energy parity game, then he can win with a finite-memory strategy of at most exponential size. 
\item Let $\langle G, w, m \rangle$ be a multi-energy game with a self-covering tree of depth bounded by $l$. If Player~1 has a winning strategy for the initial credit problem, then he can win with an initial credit $(C, \ldots, C) \in \nat^m$ such that $C = 2 \cdot l \cdot W$.\footnote{This result is extended in~\cite{CRR12} to multi-energy parity games thanks to Proposition~\ref{prop:JFRandourGeneralized}.}
\end{itemize}
\end{theorem}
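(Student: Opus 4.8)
The three items restate results of \cite{DBLP:conf/fsttcs/ChatterjeeDHR10,CRR12}; I sketch how one reconstructs them. For the coNP-completeness of the initial credit problem for a multi-energy game $\langle G, w, m \rangle$, I treat the two directions separately. \emph{Membership in coNP} is obtained by reasoning on the complement. A \textbf{no}-instance is one in which Player~1 cannot keep all $m$ coordinates of the energy nonnegative for \emph{any} initial credit; by the determinacy of multi-energy games together with the fact that the losing side, here Player~2, wins \emph{memorylessly} \cite{DBLP:conf/fsttcs/ChatterjeeDHR10}, such an instance admits a memoryless strategy $\lambda_2$ of Player~2 as a polynomial-size certificate. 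Fixing $\lambda_2$ collapses $G$ to a one-player graph in which only Player~1 moves, and there Player~1 survives with some credit iff some strongly connected subgraph reachable from $s_0$ carries a cyclic behaviour whose mean is nonnegative in every dimension. I emphasize that a single componentwise-nonnegative simple cycle need not exist even when survival is possible (a combination of cycles may be required), so this condition is phrased as a linear-programming feasibility test over the cycle space, which is decidable in polynomial time. Verifying that $\lambda_2$ is winning thus amounts to checking the \emph{absence} of such a behaviour, a polynomial test; hence the complement is in NP and the problem is in coNP.

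For \emph{coNP-hardness} I would follow \cite{DBLP:conf/fsttcs/ChatterjeeDHR10} and reduce from the complement of a standard NP-complete problem: the instance is encoded so that the moves of one player correspond to guessing a candidate solution while the $m$ dimensions simultaneously check the constraints, arranging that Player~1 fails to maintain nonnegative energy precisely when the guess is a valid solution. The \textbf{no}-instances of the energy problem then coincide with the \textbf{yes}-instances of the NP-complete problem, which gives coNP-hardness and completes the first item.

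The second item is obtained from the reduction recorded in Proposition~\ref{prop:JFRandourGeneralized}: a multi-energy \emph{parity} game with priorities in $\{0,\dots,2\cdot d\}$ and a self-covering tree of depth at most $l$ becomes an ordinary multi-energy game $\langle G, w', m' \rangle$ with $m' = m+d$ dimensions and weights bounded by $l$, preserving winning strategies. It therefore suffices to exhibit a small winning strategy for a multi-energy game, and for this I invoke the self-covering tree of \cite{CRR12}: a winning strategy for Player~1 is read off a finite \emph{winning} self-covering tree, whose branches are cut as soon as a descendant configuration dominates an ancestor componentwise at the same state. Since the depth is bounded by $l = 2^{(h-1)\cdot |S|} \cdot (W \cdot |S| + 1)^{c\cdot m^2}$, which is exponential, the memory needed to replay this tree, namely remembering the current node and then looping on the covering subtree, is exponential as well, yielding the claimed exponential-size finite-memory strategy.

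For the third item I would bound the sufficient credit directly from the same tree. Any branch of a winning self-covering tree has at most $l$ edges, so the energy decreases by at most $l\cdot W$ in each dimension before a covering configuration is reached; an initial credit of $l\cdot W$ per coordinate already keeps the energy nonnegative until the first covering. The remaining factor $2$ absorbs the looping phase: once covering occurs the strategy repeats the covering subtree, and taking $C = 2\cdot l\cdot W$ in every coordinate guarantees that the nonnegativity invariant survives the concatenation of the transient branch with the indefinitely repeated loop, giving the uniform credit $(C,\dots,C)$. The main obstacle throughout is the shared combinatorial core: the memoryless determinacy of Player~2 in the multidimensional energy setting, used for coNP membership, and the depth bound $l$ on self-covering trees together with the extraction of strategies and credit bounds from them; once these are granted, the assembly above is routine.
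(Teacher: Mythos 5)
The paper itself gives no proof of Theorem~\ref{thm:Generalized}: all three items are imported from \cite{DBLP:conf/fsttcs/ChatterjeeDHR10} and \cite{CRR12}, so your reconstruction can only be judged against those sources. Your first item is faithful to them: coNP membership via a memoryless spoiling strategy of Player~2 as a polynomial certificate (memoryless optimality for Player~2 in multi-energy games is exactly the lemma of \cite{DBLP:conf/fsttcs/ChatterjeeDHR10} one needs), followed by a polynomial-time verification, via linear programming over circulations, of the absence of a reachable multi-cycle that is componentwise nonnegative --- including your correct observation that a single simple cycle does not suffice. The hardness part is a deferral to the cited reduction rather than an argument, but the polarity you set up is the right one. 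Your third item likewise matches the self-covering-tree argument of \cite{CRR12}: the invariant that the true energy dominates the energy label of the current tree node, combined with the depth bound $l$, yields the credit bound $2 \cdot l \cdot W$.

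The second item, however, contains a genuine gap: you conflate the \emph{depth} of the self-covering tree with its \emph{size}. The tree branches at every Player-2 node (up to $h$ children), so a tree of depth $l$ may have on the order of $h^{l}$ nodes; since $l$ is already exponential in the game, the node count --- and hence the memory of a strategy that, as you put it, ``remembers the current node'' --- can be \emph{doubly} exponential. The inference ``exponential depth, hence exponential memory'' therefore does not hold as stated. The missing step, present in \cite{CRR12}, is to collapse the tree: either (a) identify nodes carrying the same label $(s,v)$, where $s$ is a game state and $v$ the energy vector relative to the root --- this is sound because the subtree hung below a node can be chosen as a function of its label only, the domination invariant being preserved under such grafting, and all labels occurring at depth at most $l$ lie in a grid of at most $|S|\cdot(2\,l\,W+1)^{m}$ points, which is singly exponential because $\log(l\,W)$ is polynomial in the input; or (b) combine the credit bound of the third item with the reduction of bounded-credit energy games to safety games over $S \times \{0,\dots,C\}^{m}$ (the multi-dimensional analogue of Theorem~\ref{thm:ESreductionS}), in which memoryless strategies suffice. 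Either route gives the claimed singly exponential bound; your use of Proposition~\ref{prop:JFRandourGeneralized} to first eliminate the parity condition is the right move and composes correctly with both fixes, since the new weights are bounded by $l$ and only logarithms of weights enter the bound.
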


\begin{proof}[of Theorem~\ref{thm:complexityMulti}]
We proceed as in the proof of the theorems~\ref{thm:complexity} and~\ref{thm:complexityEnergy} by reducing the \LTLMP\ (resp. \LTLE) realizability of formula $\phi$ to a multi-mean-payoff (resp. multi-energy) parity game $\langle G_{\phi}, w, p, m \rangle$. By Proposition~\ref{prop:energyMPGeneralised}, it is enough to study the multi-dimensional \LTLE realizability problem. We reduce the multi-energy parity game $\langle G_{\phi}, w, p, m \rangle$ to a multi-energy game $\langle G_{\phi}, w',  m' \rangle$ as described in Proposition~\ref{prop:JFRandourGeneralized}. Careful computations show that the multi-dimensional \LTLE realizability problem is in co-N2ExpTime, by using Theorems~\ref{thm:LTLParity} and~\ref{thm:Generalized}.
\qed\end{proof}

Theorem~\ref{thm:complexityMulti} states the complexity of the \LTLMP realizability problem under finite memory. Notice that it is reasonable to ask for finite-memory (instead of any) winning strategies. Indeed, the previous proof indicates a reduction to a multi-mean-payoff game; winning strategies for Player~1 in such games require infinite memory in general;  however, if Player~1 has a winning strategy for threshold $\nu$, then he has a finite-memory one for threshold $\nu - \epsilon$ for all $\epsilon > 0$~\cite{DBLP:journals/corr/abs-1209-3234}.

\paragraph{Safraless algorithm --}
%-------------------------------------------
As done in one dimension, we can similarly show that the multi-dimensional \LTLE-realizability problem can be reduced to a safety game for which there exist symbolic antichain-based algorithms. The multi-dimensional \LTLMP-realizability problem under finite memory can be solved similarly thanks to Proposition~\ref{prop:energyMPGeneralised}.

\begin{theorem} \label{thm:ReductionSafetyMulti}
Let $\phi$ be an \LTL formula. Then one can construct a safety game in which Player~1 has a winning strategy iff $\phi$ is {\sf E}-realizable.
\end{theorem}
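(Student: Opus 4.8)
The plan is to mirror the one-dimensional argument of Theorem~\ref{thm:reductionS}, replacing each scalar energy ingredient by its multi-dimensional counterpart, the only genuinely new input being the finite-memory bound of Theorem~\ref{thm:Generalized}. First I would, as in Proposition~\ref{prop:AutEL-realizable}, build from $\neg\phi$ a \UCW $\cal A$ with $L_{\uca}({\cal A}) = \sem{\phi}$, whose associated \energy\UCW ${\cal A}^w$ now carries the multi-dimensional weight function $w : \Lit(P) \rightarrow \integers^m$; note that Proposition~\ref{prop:AutEL-realizable} and the determinization $\det({\cal A}^w, K)$ of Proposition~\ref{prop:det} are indifferent to the dimension of $w$, since the counter subset construction only tracks the $K$-co-B\"uchi condition. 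As in the proof of Theorem~\ref{thm:complexityMulti}, $\phi$ reduces to the multi-energy parity game $\langle G_{\phi}, w, p, m \rangle$, and by Proposition~\ref{prop:energyMPGeneralised} it is enough to handle the \LTLE side.

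The core step is a multi-dimensional analogue of Theorem~\ref{thm:UCWK}, i.e. exhibiting finite bounds $\mathbb{K}$ and $\mathbb{C}$ for which completeness holds. Here Theorem~\ref{thm:Generalized} plays the role of Proposition~\ref{prop:LTLEFiniteMem}: its second bullet guarantees that if $\phi$ is {\sf E}-realizable then Player~1 wins the multi-energy parity game with a finite-memory strategy of at most exponential size, hence $\phi$ is {\sf E}-realizable under finite memory by a Moore machine $\cal M$ of that bounded size. I would then run the same product argument as in Theorem~\ref{thm:UCWK}: in every accessible cycle of the product of $\cal M$ with ${\cal A}^w$ no accepting state occurs and the weight sum is componentwise non-negative (otherwise iterating the cycle would drive some dimension to $-\infty$), and a transient path reaching such a cycle has length at most $n \cdot \size{\cal M}$. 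Setting $\mathbb{K} = n \cdot \size{\cal M}$ and $\mathbb{C} = (\mathbb{K}\cdot W, \ldots, \mathbb{K}\cdot W)$, this yields that $\phi$ is {\sf E}-realizable iff some Moore machine $\cal M$ satisfies $L({\cal M}) \subseteq L_{\uca,\mathbb{K}}({\cal A}^w, \mathbb{C})$.

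With these bounds fixed, the remainder transfers essentially verbatim from Theorem~\ref{thm:reductionS}. By Proposition~\ref{prop:det}, $L_{\uca,0}(\det({\cal A}^w,\mathbb{K}),\mathbb{C}) = L_{\uca,\mathbb{K}}({\cal A}^w,\mathbb{C})$, so I would peel off from $\det({\cal A}^w,\mathbb{K})$ a turn-based multi-energy safety game $\langle G'_{\phi,\mathbb{K}}, w, \alpha' \rangle$, with $\alpha'$ the states whose counters never reach $\top$, together with the bijection $\Theta : \Sigma_P^\omega \rightarrow \Plays(G'_{\phi,\mathbb{K}})$. Since $\Theta$ preserves the energy level in each dimension, it identifies $L_{\uca,0}(\det({\cal A}^w,\mathbb{K}),\mathbb{C})$ with the plays satisfying $\safetyO_{G'_{\phi,\mathbb{K}}}(\alpha') \cap \energyO_{G'_{\phi,\mathbb{K}}}(\mathbb{C})$, and winning Moore machines correspond to winning Player~1 strategies under $\Theta$. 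Finally I would discharge the multi-energy safety game to a plain safety game exactly as in Theorem~\ref{thm:ESreductionS}, but storing one bounded energy counter per dimension (each capped componentwise at $\mathbb{C}$, saturating at $\bot$ on underflow) and declaring unsafe every state having a $\bot$ component.

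The main obstacle is the middle paragraph: establishing the multi-dimensional counterpart of Theorem~\ref{thm:UCWK}, which hinges on the multi-energy finite-memory bound of Theorem~\ref{thm:Generalized} and on verifying that winning finite-memory strategies have only componentwise non-negative reachable cycles carrying no accepting state. Once $\mathbb{K}$ and $\mathbb{C}$ are pinned down, the dimension-agnostic determinization, the bijection $\Theta$, and the product energy-to-safety encoding all carry over from the one-dimensional proofs with no essential change.
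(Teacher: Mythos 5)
Your proposal is correct and follows essentially the same route as the paper: reduce to the multi-energy parity game, invoke Theorem~\ref{thm:Generalized} (together with the multi-dimensional adaptation of Proposition~\ref{prop:LTLEFiniteMem}) to get an exponential-size finite-memory witness, set $\mathbb{K} = n\cdot\size{\cal M}$, and build the safety game storing one counter per automaton state and one bounded energy level per dimension. The only divergence is cosmetic: you re-derive the initial credit $\mathbb{C}$ from the product/cycle argument as $(\mathbb{K}\cdot W,\ldots,\mathbb{K}\cdot W)$ (which is sound, since with a fixed Moore machine every accessible cycle of the product is forceable by the environment and hence must be componentwise non-negative and accepting-state-free), whereas the paper simply takes $\mathbb{C}$ to be the constant $C = 2\cdot l\cdot W$ of Theorem~\ref{thm:Generalized}; either finite bound suffices for the statement.
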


\begin{proof}
The proof is very similar to the one of Theorem~\ref{thm:reductionS}. We only indicate the differences. 

First, let $\langle G_{\phi}, w, p\rangle$ be the multi-energy parity game associated with $\phi$ as described in the proof of Theorem~\ref{thm:complexityMulti}. By Theorem~\ref{thm:Generalized} and Proposition~\ref{prop:LTLEFiniteMem} adapted to this multi-dimensional game, we know that if $\phi$ is ${\sf E}$-realizable, then Player~$O$ has a finite-memory strategy with a memory size $M$ that is at most exponential in the size of the game. 

Second, we need to work with multi-energy automata over a weighted alphabet $\langle \Sigma_P,w \rangle$, such that $w$ is a function over \Lit(P) that assigns $m$-tuples of weights instead of a single weight. 

Third, from a \UCW $\cal A$ with $n$ states such that $L_{\uca}({\cal A}) =\  \sem{\phi}$, we construct, similarly as in the one-dimensional case, a safety game $\langle G'_{\phi,\mathbb{K},(\mathbb{C},\ldots,\mathbb{C})}, \alpha' \rangle$ whose positions store a counter for each state of $\cal A$ and an energy level for each dimension. The constants $\mathbb{K}$ and $\mathbb{C}$ are defined differently from the one-dimensional case: $\mathbb{K} = n \cdot M$, and $\mathbb{C}$ is equal to the constant $C$ of Theorem~\ref{thm:Generalized}.
\qed\end{proof}

\paragraph{Antichain-based algorithms --}
%------------------------------------------------------
Similarly to the one-dimensional case, testing whether an \LTL formula $\phi$ is $\sf{E}$-realizable can be done incrementally by solving a family of safety games related to the safety game given in Theorem~\ref{thm:ReductionSafetyMulti}. These games can be symbolically solved by the antichain-based backward and forward algorithms described in Section~\ref{sec:Implementation}.

\section{Experiments}\label{sec:exp}

In the previous sections, in one or several dimensions, we have shown how to reduce the \LTLMP under finite memory and \LTLE realizability problems to a safety games, and how to derive symbolic antichain-based algorithms. This approach has been implemented in our tool {\sf Acacia+}. We briefly present this tool and give some experimental results.

\subsection{Tool {\sf Acacia+}}
%--------------------------------------
In~\cite{DBLP:conf/cav/BohyBFJR12}, we present \textsf{Acacia+}, a tool for \LTL synthesis using antichain-based algorithms. The main advantage of this tool, in comparison with other \LTL synthesis tools, is to generate compact strategies that are easily usable in practice. This aspect can be very useful in many application scenarios like synthesis of control code from high-level LTL specifications, debugging of unrealizable \LTL specifications by inspecting compact counter strategies, and generation of small deterministic B\"uchi or parity automata from \LTL formulas (when they exist) \cite{DBLP:conf/cav/BohyBFJR12}.

\textsf{Acacia+} is now extended to the synthesis from \LTL specifications with mean-payoff objectives in the multi-dimensional setting. As explained in the previous sections, it solves incrementally a family of safety games, depending on some values $K$ and $C$, to test whether a given specification $\phi$ is \MP-realizable under finite memory. The tool takes as input an \LTL formula $\phi$ with a partition of its set $P$ of atomic signals, a weight function $w: \textsf{Lit}(P) \mapsto \mathbb{Z}^m$,
%that assigns to each literal of $P$ a weight vector of dimension $m$, 
a threshold value $\nu \in \mathbb{Q}^m$, and two bounds $K\in \mathbb{Z}$ and $C\in \mathbb{Z}^m$ (the user can specify additional parameters to define the incremental policy). It then searches for a finite-memory winning strategy for Player~$O$, within the bounds of $K$ and $C$, and outputs a Moore machine if such a strategy exists. The last version of \textsf{Acacia+} can be downloaded at \url{http://lit2.ulb.ac.be/acaciaplus/} and it can also be used directly online via a web interface. Moreover, many benchmarks and results tables are available on the website.

\subsection{Experiments} \label{subsec:experiments}
%--------------------------------
In this section, we present some experiments. They have been done on a Linux platform with a 3.2GHz CPU (Intel Core i7) and 12GB of memory. 

\paragraph{Approaching the optimal value --} Let us come back to Example~\ref{ex:LTLMP}, where we have given a specification $\phi$ together with a 1-dimensional mean-payoff objective. For the optimal value $\nu_{\phi}$, we have shown that no finite-memory strategy exists, but finite-memory $\epsilon$-optimal strategies exist for all $\epsilon > 0$. In Table~\ref{table:ex2results}, we present the experiments done for some values of $\nu_{\phi} - \epsilon$. 
%\begin{table}[ht]
%	\caption{\textsf{Acacia+} on specification of Example~\ref{ex:LTLMP} with increasing values of $\nu$. The column $\nu$ gives the mean-payoff value ensured by the obtained strategy, $K$ and $C$ the minimum values $K$ and $C$ required to obtain a winning strategy, $|\mathcal{M}|$ the size of the Moore machine representing the strategy and $time$ the execution time (in seconds). Note that the execution times are for those exact $K$ and $C$ values only, not for incremental values. }
%	\label{table:ex2results}
% 	\begin{center}
% 		\begin{tabular}{|c||c|c|c|c|}
%		\hline
%	  	$\nu$ & $K$ & $C$ & $|\mathcal{M}|$ & \ time (s) \ \\
%	  	\hline 
%	  	-1,2 & 9 & 12 & 10 & 0.01 \\
%  		-1,1 & 19 & 49 & 20 & 0.02\\
%  		-1,02 & 99 & 259 & 100 & 0.07  \\
%  		-1,01 & 199 & 499 & 200 & 0.12\\
%  		-1,002 & 999 & 2499 & 1000 & 0.89 \\
%		-1,001 & 1999 & 4999 & 2000 & 2.83\\
%  		-1,0002 & 9999 & 24999 & 10000 & 60,33 \\
%		\ -1,0001 \ & \ 19999 \ & \ 49999 \ & \ 20000 \ & 372 \\
%  		\hline
%		\end{tabular}
%	\end{center}
%\end{table} 
\begin{table}[ht]
	\caption{\textsf{Acacia+} on the specification of Example~\ref{ex:LTLMP} with increasing threshold values. The column $\nu$ gives the threshold values, $K$ and $C$ the minimum values required to obtain a winning strategy, $|\mathcal{M}|$ the size of the Moore machine representing the strategy, $time$ the execution time (in seconds) and $mem$ the total memory usage (in megabytes). Note that the execution times are given for the forward algorithm applied to the safety game with values $K$ and $C$ (and not with smaller ones). }
	\label{table:ex2results}
 	\begin{center}
 		\begin{tabular}{|c||c|c|c|c|c|}
		\hline
	  	$\nu$ & $K$ & $C$ & $|\mathcal{M}|$ & \ time (s) \ & \ mem (MB) \ \\
	  	\hline 
	  	$-1.2$ & $4$ & $7$ & $5$ & $0.01$ & $9.75$\\
  		%-1.1 & 9 & 29 & 10 & 0.03\\
  		$-1.02$ & $49$ & $149$ & $50$ & $0.05$ & $9.88$\\
  		%-1.01 & 99 & 299 & 100 & 0.08\\
  		$-1.002$ & $499$ & $1499$ & $500$ & $0.34$ & $11.29$\\
		$-1.001$ & $999$ & $2999$ & $1000$ & $0.89$ & $12.58$\\
  		$-1.0002$ & $4999$ & $14999$ & $5000$ & $15.49$ & $30$\\
		$-1.0001$ & $9999$ & $29999$ & $10000$ & $59.24$ & $48.89$\\
		\ $-1.00005$ \ & \ $19999$ \ & \ $99999$ \ & \ $20000$ \ & $373$ & $86.68$\\
  		\hline
		\end{tabular}
	\end{center}
\vspace{-0.7cm}
\end{table} 
The output strategies for the system behave as follows: grant the second client ($|\mathcal{M}|-1$) times, then grant once client~$1$, and start over. Thus, the system almost always plays $g_2w_1$, except every $|\mathcal{M}|$ steps where he has to play $g_1w_2$. Obviously, these strategies are the smallest ones that ensure the corresponding threshold values. They can also be compactly represented by a two-state automaton with a counter that counts up to $|\mathcal{M}|$. %Finaly, note that for threshold $-1$, \textsf{Acacia+} reaches any bound $K$\ab{uniquement $K$, je crois. Manu: je suis pas sur de bien voir ce que tu veux dire. avec le K seulement tu ne consideres que le probleme quantitatif.} without finding any solution. Indeed, there is no finite-memory strategy that ensures that threshold.
%
%Before going further with this example, we briefly mention some statistics that
With $\nu = -1.001$ of Table~\ref{table:ex2results}, let us emphasize the interest of using antichains in our algorithms. The underlying state space manipulated
by our symbolic algorithm is huge: since $K = 999$, $C = 2999$  and the number of automata states is $8$, the number of states is around $10^{27}$. 
However the fixpoint computed backwardly is represented by an antichain of size $2004$ only.

\paragraph{No unsollicited grants --} The major drawback of the strategies presented in Table~\ref{table:ex2results} is that many unsollicited grants might be sent since the server grants the resource access to the clients in a round-robin fashion (with a longer access for client~$2$ than for client~$1$) without taking care of actual requests made by the clients. It is possible to express
in \LTL the fact that no unsollicited grants occur, but it is cumbersome. Alternatively, the \LTLMP specification can be easily rewritten with a multi-dimensional mean-payoff objective to avoid those unsollicited grants, as shown in Example~\ref{ex:LTLMMP}.
\begin{example}\label{ex:LTLMMP}
We consider the client-server system of Examples~\ref{ex:LTL} and \ref{ex:LTLMP} with the additional requirement that the server does not send unsollicited grants. This property can be naturally expressed by keeping the inital \LTL specification $\phi$ and proposing a multi-dimensional mean-payoff objective as follows. A new dimension is added by client, such that a request (resp. grant) signal of client $i$ has a reward (resp. cost) of $1$ on his new dimension. More precisely, let $\phi$ and $P$ as in Example~\ref{ex:LTLMP}, we define $w : \Lit(P) \rightarrow \integers^3$ as the weight function such that $w(r_1) = (0,1,0)$, $w(r_2) = (0,0,1)$, $w(g_1) = (0,-1,0)$, $w(g_2) = (0,0,-2)$, $w(w_1) = (-1,0,0)$, $w(w_2) = (-2,0,0)$ and $w(l) = (0,0,0)$, $\forall l  \in \Lit(P) \setminus \{r_1,r_2,g_1,g_2,w_1,w_2\}$.

%\begin{center}
%$w(l) =$\ab{inline?}
%$\begin{cases} (0,1,0) \text{ if } l = r_1\\ 
%                        (0,0,1) \text{ if } l = r_2\\ 
%                        (0,-1,0) \text{ if } l = g_1\\ 
%                        (0,0,-1) \text{ if } l = g_2\\ 
%                        (-1,0,0) \text{ if } l = w_1\\ 
%                        (-2,0,0) \text{ if } l = w_2\\ 
%		      (0,0,0) \text{ otherwise.}
%\end{cases}$
%\end{center}

For threshold $\nu = (-1,0,0)$, there is no hope to have a finite-memory strategy (see Example~\ref{ex:LTLMP}). For threshold $\nu = (-1.2,0,0)$ and values $K=4$, $C=(7,1,1)$, \textsf{Acacia+} outputs a finite-memory strategy computed by the backward algorithm, as depicted in Figure~\ref{fig:strategy}.
In this figure, the strategy is represented by a transition system where the red state is the initial state, and the transitions are labeled with symbols $o | i$ with $o \in \Sigma_O$ and $i \in \Sigma_I$. Notice that the labels of all outgoing transitions of a state share the same $o$ part (since we deal with a strategy). This transition system can be seen as a Moore machine $({\cal M}, m_0, \alpha_U, \alpha_N)$ with the same state space (the set $M$ of memory states), and such that for each transition from $m$ to $m'$ labeled by $o \cup i$, we have $\alpha_U(m,i) = m'$ and $\alpha_N(m) = o$. We can verify that no unsollicited grant is done if the server plays according to this strategy. Moreover, this is the smallest strategy to ensure a threshold of $(-1.2,0,0)$ against the most demanding behavior of the clients, i.e. when they both make requests all the time (see states $3$ to $7$), and that avoid unsollicited grants against any other behaviors of the clients (see states $0$ to $2$).

\begin{figure}[!p]
\begin{center}
 \includegraphics[width=0.60\textwidth]{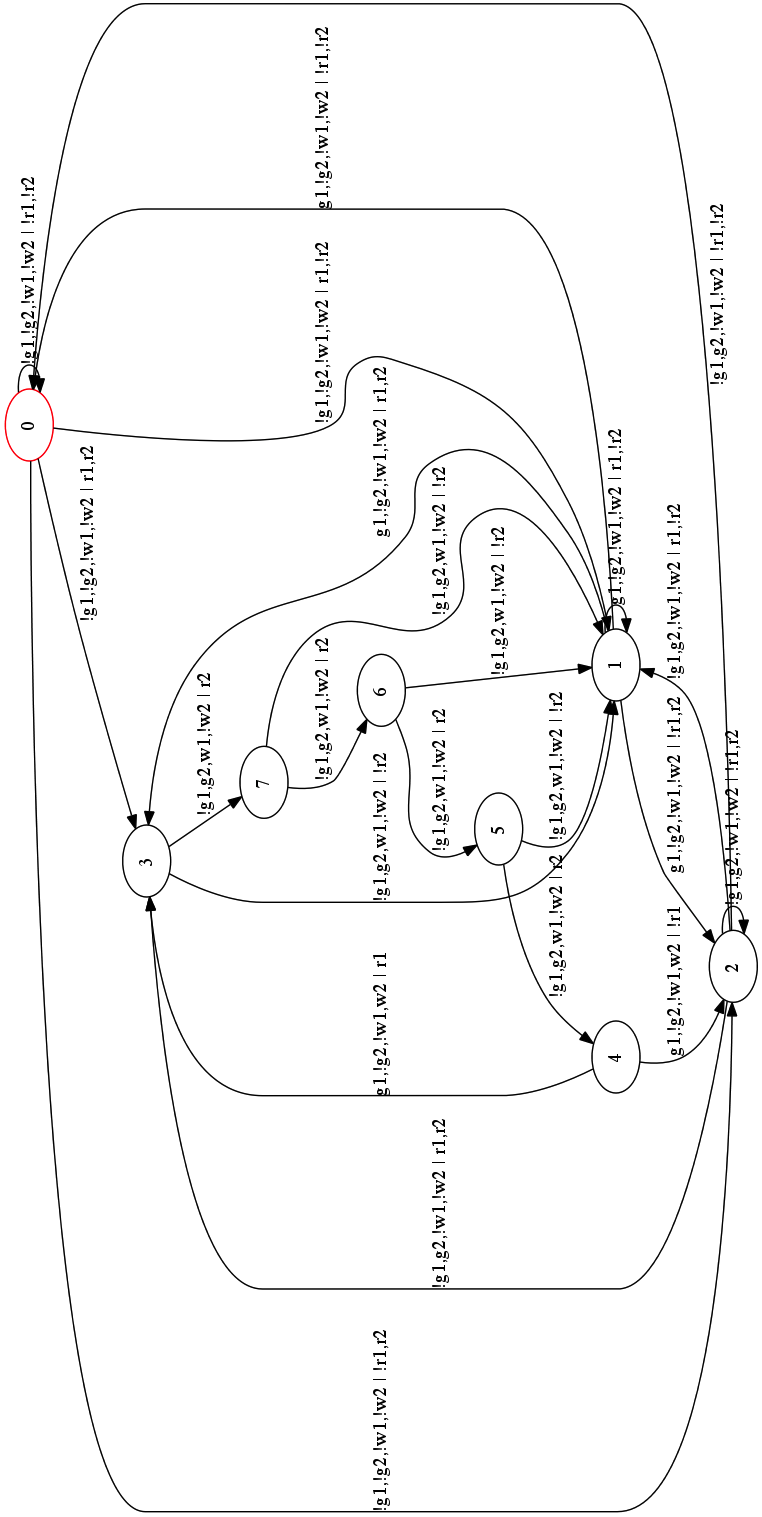}
	\caption{\label{fig:strategy} Strategy output by \textsf{Acacia+} for the specification of Example~\ref{ex:LTLMMP}, threshold $\nu = (-1.2,0,0)$ and values $K=4$, $C=(7,1,1)$, using the backward algorithm}
\end{center}
\end{figure}
\end{example}

From Example~\ref{ex:LTLMMP}, we derive a benchmark of multi-dimensional examples parameterized by the number of clients making requests to the server. Some experimental results of \textsf{Acacia+} on this benchmark are synthetized in Table~\ref{table:SRA}. 

\begin{table}[ht]
	\caption{\textsf{Acacia+} on the Shared Resource Arbiter benchmark parameterized by the number of clients, with the forward algorithm. The column $c$ gives the number of clients, $\nu$ the threshold, $K$ (resp. $C$) the minimum value (resp. vector) required to obtain a winning strategy, $|\mathcal{M}|$ the size of the Moore machine representing the strategy, $time$ the total execution time (in seconds) and $mem$ the total memory usage (in megabytes).}
	\label{table:SRA}
	\begin{center}
 		\begin{tabular}{|c|c||c|c|c|c|c|}
		\hline
	  	 $c$ & $\nu$ & $K$ & $C$ & $|\mathcal{M}|$ & \ time (s) \ & \ mem (MB) \  \\
	  	\hline 
	  	$2$ & $(-1.2,0,0)$ & $4$ & $(7,1,1)$ & $11$ & $0.02$  &  $10.04$\\
  		$3$ & $(-2.2,0,0,0)$ & $9$ & $(19,1,1,1)$ & $27$ & $0.22$ & $10.05$\\
  		$4$ & $(-3.2,0,0,0,0)$ & $14$ & $(12,1,1,1,1)$ & $65$ & $1.52$ & $12.18$\\
  		$5$ & $(-4.2,0,0,0,0,0)$ & $19$ & $(29,1,1,1,1,1)$ & $240$ & $48$ & $40.95$\\
  		\ $6$ \ & \ $(-5.2,0,0,0,0,0,0)$ \ & \ $24$ \ & \ $(17,1,1,1,1,1,1)$ \ & \ $1716$ \ & $3600$ &  $636$\\
		\hline
		\end{tabular}
	\end{center}
\vspace{-1cm}
\end{table}

\paragraph{Approching the Pareto curve --} As last experiment, we consider the 2-client \LTLMP specification of Example~\ref{ex:LTLMMP} where we split the first dimension of the weight function into two dimensions, such that $w(w_1) = (-1,0,0,0)$ and $w(w_2) = (0,-2,0,0)$. With this new specification, since we have several dimensions, there might be several optimal values for the pairwise order, corresponding to trade-offs between the two objectives that are $(i$) to quickly grant client~$1$ and $(ii)$ to quickly grant client~$2$. In this experiment, we are interested in approaching, by hand, the \emph{Pareto curve}, which consists of all those optimal values, i.e. to find finite-memory strategies that are incomparable w.r.t. the ensured thresholds, these thresholds being as large as possible. We give some such thresholds in Table~\ref{table:pareto}, along with minimum $K$ and $C$ and strategies size. It is difficult to automatize the construction of the Pareto curve. Indeed, \textsf{Acacia+} cannot test (in reasonable time) whether a formula is \MP-unrealizable for a given threshold, since it has to reach the huge theoretical bound on $K$ and $C$.  This raises two interesting questions that we let as future work: how to decide efficiently that
a formula is \MP-unrealizable for a given threshold, and how to compute points of the Pareto curve efficiently.

\begin{table}[ht]
	\caption{\textsf{Acacia+} to approach Pareto values. The column $\nu$ gives the threshold, relatively close to the Pareto curve, $K$ (resp. $C$) the minimum value (resp. vector) required to obtain a winning strategy, $|\mathcal{M}|$ the size of the Moore machine representing the strategy. }
	\label{table:pareto}
 	\begin{center}
 		\begin{tabular}{|c||c|c|c|}
		\hline
	  	 $\nu$ & $K$ & $C$ & $|\mathcal{M}|$ \\
	  	\hline 
	  	 $(-0.001, -2, 0, 0)$ & \ $999$ \ & \ $(1999,1,1,1)$ \  & \ $2001$ \ \\
  		 $(-0.15, -1.7, 0, 0)$ & $55$ & $(41,55,1,1)$ & $42$ \\
		$(-0.25, -1.5, 0, 0)$ & $3$ & $(7,9,1,1)$ & $9$ \\
		$(-0.5, -1, 0, 0)$ & $1$ & $(3,3,1,1)$ & $5$ \\
  		$(-0.75, -0.5, 0, 0)$ & $3$ & $(9,7,1,1)$ & $9$ \\
		\ $(-0.85, -0.3, 0, 0)$ \ & $42$ & $(55,41,1,1)$ & $9$ \\
  		 $(-1, -0.01, 0, 0)$ & $199$ & $(1,399,1,1)$ & $401$ \\
  		\hline
		\end{tabular}
	\end{center}
\end{table} 

%\begin{table}[ht]
%	\caption{Pareto values}
%	\label{table:pareto}
% 	\begin{center}
% 		\begin{tabular}{|c|c|c|c|c|}
%		\hline
%	  	 & $\nu$ & $K$ & $\mathbb{C}$ & $|\mathcal{M}|$\\
%	  	\hline 
%	  	1 & (-0.01, -3) & 99 & 300 & 100\\
%  		2 & (-0.2, -2.4) & 4 & 36 & 5\\
%  		3 & (-0.34, -2) & 2 & 300 & 3\\
%  		4 & (-0.5, -1.5) & 1 & 9 & 2\\
%  		5 & (-0.7, -0.9) & 3 & 63 & 10\\
%  		6 & (-0.75, -0.75) & 3 & 21 & 4\\
%  		7 & (-0.9, -0.3) & 9 & 57 & 10\\
%  		8 & (-1, -0.1) & 29 & 59 & 30\\
%  		\hline
%		\end{tabular}
%	\end{center}
%\end{table} 

% \input{Conclusion}

\bibliographystyle{abbrv}
\bibliography{biblio}

\end{document}